\documentclass[12pt]{article}
\usepackage[left=1in,right=1in]{geometry}
\usepackage{amsmath,amsfonts,amssymb,mathrsfs,array,stmaryrd,indentfirst,amsthm,comment,color,mathtools,bbm,commath}
\SetSymbolFont{stmry}{bold}{U}{stmry}{m}{n}
\usepackage{authblk}
\usepackage{graphicx,hyperref,enumitem,tabularx}
\definecolor{linknavy}{RGB}{20,40,120}
\definecolor{citegreen}{RGB}{20,110,60}
\hypersetup{
  colorlinks=true,
  linkcolor=linknavy,
  citecolor=citegreen,
  urlcolor=linknavy,
}
\usepackage{verbatim}
\usepackage{upgreek}
\usepackage{float}

\usepackage{natbib}

\newtheorem{theorem}{Theorem}[section]
\newtheorem{lemma}[theorem]{Lemma}
\newtheorem{proposition}[theorem]{Proposition}
\newtheorem{corollary}[theorem]{Corollary}
\newtheorem{remark}[theorem]{Remark}

\theoremstyle{definition}
\newtheorem{definition}[theorem]{Definition}

\numberwithin{equation}{section}

\newcommand{\cA}{\mathcal{A}}

\newcommand{\cH}{\mathcal{H}}

\newcommand{\cM}{\mathcal{M}}
\newcommand{\cN}{\mathcal{N}}
\newcommand{\cO}{\mathcal{O}}

\newcommand{\cS}{\mathcal{S}}

\def\cN{\mathcal{N}}

\newcommand{\vt}{{\vartheta}}
\newcommand{\Om}{{\Omega}}

\def \e{\varepsilon}

\def \1{\mathbf 1}

\def \vt {{\bf \tilde v}}
\def \bfSigmat {\boldsymbol {\tilde{\Sigma}}}

\def \bA {{\bf A}}
\def \bB {{\bf B}}
\def \bC {{\bf C}}

\def \bG {{\bf G}}
\def \bI {{\bf I}}
\def \bL {{\bf L}}
\def \bM {{\bf M}}
\def \bN {{\bf N}}
\def \bO {{\bf O}}
\def \bP {{\boldsymbol P}}
\def \bQ {{\bf Q}}
\def \bR {{\bf R}}

\def \bU {{\bf U}}
\def \bV {{\bf V}}
\def \bW {{\boldsymbol W}}
\def \bX {{\boldsymbol X}}
\def \bY {\boldsymbol Y}
\def \bZ {{\boldsymbol Z}}

\def \be {{\bf e}}
\def \bm {{\bf m}}
\def \bp {{\bf p}}

\def \bu {{\bf u}}
\def \bv {{\bf v}}

\def \bx {{\bf x}}
\def \by {{\bf y}}
\def \bz {{\bf z}}

\def \bfH {{\boldsymbol H}}
\newcommand{\bfP}{\mathbf{P}}

\def \bfgamma {{\boldsymbol \gamma}}
\def \bfalpha {{\boldsymbol \alpha}}

\def \bfSigma {\boldsymbol \Sigma}
\def \bfeta {\boldsymbol \upeta}
\def \bfsigma {\boldsymbol \upsigma}
\def \bfxi {\boldsymbol \xi}

\def \bfLambda {{\boldsymbol \Lambda}}

\def \sF {\mathscr{F}}

\def\dE{\mathbb{E}}

\def\dP{\mathbb{P}}
\def\dQ{\mathbb{Q}}
\def\dR{\mathbb{R}}

\newcommand{\E}[1]{\mathbb{E}\!\sbr{#1}}
\newcommand{\Epi}[1]{\mathbb{E}^{\pi}\!\sbr{#1}}
\newcommand{\var}[1]{\mathrm{Var}\!\del{#1}}
\newcommand{\cov}[2]{\mathrm{Cov}\!\del{#1,#2}}
\newcommand{\tr}[1]{{\rm{tr}}\!\del{#1}}

\begin{document}
\title{Multidimensional stochastic liquidity \\ in Kyle's model of informed trading}
\author[1]{Ibrahim Ekren\thanks{Partially supported by NSF grant DMS-2406240}}
\author[1]{Evangelos A.\ Nikitopoulos}
\author[2]{Lu Vy}
\affil[1]{Department of Mathematics, University of Michigan\protect\\
\noindent 530 Church Street, Ann Arbor, MI 48109-1043 (USA)\vspace{2mm}}
\affil[2]{Department of Operations Research and\protect\\
Financial Engineering, Princeton University\protect\\
\noindent 98 Charlton Street, Princeton, NJ 08540 (USA)\protect\\
{Email: \tt \href{mailto:lv4809@princeton.edu}{lv4809@princeton.edu}}}
\date{\vspace{-5ex}}

\maketitle

\begin{abstract}
We develop a variational formulation of Kyle's model of informed trading that accommodates stochastic liquidity and multiple traded assets. The main equilibrium result is stated first: under a martingale dual condition, a matrix-valued martingale depth process generates a linear-Gaussian equilibrium with stochastic matrix-valued price impact. We derive this martingale from a primal--dual problem, inspired by causal optimal transport, that characterizes the endogenous speed at which the insider injects private information into prices; in general, this problem admits only local martingale optimizers, and the martingale dual condition is the hypothesis that the optimizer is a true martingale. We interpret informed trading as the optimal liquidation of private information and verify the construction in the scalar and common-eigenbasis cases. The fully general matrix-valued case reduces to a coupled matrix FBSDE, which we isolate as the remaining obstruction. Along the way, we establish an independently interesting Doob--Meyer decomposition for general (not necessarily symmetric) matrix-valued submartingales.

$\,$

\noindent\textbf{Keywords:} Kyle model, stochastic liquidity, insider trading, market depth, causal optimal transport, matrix-valued martingales

$\,$

\noindent\textbf{MSC (2020):} 91G15, 91G80, 60G44, 60H30, 49N90
\end{abstract}

\setcounter{tocdepth}{2}
{\setlength{\parskip}{0pt}\tableofcontents}

\section{Introduction}
Kyle--Back models study how private information is incorporated into prices when an informed trader
strategically trades against competitive market makers. In Kyle's original Gaussian model \cite{kyle},
market makers observe only aggregate order flow, prices respond linearly to order flow, market depth
is constant, and all private information is revealed by the terminal date. Back \cite{back1992}
extends this continuous-time equilibrium beyond Gaussian terminal values. Collin-Dufresne and Fos
\cite{cdf} introduce stochastic liquidity in the noise-trading volatility, showing that insiders time
their trades to periods of high noise volume and low price impact. A central feature of their
analysis is that the rate at which the insider releases information into prices cannot be specified
exogenously: it is determined endogenously in equilibrium, as the insider trades off the gains from
exploiting favorable liquidity states against the information revealed by doing so. Ekren, Mostowski, and \v{Z}itkovi\'c
\cite{emz} then combine stochastic liquidity with non-Gaussian terminal values through a nonlinear
state-variable construction.

This paper develops a multidimensional Gaussian theory with stochastic liquidity, organized around a
variational problem that makes this endogenous speed of information injection its central object.
Inspired by a causal optimal transport--type problem (as explained in appendix~\ref{tBdCdebe09}), the variational problem optimizes directly over
the rate at which the market maker's posterior covariance is driven down---equivalently, the rate at
which the insider injects private information into prices. Its solution \emph{is} the equilibrium
information-release speed, recovering as a derived quantity what \cite{cdf} identified as necessarily
endogenous, now in a multi-asset setting and through a transport-theoretic lens.

The dual of this variational problem produces a matrix-valued process $\bM^*$, interpreted as market
depth, whose inverse $\bfLambda^*=(\bM^*)^{-1}$ is the equilibrium price-impact matrix. A structural
subtlety drives the entire analysis: the optimizer of the variational problem
$\bM^*$ is a $\sF^{\bW}$-local martingale but \emph{not}, in general, a true martingale, and the
compactness argument establishing the existence of an optimizer delivers only a local martingale. Our
main result is therefore conditional: \emph{if} this dual optimizer is a true
martingale---a hypothesis we isolate as the martingale dual condition (MDC)---then it generates an
equilibrium in the multidimensional Kyle model with stochastic liquidity. The difficulty is intrinsic to multidimensionality: the matrices $\bC$, $\bfsigma_t$, and $\bM_t^*$
do not commute, so they cannot be simultaneously diagonalized, and the scalar arguments have no
analog. In one dimension MDC is established in \cite{emz} through BSDE arguments and a comparison
principle that signs the relevant scalar process; in the matrix setting, neither the BSDE reduction
nor the comparison principle has a counterpart. For this reason, we isolate MDC as a hypothesis.

The paper is organized around the equilibrium rather than around the derivation. Section~\ref{sec:main-result} states the main theorem first. Under MDC, the pricing rule
\[
\bP_t^*=\bp_0+\int_0^t(\bM_s^*)^{-1}\,\dif\bY_s^*
\]
and the corresponding absolutely continuous trading strategy form an equilibrium. The theorem also records the filtering covariance, the Gaussian bridge property, full revelation at maturity, and inconspicuousness. The proof is given in appendix~\ref{app:equilibrium-proofs}; there, the Kalman--Bucy filtering identities, the Gaussian bridge property, and the wealth decomposition are assembled to verify both rationality and optimality.

Section~\ref{sec:variational-construction} derives $\bM^*$ from the primal--dual variational problem over decreasing covariance paths. The primal problem treats the market maker's posterior covariance as the state variable, and its control is the speed of information release; informed trading is therefore an optimal liquidation problem in which the inventory being liquidated is private information. The dual problem identifies the martingale depth process as the shadow price of this information inventory, and it is here that the local- versus true-martingale distinction enters: the dual optimizer exists as a local martingale unconditionally, and MDC is precisely the statement that it is a true, interior martingale. The duality and sensitivity arguments are proved in appendix~\ref{app:dual-proofs}. The compactness step in the dual-existence proof requires a Doob--Meyer decomposition theorem for matrix-valued submartingales, which we establish for general, not necessarily symmetric, matrix-valued local submartingales in appendix~\ref{app:matrix-doob-meyer}. Since the theory of matrix-valued martingales seems relatively underdeveloped \cite{wangramdas}, the decomposition theorem is of independent interest.

Section~\ref{x854swjFhd} verifies the construction in the benchmark cases. The scalar setting recovers Kyle
\cite{kyle}, Back \cite{back1992}, Back--Pedersen \cite{bp}, Collin-Dufresne--Fos \cite{cdf}, and Ekren--Mostowski--\v{Z}itkovi\'c \cite{emz}. The multidimensional common-eigenbasis case recovers the constant-volatility multi-asset model of Back--Cocquemas--Ekren--Lioui \cite{cel2020} and gives a tractable class where MDC can be checked coordinatewise. Without a common time-independent eigenbasis, the verification of MDC reduces to the analysis of a fully coupled matrix FBSDE, which we leave open. The computations supporting the constant-volatility
and common-eigenbasis examples are collected in appendix~\ref{app:verification-proofs}.

\subsection{Notations}
We denote by $\mathcal{S}^n_{+}$ the set of $n\times n$ (symmetric) positive semi-definite matrices and by 
$\mathcal{S}^n_{++}\subseteq \mathcal{S}^n_{+}$ the subset of positive-definite matrices. We likewise write 
$\mathcal{S}^n_{-}$ and $\mathcal{S}^n_{--}$ for their negative semi-definite and negative definite counterparts.

Matrix inequalities are understood in the sense of the operator order. For $n \times n$ matrices $\bA$ and $\bB$, we write $\bA\leq\bB$ if $\bB-\bA\in\mathcal{S}^n_{+}$ and $\bA<\bB$ if $\bB-\bA\in\mathcal{S}^n_{++}$.

Boldface lowercase letters such as $\bx$ and $\by$ denote vectors, while boldface uppercase letters such as $\bM$ and $\bfLambda$ denote matrices. An exception is made for random vectors, which are written as italicized, boldface capital letters such as $\bX$ and $\bY$.

\section{The model}

In this section, we set up our model of interest.
We begin with an informal description and then move on to give precise definitions of all the objects appearing therein.

\subsection{Informal description}

A public announcement at time $T>0$ will reveal the post-announcement values of $n$ risky assets,
\[
\vt=(\tilde v_1,\dots,\tilde v_n)\in\dR^n.
\]
The law of $\vt$ is common knowledge, but its realized value is known from time $0$ only to a single informed trader.
At time $0$, the assets trade at some initial price $\bP_0\in\dR^n$.

Over $[0,T]$, the informed trader submits orders $\dif\bX_t\in\dR^n$ and accumulates a position $\bX_t\in\dR^n$.
Her trades are mixed with noise (liquidity) trades
\[
\dif\bZ_t=\bfsigma_t\dif\bB_t,
\]
where $\bB$ is an $n$-dimensional Brownian motion and the (possibly stochastic) volatility matrix
$\bfsigma_t\in\cS^n_{++}$ is driven by a second Brownian motion $\bW$, independent of $\bB$.
A competitive market maker observes the aggregate order flow
\[
\bY_t=\bX_t+\bZ_t
\]
together with the public Brownian motion $\bW$ driving the noise volatility.
Using the information generated by $(\bY,\bW)$, he sets prices by the rational-expectations rule
\[
\bP_t=\E{\vt \,\middle|\, \sF_t^{\bY,\bW}},\qquad 0 \leq t \leq T.
\]

\subsection{Probabilistic set-up and definitions}

Let $(\Omega,\sF,(\sF_t)_{0 \leq t \leq T},\dP)$ be a filtered probability space satisfying the usual conditions.
Let $\bB=(\bB_t)_{0\le t\le T}$ and $\bW=(\bW_t)_{0\le t\le T}$ be independent $\dR^n$-valued Brownian motions on this space.
Let $\bfsigma=(\bfsigma_t)_{0 \leq t \leq T}$ be an $\cS_{++}^n$-valued, $(\sF_t^{\bW})_{0 \leq t \leq T}$-adapted process satisfying
\[
\int_0^T \E{\|\bfsigma_t\|^2}\dif t < \infty.
\]
We call $\bfsigma$ the \emph{noise volatility matrix}, and define the \emph{noise demand} by
\[
\bZ_t:=\int_0^t \bfsigma_s\dif\bB_s, \qquad 0 \le t \le T.
\]

Let $\vt=(\tilde v_1,\dots,\tilde v_n)$ be an $\sF_T$-measurable random vector with
\[
\vt\sim\cN(\bp_0,\bC),
\qquad \bC\in\cS^n_{++}.
\]
We assume that $\vt$ is independent of $\sF_T^{\bB,\bW}$.
The random vector $\vt$ is the \emph{terminal fundamental value}.

We write
\[
\sF_t^I:=\sF_t^{\bB,\bW}\vee\sigma(\vt), \qquad 0 \le t \le T,
\]
for the insider's filtration. All filtrations are understood to be augmented with $\dP$-null sets. All semimartingales appearing in the model are assumed continuous unless explicitly stated otherwise.

\begin{definition}[Trading strategy]
A \emph{trading strategy} is an $\dR^n$-valued $\sF^I$-semimartingale $\bX=(\bX_t)_{0\le t\le T}$ with $\bX_0=\mathbf 0$. Given a trading strategy $\bX$, the associated \emph{aggregate order flow} is
\[
\bY^{\bX}:=\bX+\bZ.
\]
The market maker observes $\bY^{\bX}$ and $\bW$. Accordingly, his filtration is
\[
\sF_t^{M,\bX}:=\sF_t^{\bY^{\bX},\bW}
=\sigma\big((\bY_s^{\bX},\bW_s)_{0\le s\le t}\big),
\qquad 0 \le t \le T.
\]
\end{definition}

\begin{definition}[Pricing rule]
\label{def:pricingrule}
A \emph{pricing rule} is a nonanticipative map
\[
\bfP:\bY\to \bfP(\bY),
\]
which assigns to each aggregate order flow $\bY$ an $\dR^n$-valued $\sF^{\bY,\bW}$-semimartingale $\bfP(\bY)$.

In this paper, we restrict attention to pricing rules of the form
\[
\bfP(\bY)_t=\bfH\bigl(t,\bfxi_t^{\bY}\bigr),
\qquad
\bfxi_t^{\bY}:=\int_0^t \bfLambda_s\dif\bY_s,
\qquad 0 \le t \le T,
\]
where $(\bfH,\bfLambda)$ satisfies the following:
\begin{enumerate}[label=(\roman*),font=\normalfont]
\item $\bfH:[0,T]\times\dR^n\to\dR^n$ is of class $C^{1,2}$, and for each $t\in[0,T]$,
the map $\xi\mapsto \bfH(t,\xi)$ is the gradient of a convex function
$\varphi(t,\cdot)$;
\item $\bfLambda$ is an $\cS^n_{++}$-valued $\sF^{\bW}$-semimartingale;
\item if
\[
\bfxi_t^{\,0}:=\int_0^t \bfLambda_s\dif\bZ_s, \qquad 0 \le t \le T,
\]
then
\[
\E{\|\bfH(T,\bfxi_T^{\,0})\|^2+\int_0^T \|\bfH(t,\bfxi_t^{\,0})\|^2\dif t}<\infty.
\]
\end{enumerate}
\end{definition}

Condition (i) is a multi-asset analog of monotonicity: higher demand should not be bad news. Condition (iii) is an integrability condition on the price process generated by pure noise demand. Doubling-type pathologies are ruled out below through the admissibility condition on wealth.

\begin{remark}
It is important to distinguish between the pricing rule $\bfP$, which is a functional of the order flow,
and the realized price process induced by a particular strategy $\bX$, namely
\[
\bP^{\bX}:=\bfP(\bY^{\bX}).
\]
Likewise, for a given strategy $\bX$, we write
\[
\bfxi_t^{\bX}:=\bfxi_t^{\bY^{\bX}}
=\int_0^t \bfLambda_s\dif\bY_s^{\bX},
\qquad
\bP_t^{\bX}=\bfH\bigl(t,\bfxi_t^{\bX}\bigr).
\]
\end{remark}

For two continuous $\dR^n$-valued semimartingales $\bX$ and $\bP$, we write
\[
\langle \bX,\bP \rangle_t:=\sum_{i=1}^n \langle X^i,P^i\rangle_t,
\qquad 0\le t\le T,
\]
for their scalar quadratic covariation.

\begin{definition}[Wealth]
Let $\bX$ be a trading strategy and let $\bP$ be an $\dR^n$-valued semimartingale.
The \emph{insider's wealth process} associated with $(\bX,\bP)$ is
\[
\Pi_t(\bX,\bP)
:=
\int_0^t (\vt-\bP_s)^{\top}\dif\bX_s
-
\langle \bX,\bP \rangle_t,
\qquad 0\le t\le T.
\]
Equivalently, by integration by parts and $\bX_0=\mathbf 0$,
\[
\Pi_t(\bX,\bP)
=
(\vt-\bP_t)^{\top}\bX_t
+
\int_0^t \bX_{s}^{\top}\dif\bP_s.
\]
\end{definition}

\begin{definition}[Admissible trading strategy]
\label{def:admissible}
Fix a pricing rule $\bfP$.
A trading strategy $\bX$ is said to be \emph{admissible for $\bfP$} if the realized price process
\[
\bP^{\bX}:=\bfP(\bY^{\bX})=\bfP(\bX+\bZ)
\]
is well-defined and the wealth process
\[
\Pi_t(\bX,\bP^{\bX}),\qquad 0\le t\le T,
\]
is uniformly bounded from below by an integrable random variable. That is, there exists an integrable random variable \(K^\bX\) such that
\[
\Pi_t(\bX,\bP^{\bX})\ge -K^\bX,
\qquad 0\le t\le T.
\]
\end{definition}

The role of the state variable $\bfxi^{\bX}$ is to compress the history
$(\bY_s^{\bX})_{0\le s\le t}$ into a tractable, typically Markovian, state variable.
The process $\bfLambda$ plays the role of Kyle's price impact, and $\bfLambda^{-1}$ corresponds to market depth.
In equilibrium, one obtains the linear-innovation form
\[
\bP_t^{\bX}=\bp_0+\int_0^t \bfLambda_s\dif\bY_s^{\bX}.
\]

\begin{definition}[Optimal strategy]
\label{def:optimal}
Fix a pricing rule $\bfP$.
A $\bfP$-admissible trading strategy $\bX^*$ is \emph{optimal for $\bfP$} if for every
$\bfP$-admissible trading strategy $\bX$ and every $\bv\in\operatorname{supp}(\vt)$,
\[
\E{\Pi_T\bigl(\bX^*,\bP^{\bX^*}\bigr)\,\middle| \vt=\bv}
\ge
\E{\Pi_T\bigl(\bX,\bP^{\bX}\bigr)\,\middle| \vt=\bv},
\]
where
\[
\bP^{\bX^*}:=\bfP(\bX^*+\bZ),
\qquad
\bP^{\bX}:=\bfP(\bX+\bZ).
\]
\end{definition}

\begin{definition}[Rational pricing rule]
\label{def:rational}
Let $\bX$ be a trading strategy.
A pricing rule $\bfP$ is \emph{rational for $\bX$} if
\[
\bP_t^{\bX}
=
\E{\vt\,\middle| \sF_t^{M,\bX}},
\qquad 0 \le t \le T,
\]
that is,
\[
\bfP(\bX+\bZ)_t
=
\E{\vt\,\middle| \sF_t^{\bX+\bZ,\bW}}.
\]
Equivalently, when $\bfP$ is generated by $(\bfH,\bfLambda)$,
\[
\bfH(t,\bfxi_t^{\bX})
=
\E{\vt\middle| \sF_t^{M,\bX}},
\qquad 0 \le t \le T.
\]
\end{definition}

\begin{definition}[Equilibrium]
\label{def:equilibrium}
A pair $(\bfP,\bX^*)$ consisting of a pricing rule $\bfP$ and a trading strategy $\bX^*$ is an \emph{equilibrium} if:
\begin{enumerate}[label=(\roman*),font=\normalfont]
\item (\emph{Optimality}) $\bX^*$ is admissible for $\bfP$, and for every $\bfP$-admissible trading strategy $\bX$ and every $\bv\in\operatorname{supp}(\vt)$,
\[
\E{\Pi_T\bigl(\bX,\bfP(\bX+\bZ)\bigr)\,\middle|\,\vt=\bv}
\le
\E{\Pi_T\bigl(\bX^*,\bfP(\bX^*+\bZ)\bigr)\,\middle|\,\vt=\bv}.
\]
\item (\emph{Rationality}) The realized price process $\bP^{\bX^*}:=\bfP(\bX^*+\bZ)$ satisfies
\[
\bP_t^{\bX^*}=\E{\vt\,\middle|\,\sF_t^{\bX^*+\bZ,\,\bW}},\qquad 0\le t\le T.
\]
\end{enumerate}
When $\bfP$ is generated by a pair $(\bfH,\bfLambda)$ as in Definition~\ref{def:pricingrule}, we also say that the triple $(\bX^*,\bfH,\bfLambda)$ is an \emph{equilibrium}.
\end{definition}

\begin{remark}
The pricing rule $\bfP$ is a single functional, fixed throughout. The optimization in (i) varies the strategy $\bX$ while keeping $\bfP$ fixed: different strategies produce different realized order flows $\bX+\bZ$ and hence different realized price processes $\bfP(\bX+\bZ)$, all generated by the same functional. The rationality condition (ii) is imposed only along the equilibrium realized path $\bX^*+\bZ$, not off-equilibrium.
\end{remark}

\subsection{Summary of observability}

The key informational asymmetry is that the insider observes the terminal value $\vt$
and chooses her trading strategy accordingly, whereas the market maker does not observe
the decomposition
\[
\bY_t=\bX_t+\bZ_t
\]
into informed and noise demand. He observes only the aggregate order flow $\bY$
together with the public signal $\bW$.
Thus, $\bX$ and $\bZ$ are hidden components, while $\bY$, the induced state variable,
and the induced price process are observable.

In our formulation, the market maker precommits to a pricing rule, namely a functional
of the observable order flow. In the class studied here, this pricing rule is generated
by a pair $(\bfH,\bfLambda)$ through
\[
\bfxi_t^{\bY}=\int_0^t \bfLambda_s\,\dif \bY_s,
\qquad
\bfP(\bY)_t=\bfH(t,\bfxi_t^{\bY}).
\]
For the realized aggregate order flow $\bY^{\bX}=\bX+\bZ$, the resulting price process is
\[
\bP_t^{\bX}=\bfP(\bY^{\bX})_t=\bfH(t,\bfxi_t^{\bX}).
\]

\begin{table}[H]
\centering
\renewcommand{\arraystretch}{1.15}
\setlength{\tabcolsep}{5pt}
\begin{tabular}{c|c|c|c|c}
Symbol & Meaning & Relation(s) & Insider & Market maker \\
\hline
$\vt$ & terminal fundamental value & $\vt\sim\cN(\bp_0,\bC)$ & yes & no \\
$F$ & law of $\vt$ & common prior & yes & yes \\
$\bB_t$ & noise Brownian motion & $\bZ_t=\int_0^t \bfsigma_s\,\dif\bB_s$ & yes & no \\
$\bW_t$ & public Brownian signal & public factor & yes & yes \\
$\bfsigma_t$ & noise volatility matrix & drives $\bZ$ & yes & yes \\
$\bX_t$ & insider's cumulative demand & strategy chosen using $\vt$ & yes & no \\
$\bZ_t$ & noise demand & $\dif\bZ_t=\bfsigma_t\,\dif\bB_t$ & yes & no \\
$\bY_t$ & aggregate order flow & $\bY_t=\bX_t+\bZ_t$ & yes & yes \\
$\bfxi_t^{\bY}$ & state variable & $\bfxi_t^{\bY}=\int_0^t \bfLambda_s\,\dif\bY_s$ & yes & yes \\
$\bfP$ & pricing rule & $\bfP(\bY)_t=\bfH(t,\bfxi_t^{\bY})$ & yes & yes \\
$\bP_t^{\bX}$ & realized price process & $\bP_t^{\bX}=\bfP(\bX+\bZ)_t$ & yes & yes
\end{tabular}
\end{table}

Equivalently, if one works directly with the pair $(\bfH,\bfLambda)$ rather than the
functional $\bfP$, then the market maker observes $\bY$ and forms prices through the
observable state variable $\bfxi^{\bY}$. What remains hidden from him is the split
between informed trading $\bX$ and noise trading $\bZ$.

\section{Main result}
\label{sec:main-result}

The equilibrium is governed by a matrix-valued martingale which plays the role of market depth. We state the result before deriving this martingale from the variational problem.

Let
\begin{equation}
\label{2ikD666Yhq}
\cM
=
\left\{
\bM_{[0,T]}:
\begin{array}{l}
\bM_t=\E{\bM_T\mid\sF_t^{\bW}}\ \text{for some }
\sF_T^{\bW}\text{-measurable}\\
\bM_T\in L^1(\Omega;\cS^n_{++}), \; \bM_t\in\cS^n_{++}\ \text{for }0\le t\le T
\end{array}
\right\}.
\end{equation}

\begin{definition}[Martingale dual condition]
\label{def:MDC}
We say that the data $(\bC,(\bfsigma_t)_{0\le t\le T})$ satisfy the
\emph{martingale dual condition} MDC if the minimization problem
\begin{equation}
\label{RUHFk3mIp0}
\inf_{\bM\in\cM}
\left\{
\frac12\tr{\bC\bM_0}
+
\frac12\E{\int_0^T \tr{\bfsigma_t^2\bM_t^{-1}}\,\dif t}
\right\}
\end{equation}
admits an optimizer $\bM^*\in\cM$ which is an interior minimizer in the
following sense: for every bounded symmetric $\sF^{\bW}$-martingale
$\bN=(\bN_t)_{0\le t\le T}$, there exists $\varepsilon_0>0$ such that
\[
    \bM_t^*+\varepsilon \bN_t\in\cS^n_{++},
    \qquad 0\le t\le T,
\]
for all $|\varepsilon|<\varepsilon_0$.
\end{definition}
\begin{remark}
Note that the problem \eqref{RUHFk3mIp0} is convex on the convex set $\cM$, so an optimizing sequence always exists, and (by the compactness argument of Theorem~\ref{hodau0V7QJ}) converges to a limiting supermartingale. We are, however, unable to show that this limit is a true martingale in general. We show below that many models in the literature do satisfy the martingale dual~condition.
\end{remark}
\begin{theorem}[Equilibrium under the martingale dual condition]
\label{I3YrvmnwLc}
Suppose that the data $(\bC,(\bfsigma_t)_{0\le t\le T})$ satisfy MDC, and let $(\bM_t^*)_{0\le t\le T}$ be the resulting martingale optimizer. Define
\begin{align}
\label{CrLPP8jKQG}
\bfLambda_t^* &= (\bM_t^*)^{-1}, \\
\label{2vGs8ogSts}
\bfH^*(t,\bfxi)&=\bp_0+\bfxi,
\end{align}
and, for the realized aggregate order flow,
\begin{equation}
\label{eq:xi-star-main}
\bfxi_t^*:=\int_0^t\bfLambda_s^*\,\dif\bY_s^*.
\end{equation}
Define the covariance path
\begin{equation}
\label{eq:Sigma-star-def}
\bfSigma_t^*
:=
\bC-\int_0^t(\bM_s^*)^{-1}\bfsigma_s^2(\bM_s^*)^{-1}\,\dif s,
\qquad 0\le t\le T,
\end{equation}
which is well-defined with $\bfSigma_0^*=\bC$. Under MDC, Proposition~\ref{prop:MDC-terminal-covariance} shows that $\bfSigma_t^*$ is positive definite for $t<T$---so that the integrand in the strategy below is well-defined---and admits the equivalent backward representation
\begin{equation}
\label{eq:Sigma-star-backward}
\bfSigma_t^*=\int_t^T(\bM_s^*)^{-1}\bfsigma_s^2(\bM_s^*)^{-1}\,\dif s,
\end{equation}
so that in particular $\bfSigma_T^*={\bf 0}$.
Then the trading strategy
\begin{equation}
\label{ud3k6AlV7q}
\bX_t^*
=
\int_0^t
\bfsigma_s^2(\bM_s^*)^{-1}(\bfSigma_s^*)^{-1}
\del{\vt-\bp_0-\bfxi_s^*}\,\dif s
\end{equation}
together with the pricing rule $\bfP^*(\bY)_t=\bfH^*(t,\int_0^t\bfLambda_s^*\,\dif\bY_s)$ is an equilibrium.

Moreover, along the equilibrium path:
\begin{enumerate}[label=(\roman*),font=\normalfont]
\item the price process is
\begin{equation}
\bP_t^*=\bp_0+\bfxi_t^*
=\bp_0+\int_0^t(\bM_s^*)^{-1}\,\dif\bY_s^*
=\E{\vt\,\middle|\sF_t^M};
\end{equation}
\item the market maker's posterior covariance is
\begin{equation}
{\rm Var}\!\del{\vt\,\middle|\sF_t^M}=\bfSigma_t^*;
\end{equation}
\item conditionally on $\sF_t^M$,
\begin{equation}
\label{eq:conditional-gaussian-main}
\bfxi_T^*\sim \cN(\bfxi_t^*,\bfSigma_t^*),
\qquad
\bP_T^*=\vt\quad \text{a.s.};
\end{equation}
\item the strategy is inconspicuous:
\begin{equation}
\label{eq:inconspicuous-main}
\E{\frac{\dif\bX_t^*}{\dif t}\middle|\sF_t^M}=0
\qquad\text{for a.e. }t.
\end{equation}
\end{enumerate}
\end{theorem}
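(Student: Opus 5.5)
The proof splits into a rationality part, a bridge/revelation part, and an optimality part. The key identity comes from the first-order condition of MDC, which is available because the optimizer is interior.

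\emph{Step 1: first-order condition.} Perturb along $\bM^*+\varepsilon\bN$ for bounded symmetric $\sF^{\bW}$-martingales $\bN$. Differentiating \eqref{RUHFk3mIp0} at $\varepsilon=0$ gives
\[
\tfrac12\tr{\bC\bN_0}-\tfrac12\E{\int_0^T\tr{(\bM_t^*)^{-1}\bfsigma_t^2(\bM_t^*)^{-1}\bN_t}\dif t}=0 .
\]
Since $\bN_t=\E{\bN_T\mid\sF^{\bW}_t}$, this reads $\E{\tr{(\bC-\int_0^T(\bM^*_t)^{-1}\bfsigma_t^2(\bM_t^*)^{-1}\dif t)\bN_T}}=0$ for all bounded symmetric $\bN_T$. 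Hence $\bfSigma_T^*=\mathbf 0$ a.s., which is the content of Proposition~\ref{prop:MDC-terminal-covariance}. The positivity of $\bfSigma_t^*$ for $t<T$ also comes from that proposition.

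\emph{Step 2: rationality and filtering.} Fix the market maker's filtration. Conditionally on $\sF^{\bW}_T$, the coefficients $\bfsigma$ and $\bM^*$ are deterministic and the insider's drift is linear in $\vt-\bp_0-\bfxi^*$. Writing $\bP^*=\bp_0+\bfxi^*$, we get
\[
\dif\bY^*=\bfsigma^2(\bM^*)^{-1}(\bfSigma^*)^{-1}(\vt-\bP^*)\dif t+\bfsigma\dif\bB,
\]
so the problem is a linear-Gaussian Kalman--Bucy problem with constant signal $\vt$. The Riccati equation for the posterior covariance is $\dot{\bfSigma}=-\bfSigma\bA^\top\bfsigma^{-2}\bA\bfSigma$ with $\bA=\bfsigma^2(\bM^*)^{-1}(\bfSigma^*)^{-1}$. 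This reduces to $-(\bM^*)^{-1}\bfsigma^2(\bM^*)^{-1}$ exactly when $\bfSigma=\bfSigma^*$, so by uniqueness for the Riccati equation $\bfSigma^*$ is the filter covariance, which gives (ii). The Kalman gain is $\bfSigma\bA^\top\bfsigma^{-2}=(\bM^*)^{-1}$, so the filter satisfies $\dif\hat{\bv}=(\bM^*)^{-1}(\dif\bY^*-\bA(\hat{\bv}-\bP^*)\dif t)$. Since $\hat{\bv}$ and $\bP^*$ start at the same point, $\hat{\bv}=\bP^*$, which gives (i).

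Here the conditioning on $\bW$ must be justified: $\bW$ is independent of $(\vt,\bB)$ and is observed by the market maker, so one can use a regular conditional probability or enlarge the filtration initially by $\sF^{\bW}_T$ and then project. The innovation $\dif\bY^*-\bA(\bP^*-\bP^*)\dif t=\dif\bY^*$ is then a $\bfsigma\,\dif$-Brownian motion in the market filtration. This gives (iv), and also shows that $\bfxi^*$ is conditionally Gaussian with remaining variance $\int_t^T(\bM^*)^{-1}\bfsigma^2(\bM^*)^{-1}=\bfSigma_t^*$, using the backward representation from Step 1. That is the first part of (iii). Because $\bfSigma_t^*\to\mathbf 0$, the martingale $\bP_t^*$ converges to $\vt$ a.s. The bridge structure $\dif(\vt-\bP^*)=-\bfSigma^*$-scaled drift $+$ noise gives a Gaussian bridge, and one checks $L^2$ convergence via $\E{|\vt-\bP_t^*|^2}=\tr{\E{\bfSigma_t^*}}\to0$.

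\emph{Step 3: optimality.} I would use a value-function verification. Set
\[
J(t,\bfxi)=\tfrac12(\vt-\bp_0-\bfxi)^\top\bM_t^*(\vt-\bp_0-\bfxi)+\tfrac12\E{\int_t^T\tr{\bfsigma_s^2(\bM_s^*)^{-1}}\dif s\mid\sF^{\bW}_t}.
\]
For an admissible $\bX$, apply Itô to $\Pi_t+J(t,\bfxi_t^{\bX})$, where $\dif\bfxi=(\bM^*)^{-1}\dif\bY$ and the wealth differential is $(\vt-\bP)^\top\dif\bX-\dif\langle\bX,\bP\rangle$. The $\dif\bX$ terms cancel because $\nabla_{\bfxi}J\cdot(\bM^*)^{-1}\dif\bX=-(\vt-\bP)^\top\dif\bX$. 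The $\langle\bX,\bP\rangle$ term is exactly the leftover term $-\tfrac12\dif\langle\bX\rangle$ type quantity, which is nonpositive. The Itô correction from $\bZ$ cancels the drift of the conditional expectation term. The martingale $\bM^*$ contributes only a local martingale, and its cross variation with $\bfxi$ vanishes by independence of $\bW$ and $\bB$. This true-martingale property is where MDC enters; a strict local martingale would leave a negative drift. Hence $\Pi_t+J$ is a local supermartingale, and a true one under admissibility plus condition (iii) via Fatou. This gives $\E{\Pi_T\mid\vt}\le J(0,0)-\E{J(T,\bfxi_T)\mid\vt}\le J(0,0)$, with equality iff $\bX$ is absolutely continuous and $\bP_T=\vt$, which is the case for $\bX^*$ by Step 2.

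I expect the main obstacle to be this last equality and the integrability. One must show that the local martingales in the verification are true martingales along $\bX^*$, and that $\E{(\vt-\bP_T^*)^\top\bM_T^*(\vt-\bP_T^*)\mid\vt}=0$ despite $\bM^*_T$ being only $L^1$. I would handle both by conditioning on $\sF^{\bW}_T$, using the explicit Gaussian bridge law, and localizing $\bM^*$, then passing to the limit with monotone convergence.
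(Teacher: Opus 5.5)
Your route is the paper's. Step~1 is the proof of Proposition~\ref{prop:MDC-terminal-covariance}, and Step~2 is Propositions~\ref{v0tMLvYb0E} and~\ref{cnTt7K54xR}; your conditioning on $\sF^{\bW}_T$, and your use of the backward representation to make the residual variance $\sF^M_t$-measurable, are if anything more careful than the paper. Your $J(t,\bfxi)$ is exactly the paper's $\psi^{\vt}(t,\bp_0+\bfxi)$, because $V(t,\bfSigma^*_t)=\E{\int_t^T\tr{\bfsigma_s(\bM^*_s)^{-1}\bfsigma_s}\dif s\mid\sF^{\bW}_t}$ by Lemma~\ref{kqqCZ0K8sT}. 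Obtaining the upper bound from a local supermartingale bounded below (admissibility plus Fatou) is also sounder than the paper's claim that every stochastic integral is a true martingale for every admissible $\bX$.

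The genuine gap is your sentence that the cross variation of $\bM^*$ with $\bfxi$ ``vanishes by independence of $\bW$ and $\bB$.'' A trading strategy is an arbitrary $\sF^I$-semimartingale, and $\sF^I\supseteq\sF^{\bW}$. So $\bX$ may have a $\dif\bW$ martingale part, and then $\langle\bM^*,\bfxi\rangle=\langle\bM^*,\int\bfLambda^*\dif\bX\rangle\neq0$. Already in the scalar case, with $\dif M^*_t=\gamma_t\,\dif W_t$ and $z=\tilde v-P$, your It\^o computation actually yields the drift
\[
-\tfrac12\lambda^*_t\,\dif\langle X\rangle_t-z_t\lambda^*_t\,\dif\langle M^*,X\rangle_t .
\]
The second term has no sign, so $\Pi+J$ need not be a local supermartingale.

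This is not a technicality that sharper estimates could fix. Take, in feedback form,
\[
\dif X_t=\sigma_t^2\lambda^*_t(\Sigma^*_t)^{-1}z_t\,\dif t-\varepsilon\gamma_t\phi(z_t)\1_{\{t\le T/2\}}\,\dif W_t,
\qquad \phi(z)=z/(1+z^2).
\]
The drift above becomes $\lambda^*\gamma^2\bigl(\varepsilon z\phi(z)-\tfrac12\varepsilon^2\phi(z)^2\bigr)$, which is strictly positive wherever $z\gamma\neq0$ when $0<\varepsilon<2$. The bridge drift on $[T/2,T]$ still forces $z_T=0$, and the wealth stays bounded below by an integrable variable. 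So whenever $M^*$ is genuinely stochastic with bounded data (e.g.\ the Collin-Dufresne--Fos example, where $\gamma=\nu M^*$), this admissible deviation strictly beats $X^*$.

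The paper's proof makes the same orthogonality assertion (``independent Brownian differentials''), so the flaw lies in the shared route rather than in your proposal alone. Both arguments go through once admissible strategies are restricted to those with $\langle X^i,W^j\rangle\equiv0$, for instance absolutely continuous ones.

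Separately, the obstacle you flag at the end is misplaced. Since $\bP^*_T=\vt$, the term $(\vt-\bP^*_T)^\top\bM^*_T(\vt-\bP^*_T)$ is $0$ a.s., whatever the integrability of $\bM^*_T$. What attainment actually needs is that $\int_0^T(\vt-\bP^*_t)^\top\dif\bM^*_t\,(\vt-\bP^*_t)$ has zero $\vt$-conditional mean. This does not follow merely from $\bM^*$ being an $L^1$ martingale with a square-integrable integrand, which is the paper's justification; it needs an argument such as the localization you sketch.
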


The proof has three ingredients. First, the Kalman--Bucy filter identifies \eqref{SWjcJYb4sK} and \eqref{ENZWGUDa1A}. Second, the terminal covariance identity implied by MDC, proved in Proposition~\ref{prop:MDC-terminal-covariance}, turns the state process into a Gaussian bridge ending at $\vt-\bp_0$. Third, a quadratic wealth decomposition verifies that no admissible strategy can outperform \eqref{ud3k6AlV7q}. The details are collected in appendix~\ref{app:equilibrium-proofs}.

\begin{proposition}[Insider value and optimality]
\label{prop:insider-value}
Under MDC, if the market maker uses the pricing rule \eqref{CrLPP8jKQG}--\eqref{2vGs8ogSts}, then among all $\bfP^*$-admissible strategies the candidate strategy \eqref{ud3k6AlV7q} maximizes the insider's $\vt$-conditional expected profit
\[
\E{\int_0^T(\vt-\bP_t)^{\top}\dif\bX_t - \langle\bX,\bP\rangle_T \,\middle|\, \sF^I_0}.
\]
The maximum is attained precisely by strategies that are inconspicuous, $\E{\dif\bX_t/\dif t\mid\sF^M_t}={\bf 0}$, and satisfy $\bfxi_T=\vt-\bp_0$ almost surely.
\end{proposition}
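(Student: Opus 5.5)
The approach is the standard Kyle--Back quadratic wealth decomposition, adapted to stochastic matrix-valued depth $\bM^*$. Fix an admissible $\bX$, write $\bY=\bX+\bZ$, $\bfxi_t=\int_0^t\bfLambda^*_s\,\dif\bY_s$ and $\bP_t=\bp_0+\bfxi_t$. The candidate value function is
\[
J(t,\bfxi)=\tfrac12(\vt-\bp_0-\bfxi)^{\top}\bM_t^*(\vt-\bp_0-\bfxi)+\tfrac12\,\E{\int_t^T\tr{\bfsigma_s^2(\bM_s^*)^{-1}}\,\dif s\,\middle|\,\sF^{\bW}_t}.
\]
The second term is the conditional version of the dual objective in \eqref{RUHFk3mIp0}; writing it as $\int_t^T$ of the integrand plus an $\sF^{\bW}$-martingale is legitimate because the integrand is integrable, since $\bM^*$ is optimal and the infimum is finite. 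Because $\bM^*$ is an $\sF^{\bW}$-martingale under MDC and $\bW$ is independent of $\bB$ and $\vt$, it is also a martingale in the insider filtration $\sF^I$.

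The key step is Itô's formula applied to $J(t,\bfxi_t)$. Denote $\bu_t=\vt-\bp_0-\bfxi_t$. One has $\dif\bfxi=(\bM^*)^{-1}\dif\bX+(\bM^*)^{-1}\bfsigma\,\dif\bB$, so $\bM^*\dif\bfxi=\dif\bY$. Hence:
\begin{itemize}[label={}]
\item the drift from $\dif\bX$ is $-\bu^{\top}\dif\bX$;
\item the quadratic term gives $\tfrac12\tr{\bfsigma^2(\bM^*)^{-1}}\dif t$, which cancels against the time derivative of the second term;
\item the term $\dif\bM^*$ contributes only $\sF^I$-local martingale pieces.
\end{itemize}
The cross variation $\langle \bM^*,\bfxi\rangle$ vanishes because $\bfxi$ is driven by $\bB$ and $\bX$, while $\bM^*$ is driven by $\bW$. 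There is one caveat: if $\bX$ has a martingale part correlated with $\bW$, this cross term does not vanish. In that case I would group it with $\langle\bX,\bP\rangle$. Since $\langle \bX,\bP\rangle=\langle\bX,\bfxi\rangle$ and $\bfxi$ depends on $\bX$ through $(\bM^*)^{-1}$, a careful computation should show that the remaining terms form the nonnegative quantity $\tfrac12\,\dif\langle \bX\rangle$ weighted by $(\bM^*)^{-1}$.

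Collecting terms gives, up to local martingales $L$,
\[
\Pi_T(\bX,\bP)=J(0,\mathbf 0)-J(T,\bfxi_T)-\tfrac12\int_0^T\tr{(\bM_s^*)^{-1}\dif\langle \bX^c\rangle_s}+L_T.
\]
Here $J(T,\bfxi_T)=\tfrac12\bu_T^{\top}\bM_T^*\bu_T\ge 0$.

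Taking $\sF^I_0$-conditional expectations requires $L$ to be a true martingale, or at least to have nonpositive expectation. For this I would use admissibility: $\Pi$ is bounded below by an integrable variable and $J\ge0$. Localizing, passing to the limit and applying Fatou then gives the upper bound
\[
\E{\Pi_T\mid\sF_0^I}\le J(0,\mathbf 0),
\]
with equality iff $\bX$ is continuous of finite variation and $\bu_T=0$, that is, $\bfxi_T=\vt-\bp_0$.

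For attainment by the candidate \eqref{ud3k6AlV7q}, I would invoke the bridge property in Theorem~\ref{I3YrvmnwLc}(iii), which follows from Proposition~\ref{prop:MDC-terminal-covariance}. Under that property $\bfxi^*_T=\vt-\bp_0$, and the local martingale terms are true martingales because $\bfxi^*$ is Gaussian with covariance $\bfSigma^*$ bounded by $\bC$. The characterization of maximizers via inconspicuousness then comes from the Kalman filter: for absolutely continuous $\bX$ with $\bfxi_T=\vt-\bp_0$, rationality forces $\E{\dif\bX_t/\dif t\mid\sF^M_t}=0$. I would present this as the equivalent form of the equality case, using Theorem~\ref{I3YrvmnwLc}(iv).

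The main obstacle is the localization and Fatou step together with the $\bW$-correlated quadratic variation terms. Controlling $\bu^{\top}\bM^*\bu$ requires only $\bM^*_T\in L^1$, with no boundedness available. I would first try to prove uniform integrability of the stopped $J(\tau_k,\bfxi_{\tau_k})$, conditionally on $\vt$, using Doob's inequality for the martingale $\bM^*$. Failing that, I would use the lower bound on $\Pi$ and only need the inequality direction.
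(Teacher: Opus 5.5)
Your overall route is the paper's. Your $J(t,\bfxi_t)$ is exactly $\psi^{\vt}(t,\bP_t)$ of Theorem~\ref{thm:insider-value}, since Lemma~\ref{kqqCZ0K8sT} and \eqref{DzvEJ7XTQT} give $V(t,\bfSigma^*_t)=\E{\int_t^T\tr{\bfsigma_s^2(\bM^*_s)^{-1}}\dif s\,\middle|\,\sF^{\bW}_t}$. Your supermartingale/Fatou treatment of $L$, a local martingale bounded below by $-K^{\bX}-J(0,\mathbf 0)$, is sounder than the paper's claim that every stochastic integral is a true martingale.

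\textbf{Main gap: the $\bW$-cross term.} You correctly flag it, but it cannot be regrouped into a nonnegative penalty. Write $\dif\bM^*_t=\sum_k\bm^k_t\,\dif W^k_t$ with $\bm^k$ symmetric, and let $\sum_k\bfgamma^k_t\,\dif W^k_t$ be the $\bW$-part of $\bX$. The It\^o term $\bu^{\top}\dif\bM^*\dif\bu$ adds to the wealth the drift $-\sum_k\bu^{\top}\bm^k\bfLambda^*\bfgamma^k\,\dif t$. This is \emph{linear} in $\bfgamma$, whereas the penalty $-\frac12\sum_k(\bfgamma^k)^{\top}\bfLambda^*\bfgamma^k\,\dif t$ is quadratic. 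Their sum is maximized at $\bfgamma^k=-\bm^k\bu$, with value $\frac12\sum_k\bu^{\top}\bm^k\bfLambda^*\bm^k\bu\,\dif t$. This is strictly positive whenever $\bM^*$ is genuinely random, already in the scalar Collin-Dufresne--Fos case where $\dif M^*=\nu M^*\dif W$.

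Now take $\bfgamma^k=-\varepsilon\bm^k\bu$ on $[0,T/2]$ and keep the feedback drift of \eqref{ud3k6AlV7q}, so that still $\bfxi_T=\vt-\bp_0$. Modulo integrability checks, your own decomposition then gives an expected gain over $J(0,\mathbf 0)$ of
\[
\bigl(\varepsilon-\tfrac{\varepsilon^2}{2}\bigr)\,\E{\int_0^{T/2}\sum_k\bu_t^{\top}\bm_t^k\bfLambda_t^*\bm_t^k\bu_t\,\dif t}>0 .
\]
So over all $\sF^I$-semimartingale strategies the bound cannot be proved this way, and apparently fails. It holds only in a class with $\langle\bM^*,\bX\rangle\equiv\mathbf 0$, for example absolutely continuous strategies. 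This is exactly what the paper assumes tacitly when it sets $\dif\bM^*_t\,\dif\bfxi_t=\mathbf 0$. You should impose that restriction explicitly rather than try to absorb the term.

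\textbf{Two further steps fail as written.}

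First, the ``precisely'' direction. The equality case of your decomposition is $\langle\bX\rangle\equiv\mathbf 0$, $\bfxi_T=\vt-\bp_0$, and $\E{L_T\mid\sF^I_0}=0$; inconspicuousness plays no role. You also cannot extract it from rationality. $\bfP^*$ is rational only along $\bX^*$; for another $\bX$ the filtration $\sF^{M,\bX}$ changes, and in general $\bP^{\bX}_t\neq\E{\vt\mid\sF^{M,\bX}_t}$. Already in Kyle's model (with $T=1$), $\dif X_t=c\,\frac{\sigma}{\sqrt{\Sigma_0}}\frac{\tilde v-P_t}{1-t}\,\dif t$ attains the maximum for every $c>0$. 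For $c\neq1$, however, its own order flow gives posterior variance $\Sigma_0(1-t)/(1-t+c^2t)$ and filter gain $c\lambda/(1-t+c^2t)\neq\lambda$, so it is not inconspicuous. Only the ``if'' half is provable, and that is all Theorem~\ref{thm:insider-value} actually claims.

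Second, attainment by $\bX^*$. You need $\E{L_T\mid\sF^I_0}=0$ conditionally on $\vt$, but $\bfSigma^*_t\le\bC$ describes the market maker's conditional law, not the insider's. The clean fix is to combine your a.s.\ upper bound with equality of unconditional expectations:
\[
\E{\Pi_T(\bX^*)}=\E{\int_0^T\tr{\bfsigma_t^2(\bM^*_t)^{-1}}\dif t}=\E{J(0,\mathbf 0)} .
\]
The first equality uses $\E{(\vt-\bP^*_t)(\vt-\bP^*_t)^{\top}\,\middle|\,\sF^M_t}=\bfSigma^*_t$. The second uses \eqref{eq:MDC-terminal-constraint} and optional projection, which give $\tr{\bC\bM^*_0}=\E{\int_0^T\tr{\bfsigma_t^2(\bM^*_t)^{-1}}\dif t}$.
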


The candidate strategy thus releases information so as to be statistically invisible to the market maker at each instant, yet it is profit-maximizing for the informed trader precisely because it attains this bound. The explicit value of the maximal expected profit, together with the wealth decomposition that establishes it, is given in appendix~\ref{app:equilibrium-proofs}; see Theorem~\ref{thm:insider-value}. This records the insider's value content in the main result, as in \cite{cdf} and \cite{emz}, while deferring the verification.

\section{The variational construction}
\label{sec:variational-construction}
In this section we derive the dual problem \eqref{RUHFk3mIp0} from a primal variational
problem, and identify the matrix-valued martingale $\bM^*$ of section~\ref{sec:main-result} as the
multiplier enforcing the terminal constraint. The primal problem records the market maker's
perspective: its state variable is the posterior covariance of $\vt$, and its control is the
rate at which that covariance is driven down to zero by the release of information into prices.
This is the precise sense, made quantitative in section~\ref{sec:economic-interpretation}, in which
informed trading is the optimal liquidation of an information inventory.

\subsection{The primal problem}
\label{hxjKCTerXp}
Fix $\bC\in\cS^n_{++}$ and consider
\begin{equation}
\label{OxpDCjjac4}
\begin{aligned}
\sup_{\bfSigma}
\quad & \E{\int_0^T \tr{\sqrt{-\,\bfsigma_t\,\dot{\bfSigma}_t\,\bfsigma_t}}\dif t} \\
\text{s.t.}\quad
& \bfSigma_0=\bC,\qquad \bfSigma_T\geq{\bf 0},\\
& \bfSigma \ \text{is absolutely continuous and } \dot{\bfSigma}_t\in \cS^n_{-}\ \text{for a.e.\ } t\in[0,T].
\end{aligned}
\end{equation}
Here the variable $(\bfSigma_t)_{0\le t\le T}$ is the market maker's posterior covariance
${\rm Var}(\vt\mid\sF^M_t)$, starting from the prior $\bfSigma_0=\bC$. The constraint
$\dot{\bfSigma}_t\in\cS^n_{-}$ is monotonicity---posterior covariance can only decrease---so the
control is the instantaneous information-release rate $-\dot{\bfSigma}_t\in\cS^n_{+}$, and the
integrand $\tr{\sqrt{-\bfsigma_t\dot{\bfSigma}_t\bfsigma_t}}$ is the value of releasing at this rate
against noise of volatility $\bfsigma_t$ (reducing to $\sigma_t\sqrt{-\dot\Sigma_t}$ in the scalar
case). The terminal requirement is the inequality $\bfSigma_T\geq{\bf 0}$, the form dual to the
positive-definite cone $\cM$. Appendix~\ref{tBdCdebe09} gives a heuristic derivation of
\eqref{OxpDCjjac4} from a causal optimal transport--type problem.

Accordingly, we first define the
class of velocities subject only to the pointwise constraint, together with the integrability needed
for the objective and the multiplier pairing to be well~defined:
\begin{equation}
\label{def:A0}
\cA
:=
\left\{
\dot{\bfSigma} :
\begin{array}{l}
\dot{\bfSigma}_t\in \cS^n_{-}\ \text{for a.e. }t,\\
\dot{\bfSigma}\ \text{is } \sF^{\bW}\text{-adapted with }
\E{\int_0^T \|\dot{\bfSigma}_t\|\,\dif t} <\infty,\\
\displaystyle
\E{\int_0^T \tr{\sqrt{-\bfsigma_t\dot{\bfSigma}_t\bfsigma_t}}\,\dif t} <\infty
\end{array}
\right\}.
\end{equation}
The three conditions defining $\cA$ are, in order: the pointwise monotonicity constraint inherited
from \eqref{OxpDCjjac4}; adaptedness of the control to the public filtration $\sF^{\bW}$, with an
$L^1$ bound ensuring the running integral $\int_0^t\dot{\bfSigma}_s\,\dif s$ is well defined and the
multiplier pairing below is finite; and finiteness of the objective itself, which rules out
velocities that release information arbitrarily~fast.

For a fixed initial covariance $\bC\in\cS^n_{++}$, we define the terminally
admissible subclass
\begin{equation}
\label{def:AC}
\cA(\bC)
:=
\left\{
\dot{\bfSigma}\in\cA:
\bC+\int_0^T \dot{\bfSigma}_t\,\dif t\ \geq\ \mathbf 0
\right\}.
\end{equation}
Given $\dot{\bfSigma}\in\cA(\bC)$, the associated covariance path is
\[
\bfSigma_t
=
\bC+\int_0^t\dot{\bfSigma}_s\,\dif s,
\qquad 0\le t\le T.
\]
Thus the primal problem \eqref{OxpDCjjac4} can equivalently be written as
\[
\sup_{\dot{\bfSigma}\in\cA(\bC)}
\E{\int_0^T
\tr{\sqrt{-\bfsigma_t\dot{\bfSigma}_t\bfsigma_t}}\,\dif t}.
\]

\subsection{The dual problem}

The terminal constraint in \eqref{def:AC} is imposed by the matrix-valued martingale multiplier $\bM\in\cM$ defined in \eqref{2ikD666Yhq}. Since $\dot{\bfSigma}$ is $\sF^{\bW}$-adapted, optional projection gives
\[
\E{\tr{\bM_T\int_0^T\dot{\bfSigma}_t\,\dif t}}
=
\E{\int_0^T\tr{\bM_t\dot{\bfSigma}_t}\,\dif t}.
\]

\begin{definition}[Lagrangian]
Let $\cA$ be the terminally-unconstrained velocity class in \eqref{def:A0}, and let $\cM$ be as in \eqref{2ikD666Yhq}. The \emph{Lagrangian}
$L:\cA\times\cM\to\dR\cup\{-\infty\}$ is
\begin{equation}
\label{PC4t8KCzcX}
L(\dot{\bfSigma}_{[0,T]},\bM_{[0,T]})
=
\frac{1}{2}\tr{\bC\bM_0}
+
\E{\int_0^T
\tr{
\sqrt{-\bfsigma_t\dot{\bfSigma}_t\bfsigma_t}
+
\frac{1}{2}\bM_t\dot{\bfSigma}_t
}\,\dif t}.
\end{equation}
\end{definition}

\begin{proposition}
\label{wn0NtKxptx}
The dual function is
\begin{equation}
\label{YTF6qPCOFy}
g\!\del{\bM_{[0,T]}} := \sup_{\dot{\bfSigma}\in\cA} L(\dot{\bfSigma}_{[0,T]},\bM_{[0,T]}) = \frac{1}{2}\tr{\bC \bM_0} + \frac{1}{2}\E{\int_0^T \tr{\bfsigma^2_t \bM_t^{-1}}\dif t}.
\end{equation}
\end{proposition}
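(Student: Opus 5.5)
The computation is pointwise in $(t,\omega)$: the Lagrangian's integrand involves $\dot{\bfSigma}_t$ only through $\tr{\sqrt{-\bfsigma_t\dot{\bfSigma}_t\bfsigma_t}+\frac12\bM_t\dot{\bfSigma}_t}$, and $\cA$ imposes no coupling across times beyond adaptedness and integrability. So we first solve the finite-dimensional problem: for fixed $\bfsigma\in\cS^n_{++}$ and $\bM\in\cS^n_{++}$,
\[
\phi(\bfsigma,\bM):=\sup_{\bA\in\cS^n_{+}}\Big\{\tr{\sqrt{\bfsigma\bA\bfsigma}}-\tfrac12\tr{\bM\bA}\Big\},
\]
with $\bA=-\dot{\bfSigma}_t$. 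Substitute $\bB=\bfsigma\bA\bfsigma\in\cS^n_{+}$, a bijection of $\cS^n_{+}$. The objective becomes $\tr{\bB^{1/2}}-\frac12\tr{\bfsigma^{-1}\bM\bfsigma^{-1}\bB}$. Write $\bR=\bB^{1/2}\in\cS^n_+$ and $\bQ=\bfsigma^{-1}\bM\bfsigma^{-1}\in\cS^n_{++}$. The objective is then $\tr{\bR}-\frac12\tr{\bQ\bR^2}$, which is strictly concave in $\bR$ since $\tr{\bQ\bR^2}=\tr{\bR\bQ\bR}$ is a positive-definite quadratic form.

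The first-order condition over all symmetric $\bR$ is $\bI=\frac12(\bQ\bR+\bR\bQ)$, a Lyapunov equation. Its unique symmetric solution is $\bR=\bQ^{-1}$, which is positive definite. This is therefore the maximizer, also over $\cS^n_+$. The optimal value is $\tr{\bQ^{-1}}-\frac12\tr{\bQ^{-1}}=\frac12\tr{\bfsigma\bM^{-1}\bfsigma}=\frac12\tr{\bfsigma^2\bM^{-1}}$. The maximizer is $\bA^*=\bfsigma^{-1}\bQ^{-2}\bfsigma^{-1}=\bM^{-1}\bfsigma^2\bM^{-1}$, which matches \eqref{eq:Sigma-star-def}. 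As an alternative check, one can use the variational inequality $\tr{\bR}\le\frac12\tr{\bQ\bR^2}+\frac12\tr{\bQ^{-1}}$, which follows from $\tr{(\bQ^{1/2}\bR-\bQ^{-1/2})^\top(\bQ^{1/2}\bR-\bQ^{-1/2})}\ge0$. This gives the upper bound directly, with equality iff $\bR=\bQ^{-1}$, and avoids any differentiability issues.

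Integrating, for every $\dot{\bfSigma}\in\cA$ we obtain
\[
L(\dot{\bfSigma},\bM)\le\tfrac12\tr{\bC\bM_0}+\tfrac12\E{\int_0^T\tr{\bfsigma_t^2\bM_t^{-1}}\dif t},
\]
which gives $g\le$ the right-hand side of \eqref{YTF6qPCOFy}. For the reverse inequality we plug in $\dot{\bfSigma}_t=-\bM_t^{-1}\bfsigma_t^2\bM_t^{-1}$. This is $\sF^{\bW}$-adapted and negative semidefinite. It attains the pointwise supremum, so $L$ equals the claimed value, provided it lies in $\cA$.

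The main obstacle is this membership: the integrability conditions $\E{\int_0^T\|\bM_t^{-1}\bfsigma_t^2\bM_t^{-1}\|\dif t}<\infty$ and $\E{\int_0^T\tr{\bfsigma_t\bM_t^{-1}\bfsigma_t}\dif t}<\infty$ need not hold for a general $\bM\in\cM$. Since $\bM^{-1}$ can blow up, I would handle this by truncation. Set $\dot{\bfSigma}^{(k)}_t=-\bM_t^{-1}\bfsigma_t^2\bM_t^{-1}\1_{\{\|\bM_t^{-1}\|\le k,\ \|\bfsigma_t\|\le k\}}$. Each $\dot{\bfSigma}^{(k)}$ is bounded and hence lies in $\cA$. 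Its Lagrangian integrand equals $\frac12\tr{\bfsigma_t^2\bM_t^{-1}}\1_{\{\cdot\}}\ge0$. Letting $k\to\infty$, monotone convergence yields $\sup_{\cA}L\ge$ the right-hand side, including the case where it equals $+\infty$. Finiteness of $\tr{\bC\bM_0}$ is automatic, since $\bM_0$ is a deterministic positive definite matrix.
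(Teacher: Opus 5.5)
Your proof is correct, and it takes a different route from the paper's.

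\textbf{The paper's route.} The paper works directly in the space of velocity processes. It takes the candidate $\dot{\bfSigma}^*_t=-\bM_t^{-1}\bfsigma_t^2\bM_t^{-1}$ and expands $L$ along perturbations $\dot{\bfSigma}^*+\varepsilon\dot{\bfeta}$, using the integral representation of the Fr\'echet derivative of the matrix square root from \cite{moralniclas2018}. The resulting first-order condition is the stationarity relation $\bM_t=\bfsigma_t(\sqrt{-\bfsigma_t\dot{\bfSigma}_t\bfsigma_t})^{-1}\bfsigma_t$, and the candidate satisfies it. Trace-concavity of the square root \cite{wangramdas} then upgrades this stationary point to a global maximizer.

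\textbf{Your route.} You reduce to a pointwise finite-dimensional problem. With $\bR=(\bfsigma\bA\bfsigma)^{1/2}$ and $\bQ=\bfsigma^{-1}\bM\bfsigma^{-1}$, it becomes a strictly concave quadratic, and completing the square gives $\tr{\bR}-\frac12\tr{\bQ\bR^2}\le\frac12\tr{\bQ^{-1}}$, with equality iff $\bR=\bQ^{-1}$. That equality case is exactly the paper's relation \eqref{1zwfZY3Cld}, so nothing the paper later reuses is lost.

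\textbf{What your route buys.}
\begin{itemize}
\item It gives a global inequality with no matrix-function calculus, no control of $O(\varepsilon^2)$ remainders inside the expectation, and no separate concavity lemma.
\item It closes a point the paper passes over. The paper asserts $\dot{\bfSigma}^*\in\cA$ from sign and adaptedness alone. It never checks the two integrability conditions in \eqref{def:A0}, and these can fail for a general $\bM\in\cM$, since $\bM_t^{-1}$ need not be integrable. Your truncation $\dot{\bfSigma}^{(k)}$ together with monotone convergence gives the lower bound rigorously, including the case where the right-hand side of \eqref{YTF6qPCOFy} is $+\infty$.
\end{itemize}

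\textbf{One point to state explicitly.} In the upper bound, note that for $\dot{\bfSigma}\in\cA$ the term $\tr{\bM_t\dot{\bfSigma}_t}$ is nonpositive, and the square-root term is integrable by definition of $\cA$. Hence $L$ is well-defined in $[-\infty,\infty)$, and integrating your pointwise bound is legitimate.
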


The proof is deferred to appendix~\ref{app:dual-proofs}. The quantity $\int_0^T \tr{\bfsigma_t^2\bM_t^{-1}}\,\dif t$ is the slippage-cost term appearing in Collin-Dufresne and Fos \cite{cdf}.

\begin{proposition}
\label{FOHnj1xlta}
The dual function $g$ is convex.
\end{proposition}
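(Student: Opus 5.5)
Two routes are available. The first, and shortest, uses Proposition~\ref{wn0NtKxptx}. The dual function is by definition $g(\bM)=\sup_{\dot{\bfSigma}\in\cA}L(\dot{\bfSigma},\bM)$. For each fixed $\dot{\bfSigma}\in\cA$, the map $\bM\mapsto L(\dot{\bfSigma},\bM)$ is affine on the convex set $\cM$. Indeed, $\tr{\bC\bM_0}$ is linear in $\bM$. The term $\E{\int_0^T\tr{\bM_t\dot{\bfSigma}_t}\,\dif t}$ is linear and finite: by the optional projection identity it equals $\E{\tr{\bM_T\int_0^T\dot{\bfSigma}_t\,\dif t}}$, which is finite once $\bM_T\in L^1$ and $\int_0^T\dot{\bfSigma}_t\,\dif t$ is bounded. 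The term $\E{\int_0^T\tr{\sqrt{-\bfsigma_t\dot{\bfSigma}_t\bfsigma_t}}\,\dif t}$ does not depend on $\bM$. A pointwise supremum of affine functions is convex, possibly with values in $\dR\cup\{+\infty\}$, so $g$ is convex.

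One technical point needs care in this route. For general $\dot{\bfSigma}\in\cA$ the pairing $\E{\int\tr{\bM_t\dot{\bfSigma}_t}\dif t}$ is only $\le 0$ and could equal $-\infty$. The affine property must then be read in the extended sense, and the argument still works. The pairing is nonpositive because $\bM_t\in\cS^n_{++}$ and $\dot{\bfSigma}_t\in\cS^n_-$ give $\tr{\bM_t\dot{\bfSigma}_t}\le0$. Along a convex combination $\bM^\lambda=\lambda\bM^1+(1-\lambda)\bM^2$, linearity of the integrand gives $L(\dot{\bfSigma},\bM^\lambda)=\lambda L(\dot{\bfSigma},\bM^1)+(1-\lambda)L(\dot{\bfSigma},\bM^2)$ whenever the right-hand side is not of the form $\infty-\infty$. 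That form cannot occur, because all terms are bounded above. We also check that $\cM$ is convex. A convex combination of $\sF^{\bW}$-martingales closed by $L^1$, $\cS^n_{++}$-valued terminal values is again such a martingale, and $\cS^n_{++}$ is a convex cone, so positivity is preserved at every $t$.

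The second route is a direct check on the closed form \eqref{YTF6qPCOFy}, which also serves as a sanity check. The first term $\tfrac12\tr{\bC\bM_0}$ is linear. For the second term, it suffices to show that $\bA\mapsto\tr{\bfsigma^2\bA^{-1}}$ is convex on $\cS^n_{++}$ for fixed $\bfsigma\in\cS^n_{++}$. Write $\tr{\bfsigma^2\bA^{-1}}=\sum_i \be_i^\top\bfsigma\bA^{-1}\bfsigma\be_i$. Each summand has the form $\bx^\top\bA^{-1}\bx$, which is jointly convex in $(\bx,\bA)$ and in particular convex in $\bA$. This follows from the variational formula $\bx^\top\bA^{-1}\bx=\sup_{\by}\{2\by^\top\bx-\by^\top\bA\by\}$, a supremum of affine functions of $\bA$; alternatively, it is operator convexity of matrix inversion. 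Applying this pointwise in $(t,\omega)$ and then integrating with $\dP\otimes\dif t$, convexity survives in $[0,\infty]$ by monotonicity of the integral.

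The only mild obstacle is the extended-value bookkeeping in the first route, namely that $L$ can take the value $-\infty$. The second route avoids it entirely, since all quantities there are nonnegative, so I would present the second route as the proof and note the first as the conceptual reason.
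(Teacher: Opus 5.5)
Your proposal is correct, and the route you would present is exactly the paper's argument: the boundary term is linear, $\bA\mapsto\tr{\bfsigma^2\bA^{-1}}$ is convex pointwise on $\cS^n_{++}$, and you integrate against $\dP\otimes\dif t$. The paper only asserts the convexity of $\bA\mapsto\tr{\bB\bA^{-1}\bB}$, whereas your variational formula $\bx^\top\bA^{-1}\bx=\sup_{\by}\{2\by^\top\bx-\by^\top\bA\by\}$ and your check that $\cM$ is convex supply details the paper leaves implicit; your first route, taking $g$ as a supremum of extended-affine maps $\bM\mapsto L(\dot{\bfSigma},\bM)$, is also valid and has the advantage of not relying on the closed form in Proposition~\ref{wn0NtKxptx}.
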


The proof is given in appendix~\ref{app:dual-proofs}.

\subsection{Duality and the martingale dual condition}

\begin{proposition}
\label{prop:MDC-terminal-covariance}
Suppose that $(\bC,(\bfsigma_t)_{0\le t\le T})$ satisfies MDC, and let
$\bM^*$ be the corresponding interior martingale minimizer. Then
\begin{equation}
\label{eq:MDC-terminal-constraint}
\bC=
\int_0^T
(\bM_t^*)^{-1}\bfsigma_t^2(\bM_t^*)^{-1}\,\dif t.
\end{equation}
Consequently, the covariance path
\begin{equation}
\label{eq:Sigma-star-main}
\bfSigma_t^*
:=
\bC-
\int_0^t
(\bM_s^*)^{-1}\bfsigma_s^2(\bM_s^*)^{-1}\,\dif s
=
\int_t^T
(\bM_s^*)^{-1}\bfsigma_s^2(\bM_s^*)^{-1}\,\dif s
\end{equation}
is well-defined, positive definite for $t<T$, and satisfies
$\bfSigma_0^*=\bC$ and $\bfSigma_T^*={\bf 0}$.
\end{proposition}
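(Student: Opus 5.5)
The plan is to derive \eqref{eq:MDC-terminal-constraint} as the first-order optimality condition of the dual problem \eqref{RUHFk3mIp0} at the interior minimizer $\bM^*$, perturbing along bounded symmetric martingale directions. Fix a bounded symmetric $\sF^{\bW}$-martingale $\bN$ and write $\bN_t=\E{\bN_T\mid\sF^{\bW}_t}$. By MDC, $\bM^*+\varepsilon\bN\in\cM$ for $|\varepsilon|<\varepsilon_0$; integrability of $\bM^*_T+\varepsilon\bN_T$ holds because $\bN_T$ is bounded. The function
\[
\phi(\varepsilon):=g(\bM^*+\varepsilon\bN)=\tfrac12\tr{\bC(\bM_0^*+\varepsilon\bN_0)}+\tfrac12\E{\int_0^T\tr{\bfsigma_t^2(\bM_t^*+\varepsilon\bN_t)^{-1}}\dif t}
\]
is therefore finite near $0$. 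It is convex (Proposition~\ref{FOHnj1xlta}) and minimized at $\varepsilon=0$.

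The first step is to differentiate $\phi$ at $0$. Pointwise,
\[
\frac{\dif}{\dif\varepsilon}\tr{\bfsigma_t^2(\bM_t^*+\varepsilon\bN_t)^{-1}}=-\tr{\bfsigma_t^2\bM_t^{-1}\bN_t\bM_t^{-1}}\Big|_{\bM=\bM^*+\varepsilon\bN}.
\]
To exchange derivative and expectation, use convexity of $\varepsilon\mapsto\tr{\bfsigma^2(\bM+\varepsilon\bN)^{-1}}$. Its difference quotients are monotone in $\varepsilon$, and they are bounded above and below by the difference quotients at $\pm\varepsilon_0/2$, which are integrable because $\phi(\pm\varepsilon_0/2)<\infty$ and $\phi(0)<\infty$. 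Monotone or dominated convergence then gives
\[
0=\phi'(0)=\tfrac12\tr{\bC\bN_0}-\tfrac12\E{\int_0^T\tr{(\bM_t^*)^{-1}\bfsigma_t^2(\bM_t^*)^{-1}\bN_t}\dif t}.
\]
A byproduct is that $\E{\int_0^T\|(\bM^*_t)^{-1}\bfsigma_t^2(\bM^*_t)^{-1}\|\dif t}<\infty$: take $\bN=\bI$ and use positivity of the integrand, so the trace is integrable. This gives integrability of $\bG:=\int_0^T(\bM_t^*)^{-1}\bfsigma_t^2(\bM_t^*)^{-1}\dif t$ and justifies the projection step below.

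The second step converts the condition into a terminal identity. Since $\bN_t=\E{\bN_T\mid\sF^{\bW}_t}$ and the integrand is adapted, optional projection (as in the display preceding the Lagrangian) gives
\[
\E{\int_0^T\tr{(\bM^*_t)^{-1}\bfsigma^2_t(\bM^*_t)^{-1}\bN_t}\dif t}=\E{\tr{\bG\bN_T}}.
\]
Also $\tr{\bC\bN_0}=\E{\tr{\bC\bN_T}}$, because $\bN_0$ is deterministic ($\sF^{\bW}_0$ is trivial). Hence $\E{\tr{(\bC-\bG)\bN_T}}=0$ for every bounded symmetric $\sF^{\bW}_T$-measurable $\bN_T$. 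Take $\bN_T=\1_A\bE$ with $\bE$ ranging over symmetric basis matrices and $A\in\sF^{\bW}_T$. Since $\bC-\bG$ is symmetric and integrable, this yields $\bC=\bG$ almost surely, which is \eqref{eq:MDC-terminal-constraint}.

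Finally, subtracting the running integral from $\bC=\bG$ gives the backward representation \eqref{eq:Sigma-star-main}, with $\bfSigma_0^*=\bC$ and $\bfSigma_T^*=\mathbf 0$ immediate. For positive definiteness at $t<T$, take $x\neq0$. Then
\[
x^\top\bfSigma_t^*x=\int_t^T|\bfsigma_s(\bM_s^*)^{-1}x|^2\dif s,
\]
and the integrand is strictly positive for every $s$ because $\bfsigma_s,\bM_s^*\in\cS^n_{++}$. The integral over a set of positive length is therefore positive. Uniformity over $x$ follows from compactness of the unit sphere and continuity in $x$.

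The main obstacle is the interchange of differentiation and expectation, together with the integrability of $\bG$. Both rest on finiteness of $\phi$ at $\pm\varepsilon_0/2$, which is exactly where the interiority in MDC is used. Without it, only a one-sided inequality $\bG\le\bC$ would survive.
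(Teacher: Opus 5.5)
Your proposal is correct and follows the paper's argument: the first-order condition of $g$ along bounded symmetric martingale perturbations $\bM^*+\varepsilon\bN$, optional projection to rewrite the slippage term as $\E{\tr{\bG\bN_T}}$, and testing against arbitrary bounded symmetric $\bN_T$. Beyond the paper, you also justify the interchange of derivative and expectation by convexity, obtain integrability of $\bG$ from $\bN=\bI$, and prove positive definiteness of $\bfSigma^*_t$ for $t<T$, which the paper only asserts. The one step to tighten is your claim that $\phi$ is finite near $0$. This follows from interiority, not from integrability of $\bM^*_T+\varepsilon\bN_T$: one has $\bM^*_t+\varepsilon\bN_t\ge\tfrac13\bM^*_t$ for $|\varepsilon|\le\varepsilon_0/2$, which gives $\tr{\bfsigma_t^2(\bM^*_t+\varepsilon\bN_t)^{-1}}\le 3\,\tr{\bfsigma_t^2(\bM^*_t)^{-1}}$, and then $g(\bM^*)\le g(\bI)<\infty$.
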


\begin{theorem}
\label{hodau0V7QJ}
The maximization problem \eqref{OxpDCjjac4} admits the dual formulation
\begin{equation}
\inf_{\bM\in\cM}g\!\del{\bM}
\end{equation}
where $g$ is given by \eqref{YTF6qPCOFy}. Furthermore, if $\bfsigma$ is uniformly bounded from above and below and $\bC$ is positive definite, there exists a unique $\sF^{\bW}$-local martingale $\bM^{*,0,\bC}$, possibly lying outside $\cM$, satisfying
\[
g(\bM^{*,0,\bC}) \;\leq\; \inf_{\bM\in\cM} g(\bM),
\qquad
|\bM^{*,0,\bC}_T|<\infty \;\text{a.s.}
\]
The condition that the local-martingale optimizer $\bM^{*,0,\bC}$ belongs to $\cM$ and is an interior minimizer is precisely MDC in Definition~\ref{def:MDC}. Under this condition, Proposition~\ref{prop:MDC-terminal-covariance} yields the terminal covariance identity \eqref{eq:MDC-terminal-constraint}.
\end{theorem}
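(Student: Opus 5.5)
The proof has four parts: weak duality, passage from the Lagrangian to the dual formulation, existence of the local-martingale optimizer by compactness, and uniqueness by strict convexity.

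\textbf{Weak duality.} Fix $\dot{\bfSigma}\in\cA(\bC)$ and $\bM\in\cM$. By the optional projection identity displayed before the definition of the Lagrangian,
\[
\tfrac12\tr{\bC\bM_0}+\tfrac12\E{\int_0^T\tr{\bM_t\dot{\bfSigma}_t}\dif t}=\tfrac12\E{\tr{\bM_T\bfSigma_T}}\ge 0,
\]
because $\bM_T,\bfSigma_T\in\cS^n_+$ and the trace of a product of two positive semidefinite matrices is nonnegative. Hence the primal objective is at most $L(\dot{\bfSigma},\bM)$, which is at most $g(\bM)$ by Proposition~\ref{wn0NtKxptx}. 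Taking the supremum over $\dot{\bfSigma}$ and the infimum over $\bM$ gives primal $\le\inf_{\cM}g$.

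\textbf{Dual formulation (equality).} For the reverse inequality I would use that $L$ is concave in $\dot{\bfSigma}$, because $\bA\mapsto\tr{\sqrt{\bA}}$ is operator concave on $\cS^n_+$. It is also affine in $\bM$, and $g$ is convex by Proposition~\ref{FOHnj1xlta}. I would apply a minimax theorem (Sion, or Fenchel--Rockafellar for the $L^1$--$L^\infty$ pairing on $\sF^{\bW}_T$-measurable terminal values) to the cone constraint $\bfSigma_T\ge0$. Where the optimizer is available, I would also identify the saddle point directly: the pointwise maximizer in Proposition~\ref{wn0NtKxptx} is $\dot{\bfSigma}_t=-\bM_t^{-1}\bfsigma_t^2\bM_t^{-1}$. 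Under MDC, Proposition~\ref{prop:MDC-terminal-covariance} makes this velocity feasible with $\bfSigma_T=\mathbf 0$, so complementary slackness holds and there is no duality gap.

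\textbf{Existence of $\bM^{*,0,\bC}$.} Take a minimizing sequence $\bM^k\in\cM$. The term $\tr{\bC\bM^k_0}$ is bounded and $\bC>0$, so $\bM^k_0$ is bounded. Since each $\bM^k$ is a positive martingale, $\E{\tr{\bM^k_T}}=\tr{\bM^k_0}$ is bounded in $L^1$. The steps are then:
\begin{enumerate}[label=(\alph*),font=\normalfont]
\item Apply Komlós' lemma (applied at the rational times and at $T$) to pass to forward convex combinations. These remain in $\cM$ and, because $g$ is convex, are still minimizing; they converge a.s.
\item By conditional Fatou, the limit, regularized from the rational times, is an $\cS^n_+$-valued supermartingale $\bS$.
\item By Fatou and the convexity and lower semicontinuity of $\bM\mapsto\tr{\bfsigma^2\bM^{-1}}$, we get $g(\bS)\le\inf_\cM g$. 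Finiteness of the $\bM^{-1}$ term, together with $\bfsigma$ being bounded below, forces $\bS_t\in\cS^n_{++}$ for $\dP\otimes\dif t$-a.e.\ $(\omega,t)$.
\item Apply the matrix Doob--Meyer decomposition of appendix~\ref{app:matrix-doob-meyer} to $-\bS$ to write $\bS=\bN-\bA$, with $\bN$ an $\sF^{\bW}$-local martingale and $\bA$ nondecreasing in the operator order with $\bA_0=\mathbf 0$.
\item Then $\bN\ge\bS>0$, so $\bN^{-1}\le\bS^{-1}$ by operator monotonicity of inversion, while $\bN_0=\bS_0$. Therefore $g(\bN)\le g(\bS)\le\inf_\cM g$.
\item Set $\bM^{*,0,\bC}:=\bN$. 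Finiteness of $|\bN_T|$ follows from $\E{\tr{\bS_T}}<\infty$ together with $\bA_T<\infty$ a.s.
\end{enumerate}

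\textbf{Uniqueness.} Since $\bfsigma_t^2>0$, the map $\bM\mapsto\tr{\bfsigma_t^2\bM^{-1}}$ is strictly convex on $\cS^n_{++}$. If two such local martingales both attained the bound, their midpoint would strictly lower the value on a set of positive $\dP\otimes\dif t$ measure. That midpoint is again a positive local martingale with the same bound, so this gives uniqueness up to indistinguishability.

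\textbf{Identification with MDC.} The last sentence of the theorem is then definitional: membership of $\bM^{*,0,\bC}$ in $\cM$, plus interiority, is exactly Definition~\ref{def:MDC}. Once it holds, Proposition~\ref{prop:MDC-terminal-covariance} applies and yields the terminal covariance identity \eqref{eq:MDC-terminal-constraint}.

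I expect the main obstacle to be the compactness step. It requires controlling the time regularity of the Komlós limit so that it is a genuine càdlàg supermartingale, which means working at rational times and using right limits. It also requires making the strict-convexity argument compatible with local martingales that may fail to be true martingales. For both, the first tool I would reach for is localization by the stopping times $\tau_m=\inf\{t:|\bN_t|\ge m\}$.
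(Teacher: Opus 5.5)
Your proposal is essentially the paper's proof. It uses the same steps: boundedness of $\bM^k_0$ from $\bC>0$, forward convex combinations that remain minimizing by convexity of $g$, conditional Fatou to obtain a Loewner supermartingale limit, the matrix Doob--Meyer decomposition with operator monotonicity of inversion to pass to the local-martingale part, and strict convexity for uniqueness. The only differences are that you apply Koml\'os' lemma directly to the $L^1$-bounded values where the paper applies Delbaen--Schachermayer's Theorem~D coordinatewise to the martingale-representation integrals, and that you write out the weak-duality inequality, which the paper leaves implicit in the Lagrangian construction. The two delicate points you anticipate are also left implicit in the paper: the Fatou step needs convergence $\dP\otimes\dif t$-a.e.\ rather than only at rational times (a $\dif t$-null set), and the uniqueness step needs that no positive local martingale undercuts $\inf_{\cM}g$.
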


The proof is given in appendix~\ref{app:dual-proofs}.

\begin{corollary}
\label{cor:strong-duality}
Suppose the data $(\bC,(\bfsigma_t)_{0\le t\le T})$ satisfies the martingale dual condition (Definition~\ref{def:MDC}), and let $\bM^*:=\bM^{*,0,\bC}\in\cM$ be the resulting $\sF^{\bW}$-martingale optimizer of \eqref{RUHFk3mIp0} furnished by Theorem~\ref{hodau0V7QJ}. Then strong duality holds in the sense that
\begin{equation}
\label{9CcesNg5w6}
\sup_{\dot{\bfSigma}}\inf_{\bM} L(\dot{\bfSigma}_{[0,T]},\bM_{[0,T]}) = \inf_{\bM}\sup_{\dot{\bfSigma}} L(\dot{\bfSigma}_{[0,T]},\bM_{[0,T]}),
\end{equation}
where the saddle point is given by $\bM^*$ and
\begin{equation}
\label{Rwq3zFmuoG}
\dot{\bfSigma}^*_t
=
-\sbr{\bM_t^*}^{-1}\bfsigma_t^2\sbr{\bM_t^*}^{-1},
\qquad 0\le t\le T,
\end{equation}
which satisfies
\[
\bC + \int_0^T \dot{\bfSigma}^*_t \dif t = {\bf 0}.
\]
\end{corollary}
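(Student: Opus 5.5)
The plan is to verify directly that $(\dot{\bfSigma}^*,\bM^*)$ is a saddle point of $L$ on $\cA(\bC)\times\cM$. By weak duality, a saddle point gives equality in \eqref{9CcesNg5w6}. The argument uses Proposition~\ref{wn0NtKxptx} for the inner supremum and Proposition~\ref{prop:MDC-terminal-covariance} for the terminal constraint.

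\emph{Step 1: weak duality.} For any $\dot{\bfSigma}\in\cA(\bC)$ and $\bM\in\cM$, use the optional-projection identity
\[
\E{\tr{\bM_T\int_0^T\dot{\bfSigma}_t\,\dif t}}=\E{\int_0^T\tr{\bM_t\dot{\bfSigma}_t}\,\dif t}
\]
to rewrite the Lagrangian as
\[
L=\E{\int_0^T\tr{\sqrt{-\bfsigma_t\dot{\bfSigma}_t\bfsigma_t}}\,\dif t}+\tfrac12\E{\tr{\bM_T\bfSigma_T}},
\]
where the martingale property gives $\tr{\bC\bM_0}=\E{\tr{\bC\bM_T}}$. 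Since $\bM_T>0$ and $\bfSigma_T\ge 0$, the last term is nonnegative. Its infimum over $\cM$ is $0$, which one sees by taking $\bM=\varepsilon\bI$ with $\varepsilon\downarrow0$. Hence $\inf_{\bM}L$ equals the primal objective on $\cA(\bC)$, and $\sup\inf\le\inf\sup=\inf_{\cM}g$ follows.

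\emph{Step 2: the pointwise maximizer.} For fixed $\bM_t\in\cS^n_{++}$, I solve the pointwise problem $\sup_{\bA\ge0}\tr{\sqrt{\bfsigma\bA\bfsigma}}-\tfrac12\tr{\bM\bA}$. Substitute $\bB=\sqrt{\bfsigma\bA\bfsigma}$, so that $\bA=\bfsigma^{-1}\bB^2\bfsigma^{-1}$. The objective becomes $\tr{\bB}-\tfrac12\tr{\bB\bfsigma^{-1}\bM\bfsigma^{-1}\bB}$. This is concave in $\bB$ and is maximized at $\bB=(\bfsigma^{-1}\bM\bfsigma^{-1})^{-1}=\bfsigma\bM^{-1}\bfsigma$, which is symmetric positive definite. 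This gives $\bA=\bM^{-1}\bfsigma^2\bM^{-1}$ with value $\tfrac12\tr{\bfsigma^2\bM^{-1}}$, the computation behind Proposition~\ref{wn0NtKxptx}. The supremum over $\cA$ in $g(\bM^*)$ is therefore attained at $\dot{\bfSigma}^*$ in \eqref{Rwq3zFmuoG}. I must also check $\dot{\bfSigma}^*\in\cA$: it is $\sF^{\bW}$-adapted and negative semidefinite. Its integrability, and that of the objective, follow from $g(\bM^*)<\infty$ together with $\|\dot{\bfSigma}^*_t\|\le\tr{-\dot{\bfSigma}^*_t}$ and the finiteness of $\int_0^T\dot{\bfSigma}^*_t\,\dif t=-\bC$.

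\emph{Step 3: feasibility and complementary slackness.} Under MDC, Proposition~\ref{prop:MDC-terminal-covariance} gives $\bC+\int_0^T\dot{\bfSigma}^*_t\,\dif t=\mathbf 0$. Hence $\dot{\bfSigma}^*\in\cA(\bC)$ and $\bfSigma^*_T=\mathbf 0$. By the rewriting in Step~1, $L(\dot{\bfSigma}^*,\bM)$ equals the primal value $V(\dot{\bfSigma}^*)$ for every $\bM\in\cM$. In particular $\inf_{\bM}L(\dot{\bfSigma}^*,\cdot)=V(\dot{\bfSigma}^*)=L(\dot{\bfSigma}^*,\bM^*)=g(\bM^*)=\inf_{\cM}g$. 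Combined with weak duality, this yields $\sup\inf\ge V(\dot{\bfSigma}^*)=\inf\sup\ge\sup\inf$. So equality holds, and $(\dot{\bfSigma}^*,\bM^*)$ is a saddle point.

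The main obstacle is Step~1's interchange: justifying the optional-projection identity requires $\E{\tr{\bM_T\int_0^T\dot{\bfSigma}_t\,\dif t}}$ to be finite, while $\bM_T$ is only $L^1$ and $\dot{\bfSigma}$ is only $L^1$ in $(t,\omega)$. For $\dot{\bfSigma}^*$ the integral is the deterministic $-\bC$, so the identity holds by Fubini and the tower property after localizing $\bM^*$. For general $\dot{\bfSigma}$, I would first truncate $\dot{\bfSigma}$ to bounded controls, apply monotone convergence using $\tr{\bM\dot{\bfSigma}}\le0$, and conclude via the sign of $\tr{\bM_T\bfSigma_T}$.
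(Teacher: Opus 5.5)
Your proposal is correct and follows the paper's route. Like the paper, you verify that $(\dot{\bfSigma}^*,\bM^*)$ is a saddle point: the $\dot{\bfSigma}$-side uses the pointwise maximizer of Proposition~\ref{wn0NtKxptx}, and the $\bM$-side uses the terminal identity $\bC+\int_0^T\dot{\bfSigma}^*_t\,\dif t=\mathbf 0$ (equivalently $\nabla_{\bM}g(\bM^*)=\mathbf 0$ under MDC). Beyond the paper's first-order-condition sketch, you make explicit the weak duality via the terminal pairing $\tfrac12\E{\tr{\bM_T\bfSigma_T}}$, the sign-based justification of the optional-projection interchange, and $\dot{\bfSigma}^*\in\cA$, and you derive the pointwise maximizer via the substitution $\bB=\sqrt{\bfsigma\bA\bfsigma}$ instead of the Fr\'echet derivative of the matrix square root.
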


The proof is given in appendix~\ref{app:dual-proofs}.

\subsection{Sensitivity of the dual value}

We now record a sensitivity property of the dual value with respect to the initial covariance, treated as a parameter. For $\Sigma\in\cS^{n}_{++}$, write
\begin{equation}
\label{eq:G-Sigma-def}
G_\Sigma\!\del{\bM_{[0,T]}} := \frac{1}{2}\tr{\Sigma\,\bM_0} + \frac{1}{2}\E{\int_0^T \tr{\bfsigma^2_s\,\bM_s^{-1}}\dif s}
\end{equation}
for the dual function with $\Sigma$ in place of $\bC$ in the boundary term, and
\begin{equation}
\label{eq:G-star-def}
G^*(\Sigma) := G_\Sigma\!\del{\bM^{*,0,\Sigma}}
\end{equation}
for the corresponding optimal dual value, where $\bM^{*,0,\Sigma}$ is the unique $\sF^{\bW}$-local martingale optimizer furnished by Theorem~\ref{hodau0V7QJ} at initial covariance $\Sigma$.

\begin{proposition}
\label{TUpmwPhTpr}
The map $G^*\colon\cS^{n}_{++}\to\dR$ is continuously differentiable, with
\begin{equation}
\label{eq:G-star-grad}
D_\Sigma G^*(\Sigma) = \frac{1}{2}\bM^{*,0,\Sigma}_0.
\end{equation}
Moreover, the map $\Sigma\mapsto\bM^{*,0,\Sigma}_0$ is continuous on $\cS^{n}_{++}$.
\end{proposition}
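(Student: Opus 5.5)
We view $G^*$ as the value function of a parametrized family of convex minimizations and read off its derivative from an envelope (Danskin-type) argument. The key structural fact is that $\Sigma$ enters $G_\Sigma$ only through the affine term $\tfrac12\tr{\Sigma\bM_0}$. Hence
\[
G^*(\Sigma)=\inf_{\bM}\Big\{\tfrac12\tr{\Sigma\bM_0}+\tfrac12\E{\textstyle\int_0^T\tr{\bfsigma_s^2\bM_s^{-1}}\dif s}\Big\},
\]
where by Theorem~\ref{hodau0V7QJ} the infimum over $\cM$ coincides with the value at the local-martingale optimizer $\bM^{*,0,\Sigma}$ (take the closure of $\cM$ in the sense of that compactness argument). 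As an infimum of affine functions of $\Sigma$, $G^*$ is concave and finite on $\cS^n_{++}$, hence locally Lipschitz.

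\emph{Step 1 (supergradient).} For any $\Sigma,\Sigma'\in\cS^n_{++}$, use $\bM^{*,0,\Sigma}$ as a competitor at $\Sigma'$. This gives
\[
G^*(\Sigma')\le G_{\Sigma'}(\bM^{*,0,\Sigma})=G^*(\Sigma)+\tfrac12\tr{(\Sigma'-\Sigma)\bM^{*,0,\Sigma}_0}.
\]
Thus $\tfrac12\bM^{*,0,\Sigma}_0$ belongs to the superdifferential of the concave function $G^*$ at $\Sigma$. One point needs care: when $\bM^{*,0,\Sigma}$ lies outside $\cM$, the competitor inequality must be justified by approximating with elements of $\cM$ (the optimizing sequence from Theorem~\ref{hodau0V7QJ}), using that $\bM^{(k)}_0\to\bM^{*,0,\Sigma}_0$. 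This holds because $\sF_0^{\bW}$ is trivial, so time-$0$ values are deterministic matrices.

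\emph{Step 2 (uniqueness of supergradients).} A concave function is differentiable at $\Sigma$ iff its superdifferential is a singleton. Suppose $\bA$ is another supergradient. The plan is to use the scaling structure: $\bM\mapsto c\bM$ shows $G_\Sigma$ is not homogeneous, but we compare optimizers along a segment instead. Let $\Sigma_k\to\Sigma$ be points of differentiability (dense by Rademacher/Alexandrov). At each such point the gradient equals $\tfrac12\bM^{*,0,\Sigma_k}_0$. Every supergradient at $\Sigma$ is a convex combination of limits of such gradients, so it suffices to show continuity of $\Sigma\mapsto\bM^{*,0,\Sigma}_0$. Then the superdifferential is the singleton $\{\tfrac12\bM^{*,0,\Sigma}_0\}$, $G^*$ is differentiable, and the gradient is continuous. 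This is exactly the assertion that concave plus continuous selection of supergradients implies $C^1$.

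\emph{Step 3 (continuity, the main obstacle).} Fix $\Sigma_k\to\Sigma$ and write $\bM^k:=\bM^{*,0,\Sigma_k}$. The bound $G^*(\Sigma_k)\le G_{\Sigma_k}(\bI)$ gives $\tr{\Sigma_k\bM^k_0}$ uniformly bounded. Since $\Sigma_k\ge c\bI$ eventually, $\bM^k_0$ is bounded. A lower bound on $\bM^k_0$ follows from the slippage term together with $\bfsigma\ge c\bI$ and Jensen/supermartingale comparison. So any subsequence has a further subsequence with $\bM^k_0\to\bar\bM_0$.

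Applying the compactness argument of Theorem~\ref{hodau0V7QJ} (Koml\'os-type convex combinations plus the matrix Doob--Meyer decomposition), we extract a limit $\bar\bM$ with initial value $\bar\bM_0$. Convexity of $\bM\mapsto\E{\int\tr{\bfsigma^2\bM^{-1}}}$ and Fatou give lower semicontinuity. Continuity of $G^*$ (from concavity) then yields that $\bar\bM$ is optimal at $\Sigma$. By the uniqueness in Theorem~\ref{hodau0V7QJ}, $\bar\bM=\bM^{*,0,\Sigma}$, so $\bar\bM_0=\bM^{*,0,\Sigma}_0$.

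The delicate part is that convex combinations must preserve the initial value; they do, since initial values are deterministic and converge. One must also ensure the limit does not lose mass at time $0$ in the supermartingale limit. I would handle this by noting that the time-$0$ values are fixed constants along the combinations, so they pass to the limit directly.
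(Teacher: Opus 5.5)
This is essentially the paper's proof. Your Step~1 is the paper's upper bound (using $\bM^{*,0,\Sigma}$ as a competitor at the perturbed covariance), and your Step~3 is its compactness-plus-uniqueness argument for continuity of $\Sigma\mapsto\bM^{*,0,\Sigma}_0$. The only difference is that you pass from a continuous selection of supergradients to $C^1$ through the structure theory of superdifferentials of concave functions, whereas the paper gets the matching lower bound directly by also using $\bM^{*,0,\Sigma+\e A}$ as a competitor at $\Sigma$ and pinching the difference quotient.

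One correction to your closing remark: convergence of the deterministic initial values does not by itself stop the Delbaen--Schachermayer limit, which is built from right limits, from jumping down at $t=0$. The right fix is that such a jump would push $G_\Sigma$ of the limit strictly below $\liminf_k G_\Sigma(\tilde\bM^k)\le G^*(\Sigma)$, because $\Sigma\in\cS^n_{++}$. That contradicts optimality, so no jump occurs, a point the paper also leaves implicit.
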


The proof is given in appendix~\ref{app:dual-proofs}.
\pagebreak

\begin{corollary}
\label{cor:V-diff}
For \(\bfSigma\in\cS^n_{++}\), let \(V(0,\bfSigma)\) denote the value of the primal problem
\eqref{OxpDCjjac4} with initial covariance \(\bfSigma\) in place of \(\bC\). Suppose the martingale
dual condition (Definition~\ref{def:MDC}) holds for the data
\((\bfSigma,(\bfsigma_t)_{0\le t\le T})\) for every \(\bfSigma\) in an open neighborhood
\(\cO\subset\cS^{n}_{++}\) of \(\bC\). Then \(V(0,\cdot)\) is continuously differentiable on
\(\cO\), with
\begin{equation}
D_{\bfSigma} V(0,\bfSigma)
=
\frac{1}{2}\bM^{*,0,\bfSigma}_0,
\qquad \bfSigma\in\cO.
\end{equation}
\end{corollary}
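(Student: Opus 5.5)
The plan is to identify the primal value with the dual value on $\cO$ and then read off the derivative from Proposition~\ref{TUpmwPhTpr}. Fix $\bfSigma\in\cO$. By hypothesis MDC holds for the data $(\bfSigma,(\bfsigma_t)_{0\le t\le T})$. Corollary~\ref{cor:strong-duality}, applied with $\bfSigma$ in place of $\bC$, then gives strong duality. Its saddle point is $(\dot{\bfSigma}^*,\bM^{*,0,\bfSigma})$, with $\dot{\bfSigma}^*_t=-(\bM^{*,0,\bfSigma}_t)^{-1}\bfsigma_t^2(\bM^{*,0,\bfSigma}_t)^{-1}$.

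The first step is to check that the common value in \eqref{9CcesNg5w6} really is the primal value $V(0,\bfSigma)$. By weak duality, for any $\dot{\bfSigma}\in\cA(\bfSigma)$ and any $\bM\in\cM$, the pairing $\tfrac12\tr{\bfSigma\bM_0}+\tfrac12\E{\int_0^T\tr{\bM_t\dot{\bfSigma}_t}\dif t}=\tfrac12\E{\tr{\bM_T\bfSigma_T}}$ is nonnegative. Hence $L(\dot{\bfSigma},\bM)$ bounds the primal objective from above, so $V(0,\bfSigma)\le G_{\bfSigma}(\bM)$ for every $\bM\in\cM$.

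The reverse inequality comes from $\dot{\bfSigma}^*$. Proposition~\ref{prop:MDC-terminal-covariance} shows that $\dot{\bfSigma}^*$ drives the covariance exactly to zero, so it lies in $\cA(\bfSigma)$ and the pairing term vanishes. Next, $\sup_{\dot{\bfSigma}}L(\cdot,\bM^*)$ is attained at $\dot{\bfSigma}^*$; this is the pointwise maximization in Proposition~\ref{wn0NtKxptx}, where $\tr{\sqrt{-\bfsigma\dot{\bfSigma}\bfsigma}}+\tfrac12\tr{\bM\dot{\bfSigma}}$ is maximized at $-\bM^{-1}\bfsigma^2\bM^{-1}$ with value $\tfrac12\tr{\bfsigma^2\bM^{-1}}$. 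Therefore the primal objective at $\dot{\bfSigma}^*$ equals $G_{\bfSigma}(\bM^{*,0,\bfSigma})=G^*(\bfSigma)$. Combining both directions gives $V(0,\bfSigma)=G^*(\bfSigma)$ for every $\bfSigma\in\cO$.

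Since $\cO$ is open and $V(0,\cdot)$ coincides with $G^*$ there, Proposition~\ref{TUpmwPhTpr} gives that $V(0,\cdot)$ is $C^1$ on $\cO$ with $D_{\bfSigma}V(0,\bfSigma)=\tfrac12\bM^{*,0,\bfSigma}_0$. Continuity of the gradient is the continuity of $\bfSigma\mapsto\bM^{*,0,\bfSigma}_0$ asserted in that proposition.

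The main obstacle is the identification $V(0,\bfSigma)=G^*(\bfSigma)$, and it has two delicate points.

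\begin{itemize}
\item \emph{Which dual value.} $G^*$ is defined through the local-martingale optimizer $\bM^{*,0,\bfSigma}$, and Theorem~\ref{hodau0V7QJ} gives only $g(\bM^{*,0,\bfSigma})\le\inf_{\cM}g$. Under MDC, however, $\bM^{*,0,\bfSigma}\in\cM$, so equality holds, and weak duality may be applied to it.
\item \emph{Admissibility of the optimal velocity.} One must verify $\dot{\bfSigma}^*\in\cA$, that is, integrability of $\|\dot{\bfSigma}^*\|$ and of the objective. The first follows from the trace identity $\E{\int\tr{-\dot{\bfSigma}^*_t}\dif t}=\tr{\bfSigma}$ given by Proposition~\ref{prop:MDC-terminal-covariance}. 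The second follows from $\tr{\sqrt{-\bfsigma\dot{\bfSigma}^*\bfsigma}}=\tr{\bfsigma^2\bM^{-1}}$, which is finite because $g(\bM^*)<\infty$.
\end{itemize}
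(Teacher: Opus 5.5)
Your proposal is correct and follows the paper's own route: strong duality (Corollary~\ref{cor:strong-duality}) at each $\bfSigma\in\cO$ gives $V(0,\bfSigma)=G^*(\bfSigma)$, and Proposition~\ref{TUpmwPhTpr} then yields the $C^1$ property and the gradient formula. The paper's proof simply asserts the identification $V(0,\bfSigma)=G^*(\bfSigma)$, whereas you verify it explicitly by weak duality and by showing that $\dot{\bfSigma}^*$ attains the dual value.
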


The proof is given in appendix~\ref{app:dual-proofs}.

\section{Verification and examples}
\label{x854swjFhd}

Theorem~\ref{I3YrvmnwLc} produces an equilibrium under MDC. We now identify the main settings in which the condition can be verified and recover familiar Kyle--Back equilibria as special~cases.

\subsection{Kyle 1985}
Fix $\sigma > 0$ constant and take $T = 1$. The primal problem \eqref{OxpDCjjac4} reads
\begin{equation}
\max_{\dot\Sigma \leq 0}\int_0^1 \sigma \sqrt{-\dot\Sigma(t)}\dif t,
\end{equation}
where the maximization is taken over all differentiable functions $\Sigma(t)$ satisfying $\Sigma(0)=\Sigma_{0}$ and $\Sigma(1)=0$.

We turn to standard calculus of variations. The optimal ${\Sigma}^*$ if it exists satisfies the Euler-Lagrange equation
\begin{equation}
\label{sPbl9R6sQb}
\del{-\dot{\Sigma}^*(t)}^{-1/2} = \rm{constant}
\end{equation}
and the particular solution which satisfies $\Sigma^*(0)=\Sigma_{0}$ and $\Sigma^*(1)=0$ is
\begin{equation}
\label{4QLrsXZhgR}
\Sigma^*(t) = \Sigma_{0}\del{1 - t}.
\end{equation}
By \eqref{BUSu1MNQ9m}, the solution to the dual problem \eqref{RUHFk3mIp0} is
\begin{equation}
\label{ZHLQzC93um}
M^*_t = \frac{\sigma}{\sqrt{\Sigma_0}}.
\end{equation}
The candidate $M^*$ is constant, hence a true martingale and an interior minimizer; since $\Sigma^*(1)=0$ by \eqref{4QLrsXZhgR}, the terminal covariance identity holds and so does MDC. Theorem~\ref{I3YrvmnwLc} may therefore be invoked. It tells us that \eqref{4QLrsXZhgR} is the filtering variance and that \eqref{ZHLQzC93um} is the market depth corresponding to some equilibrium. This equilibrium has the pricing rule
\begin{align}
\lambda^*_t &= \frac{\sqrt{\Sigma_0}}{\sigma} \\
H^*(t,\xi) &= p_0 + \xi
\end{align}
as given by \eqref{CrLPP8jKQG} and \eqref{2vGs8ogSts}. The price process \eqref{SWjcJYb4sK} is
$$
P^*_t = p_0 + \frac{\sqrt{\Sigma_0}}{\sigma} Y_t,
$$
and the insider's trading strategy \eqref{ud3k6AlV7q} is
$$
\dif X^*_t = \frac{\sigma}{\sqrt{\Sigma_0}}\frac{\tilde{v} - P^*_t}{1 - t} \dif t.
$$
Thus, we recover the equilibrium from \cite{kyle}.

\subsection{Back and Pedersen 1998, static information}
Let $\sigma(t)$ be time dependent but deterministic. The primal problem reads
\begin{equation}
\label{BAomThdW31}
\max_{\dot{\Sigma} \leq 0} \int_0^T\sigma(t)\sqrt{-\dot{\Sigma}(t)}\dif t.
\end{equation}
The optimal ${\Sigma}^*$ if it exists satisfies the Euler-Lagrange equation
\begin{equation}
\sigma(t)\del{-\dot\Sigma^*(t)}^{-1/2} = \rm{constant}.
\end{equation}
Imposing the conditions $\Sigma^*(0) = \Sigma_0$ and $\Sigma^*(T)=0$, the solution is
\begin{equation}
\Sigma^*(t) = \Sigma_0 \frac{\int_t^T \sigma(s)^{2}\dif s}{\int_0^T \sigma(s)^{2}\dif s}.
\end{equation}
Thus, the optimal $\dot\Sigma^*$ is
\begin{equation}
\label{GlDv15Ztsy}
\dot\Sigma^*(t) = -\Sigma_0 \frac{\sigma(t)^{2}}{\int_0^T\sigma(s)^{2}\dif s},
\end{equation}
and the optimal $M^*$ by \eqref{BUSu1MNQ9m} is
\begin{equation}
M^* = \sqrt{\frac{\int_0^T\sigma(s)^{2}\dif s}{\Sigma_0}}.
\end{equation}
Since $M^*$ is constant, it is a true martingale and an interior minimizer, and $\Sigma^*(T)=0$ gives the terminal covariance identity, so MDC holds; Theorem~\ref{I3YrvmnwLc} then asserts that an equilibrium exists. The equilibrium trading strategy \eqref{ud3k6AlV7q} is
\begin{equation}
\dif X_t^* = \sqrt{\frac{\int_0^T\sigma(s)^{2}\dif s}{\Sigma_0}}\sigma(t)^2 \frac{\tilde{v} - P^*_t}{\int_t^{T}\sigma(s)^{2} \dif s} \dif t
\end{equation}
and the equilibrium price process \eqref{SWjcJYb4sK} is
\begin{equation}
P^*_t = p_0 + \sqrt{\frac{\Sigma_0}{\int_0^T\sigma(s)^{2}\dif s}}Y^*_t.
\end{equation}
These coincide with equations (8) and (9) of \cite{bp} when $\sigma_S(t) = 0$ (in their notation). Thus we recover their equilibrium in the case of static information.

\subsection{Back, Cocquemas, Ekren, and Lioui 2020}
Back, Cocquemas, Ekren, and Lioui \cite{cel2020} construct a multidimensional Kyle--Back equilibrium through an optimal-transport formulation. In the risk-neutral Gaussian and constant-volatility specialization relevant here, their equilibrium has constant matrix price impact. We recover this benchmark by showing that the present variational problem selects the linear covariance path and a constant martingale depth.

Fix $\bfsigma > 0$ and take $T=1$. The primal problem \eqref{OxpDCjjac4} reads
\begin{equation}
\label{menT3otkFg}
\max_{\dot\bfSigma \leq 0}\int_0^T \tr{\sqrt{-\bfsigma \dot\bfSigma(t) \bfsigma}} \dif t
\end{equation}
where the maximization is taken over all differentiable matrix functions satisfying $\bfSigma(0)=\bC$ and $\bfSigma(T)=\bf{0}$.

\begin{proposition}
The solution to \eqref{menT3otkFg} is
\begin{equation}
\label{ui5XZD4G8F}
\bfSigma^*(t) = \bC\del{1 - \frac{t}{T}}.
\end{equation}
\end{proposition}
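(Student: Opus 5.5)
Since $\bfsigma$ is a constant positive definite matrix, we will prove the claim by weak duality. We exhibit a constant dual multiplier $\bM^*$ and verify three things: that $\dot{\bfSigma}^*=-\bC/T$ is feasible, that it attains the dual value $g(\bM^*)$, and that $g(\bM^*)$ bounds every feasible primal value from above. Primal optimality then follows with no calculus of variations. In the spirit of \eqref{Rwq3zFmuoG} we want $(\bM^*)^{-1}\bfsigma^2(\bM^*)^{-1}=\bC/T$. This is the matrix equation $\bM\bC\bM = T\bfsigma^2$, whose unique positive definite solution is
\[
\bM^* = \bC^{-1/2}\del{\bC^{1/2}\, T\bfsigma^2\, \bC^{1/2}}^{1/2}\bC^{-1/2}.
\]
Uniqueness comes from the usual Riccati/geometric-mean argument: $\bC^{1/2}\bM\bC^{1/2}$ is the positive square root of $T\bC^{1/2}\bfsigma^2\bC^{1/2}$. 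Being constant, $\bM^*$ is a true martingale lying in $\cM$, and it is interior. It follows that $\dot{\bfSigma}^*_t=-(\bM^*)^{-1}\bfsigma^2(\bM^*)^{-1}=-\bC/T$, which integrates to $\bfSigma^*(t)=\bC(1-t/T)$ with $\bfSigma^*(T)=\mathbf 0$.

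\emph{Step 1 (weak duality).} Take any feasible $\dot{\bfSigma}\le 0$ with $\bC+\int_0^T\dot{\bfSigma}\,\dif t\ge 0$, and any constant $\bM\in\cS^n_{++}$. Then $\tr{\bM(\bC+\int_0^T\dot{\bfSigma}\,\dif t)}\ge0$, so the objective is at most $L(\dot{\bfSigma},\bM)$. Next we need the pointwise bound $\tr{\sqrt{-\bfsigma\dot{\bfSigma}\bfsigma}}+\tfrac12\tr{\bM\dot{\bfSigma}}\le \tfrac12\tr{\bfsigma^2\bM^{-1}}$, which is the computation behind Proposition~\ref{wn0NtKxptx} and which we take as given. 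Concretely, put $\bA=\sqrt{-\bfsigma\dot{\bfSigma}\bfsigma}$ and $\bN=\bfsigma^{-1}\bM\bfsigma^{-1}$. The left side then equals $\tr{\bA}-\tfrac12\tr{\bN\bA^2}$. This is maximized at $\bA$ solving $\tfrac12(\bN\bA+\bA\bN)=\bI$, i.e.\ $\bA=\bN^{-1}$, where it takes the value $\tfrac12\tr{\bN^{-1}}=\tfrac12\tr{\bfsigma^2\bM^{-1}}$. Integrating in time gives objective $\le g(\bM)$.

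\emph{Step 2 (attainment).} For $\bM=\bM^*$ we compute the primal value of $\dot{\bfSigma}^*$ directly. Here $-\bfsigma\dot{\bfSigma}^*\bfsigma = \bfsigma(\bM^*)^{-1}\bfsigma^2(\bM^*)^{-1}\bfsigma = (\bfsigma(\bM^*)^{-1}\bfsigma)^2$, and $\bfsigma(\bM^*)^{-1}\bfsigma$ is positive definite, so the square root equals $\bfsigma(\bM^*)^{-1}\bfsigma$. The integrand is therefore $\tr{\bfsigma^2(\bM^*)^{-1}}$, and the primal value is $T\,\tr{\bfsigma^2(\bM^*)^{-1}}$. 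On the dual side, $g(\bM^*)=\tfrac12\tr{\bC\bM^*}+\tfrac T2\tr{\bfsigma^2(\bM^*)^{-1}}$. Using $\bC=T(\bM^*)^{-1}\bfsigma^2(\bM^*)^{-1}$ we get $\tr{\bC\bM^*}=T\tr{\bfsigma^2(\bM^*)^{-1}}$, so $g(\bM^*)$ coincides with the primal value. Combined with Step 1, this proves optimality of \eqref{ui5XZD4G8F}.

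The main obstacle is the matrix inequality in Step 1, because $\bA$ and $\bN$ need not commute. We would handle it by writing
\[
\tr{\bA}-\tfrac12\tr{\bN\bA^2} = \tfrac12\tr{\bN^{-1}} - \tfrac12\tr{(\bA-\bN^{-1})\bN(\bA-\bN^{-1})},
\]
which is verified by expanding the right side with cyclicity of the trace. The last trace is nonnegative because $\bN>0$. This identity also gives uniqueness: equality forces $\bA=\bN^{-1}$ a.e., i.e.\ $\dot{\bfSigma}=\dot{\bfSigma}^*$. Finally, the problem as stated in \eqref{menT3otkFg} imposes the terminal condition $\bfSigma(T)=\mathbf 0$ as an equality. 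This is only a stronger constraint than $\bfSigma_T\ge 0$, and $\dot{\bfSigma}^*$ satisfies it, so the argument applies unchanged.
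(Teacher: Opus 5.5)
Your proof is correct, and it takes a genuinely different route from the paper's.

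\textbf{The paper's route.} It argues by the calculus of variations. It perturbs $\bfSigma^*$ by $\varepsilon\bfeta$ with $\int_0^T\dot{\bfeta}\,\dif t={\bf 0}$ and computes the directional derivative of $J$ using the integral formula for the Fr\'echet derivative of the matrix square root from \cite{moralniclas2018}. The resulting matrix $\tfrac12\bfsigma(-\bfsigma\dot{\bfSigma}^*\bfsigma)^{-1/2}\bfsigma$ is constant in $t$, because $\dot{\bfSigma}^*=-\bC/T$ is. The derivative therefore reduces to a trace against $\int_0^T\dot{\bfeta}\,\dif t={\bf 0}$. Concavity of $J$ then upgrades stationarity to global optimality.

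\textbf{Your route.} You exhibit a dual certificate instead: the constant multiplier $\bM^*$ with $\bM^*\bC\bM^*=T\bfsigma^2$, together with the completion-of-squares identity in $\bA=\sqrt{-\bfsigma\dot{\bfSigma}\bfsigma}$. The two arguments are the primal and dual faces of the same computation. By your uniqueness argument, $\bM^*=\sqrt{T}\,\bfsigma(\sqrt{\bfsigma\bC\bfsigma})^{-1}\bfsigma$. This is exactly twice the constant matrix in the paper's first-order condition, and it is the paper's formula for $\bM^*$ at $T=1$.

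\textbf{What your route buys.}
\begin{enumerate}[label=(\roman*),font=\normalfont]
\item It avoids the matrix-square-root derivative.
\item It avoids a step the paper leaves implicit: extending the vanishing directional derivative from smooth perturbations to arbitrary feasible competitors.
\item It gives uniqueness for free, since equality forces $\bA_t=\bN^{-1}$ for a.e.\ $t$.
\item It produces the optimal value $\sqrt{T}\,\tr{\sqrt{\bfsigma\bC\bfsigma}}$ and checks the saddle point of Corollary~\ref{cor:strong-duality} concretely.
\item Because the multiplier is constant and the bound holds pathwise, it covers the relaxed constraint $\bfSigma_T\geq{\bf 0}$ of \eqref{OxpDCjjac4}, and even adapted random controls. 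The paper's perturbation class, by contrast, builds in the equality $\bfSigma(T)={\bf 0}$.
\end{enumerate}

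\textbf{What the paper's route buys.} It requires no guess of the multiplier, and it mirrors the Euler--Lagrange derivations of the scalar examples.
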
 
Note that \eqref{ui5XZD4G8F} is the matrix analog of \eqref{4QLrsXZhgR}. We state it as a proposition because its proof may not be trivial to the reader.

The proof is deferred to appendix~\ref{app:verification-proofs}.
According to \eqref{1zwfZY3Cld}, the optimal $\bM^*$ is
$$
\bM^* = \bfsigma\del{\sqrt{\bfsigma\bC\bfsigma}}^{-1}\bfsigma.
$$
This is a constant (positive-definite) matrix, hence a true martingale and an interior minimizer; with $T=1$ the computation $\int_0^1(\bM^*)^{-1}\bfsigma^2(\bM^*)^{-1}\dif t=\bC$ gives the terminal covariance identity, so MDC holds. Theorem~\ref{I3YrvmnwLc} applies and we recover the equilibrium 
\begin{align*}
\bfH(t,\by) &= \bp_0 + \bfLambda\by \\
\dif\bX_t &= \frac{\bfLambda^{-1}\del{\vt - \bp_0}-\bY_t}{T-t}\dif t
\end{align*}
where $\bfLambda = \bfsigma^{-1}\del{\sqrt{\bfsigma\bC\bfsigma}}\bfsigma^{-1}$.

\subsection{Collin-Dufresne and Fos 2016, deterministic growth}
Let $\sigma_t$ be a one-dimensional diffusion with dynamics
\begin{equation}
\label{dpc20W9KFz}
\dif \sigma_t = \sigma_t m(t) \dif t + \sigma_t \nu(t,\sigma_t) \dif W_t,
\end{equation}
where $m(t)$ is deterministic.

We present two ways to recover this equilibrium, one by solving the primal problem and the other by solving the dual problem.

\subsubsection{The primal problem}
The primal problem reads
\begin{equation}
\label{dSsYjBUGdd}
\max_{\dot{\Sigma}<0}\E{\int_0^T\sigma_t\sqrt{-\dot\Sigma(t)}\dif t}.
\end{equation}
First note that $\sigma_t$ has the representation
\begin{equation}
\sigma_t = \sigma_0\exp\!\del{\int_0^t m(s) \dif s}\exp\!\del{-\int_0^t \frac{\nu(s,\sigma_s)^2}{2}\dif s + \int_0^t\nu(s,\sigma_s)\dif W_s}.
\end{equation}
This motivates the Radon Nikodym derivative
\begin{equation}
\frac{\dif\dQ}{\dif\dP} = \exp\!\del{-\int_0^T \frac{\nu(s,\sigma_s)^2}{2}\dif s + \int_0^T\nu(s,\sigma_s)\dif W_s},
\end{equation}
where $\dP$ is the probability measure in the expectation \eqref{dSsYjBUGdd}. We assume $\nu(s,\sigma_s)$ satisfies the Novikov condition
$$
\E{\exp\!\del{\frac{1}{2}\int_0^T \nu(s,\sigma_s)^2 \dif s}} < \infty.
$$
Once this assumption is granted, we may write \eqref{dSsYjBUGdd} as
\begin{equation}
\label{7xqVkCTVPT}
\max_{\dot{\Sigma}<0}\E{\int_0^T\sigma_t\sqrt{-\dot\Sigma(t)}\dif t} = \max_{\dot{\Sigma}<0}\mathbb{E}_{\dQ}\!\sbr{\int_0^T\sigma_0\exp\!\del{\int_0^t m(s)\dif s}\sqrt{-\dot\Sigma(t)}\dif t}.
\end{equation}
Note that $m(s)$ is deterministic, and so is not affected by the change of measure. The key feature of \eqref{7xqVkCTVPT} is that the RHS contains no random data, and so the problem \eqref{dSsYjBUGdd} may be solved using classical calculus of variations. In fact, writing \eqref{7xqVkCTVPT} in the form
\begin{equation*}
\max_{\dot{\Sigma}<0}\E{\int_0^T\sigma_t\sqrt{-\dot\Sigma(t)}\dif t} = \int_0^T\sigma_0\exp\!\del{\int_0^t m(s)\dif s}\sqrt{-\dot\Sigma(t)}\dif t,
\end{equation*}
we see that the RHS is of the form \eqref{BAomThdW31} with $\sigma_0\exp\!\del{\int_0^t m(s)\dif s}$ in place of $\sigma(t)$. The optimal $\dot\Sigma^*$, according to \eqref{GlDv15Ztsy}, is
\begin{equation}
\dot\Sigma^*(t) = - \Sigma_0 \frac{\exp\!\del{2\int_0^t m(s) \dif s}}{\int_0^T \exp\!\del{2\int_0^t m(s) \dif s} \dif t}.
\end{equation}
According to \eqref{BUSu1MNQ9m}, the optimal $M^*$ is
\begin{equation}
\label{2BhBl8nIM3}
M^*_t = \sigma_t\exp\!\del{-\int_0^t m(s) \dif s}\sqrt{\frac{1}{\Sigma_0}\int_0^T\exp\!\del{2\int_0^t m(s)\dif s}\dif t}.
\end{equation}

\subsubsection{The dual problem}

The dual problem \eqref{RUHFk3mIp0} reads
\begin{equation}
\label{hivcWPez1c}
\min_{M}\del{\frac{1}{2}\Sigma_0 M_0 + \frac{1}{2}\E{\int_0^T \frac{\sigma_t^2}{M_t}\dif t}}
\end{equation}
where the minimization is over all positive martingales $(M_t)_{0\le t\le T}$. If $M_t$ evolves according to
\begin{equation}
\label{dv5llapEmJ}
\dif M_t = \alpha_t M_t \dif W_t
\end{equation}
for some $\sF^{W}$ adapted $\alpha_t$, then due to \eqref{dpc20W9KFz}, the integrand in \eqref{hivcWPez1c} may be written
$$
\frac{\sigma^2_t}{M_t} = \frac{\sigma^2_0}{M_0}\exp\!\del{\int_0^t \del{2m(s)- \nu(s,\sigma_s)^2+\frac{\alpha_s^2}{2}}\dif s + \int_0^t\del{2\nu(s,\sigma_s)-\alpha_s}\dif W_s}.
$$
We introduce the measure $\dQ$ through the RN derivative
$$
\dod{\dQ}{\dP} = \exp\!\del{-\frac{1}{2}\int_0^t\del{2\nu(s,\sigma_s)-\alpha_s}^2 \dif s +\int_0^t\del{2\nu(s,\sigma_s)-\alpha_s}\dif W_s},
$$
and we compute
\begin{equation}
\label{XX7xhP6WSt}
\E{\frac{\sigma^2_t}{M_t}} = \dE_{\dQ}\!\sbr{\frac{\sigma^2_0}{M_0}\exp\!\del{\int_0^t 2m(s)- \del{\alpha_s - \nu(s,\sigma_s)}^2\dif s}}.
\end{equation}
The RHS integrand is minimized pointwise when
\begin{equation}
\label{zPuY9uxVIw}
\alpha_s = \nu(s,\sigma_s),
\end{equation}
in which case \eqref{XX7xhP6WSt} reduces to
$$
\E{\frac{\sigma^2_t}{M_t}} = \frac{\sigma^2_0}{M_0}\exp\!\del{2\int_0^t m(s)\dif s}.
$$
We substitute this into \eqref{hivcWPez1c} through an invocation of the tower property. When this is done, the minimization problem reduces to
$$
\min_{M_0}\del{\frac{1}{2}\Sigma_0 M_0 + \frac{1}{2}\frac{\sigma^2_0}{M_0}\int_0^T\exp\!\del{2\int_0^t m(s)\dif s}\dif t},
$$
and the optimal $M_0$ is given by
$$
M_0 = \sigma_0\sqrt{\frac{1}{\Sigma_0}\int_0^T\exp\!\del{2\int_0^t m(s)\dif s}\dif t}.
$$
In light of \eqref{dv5llapEmJ} and \eqref{zPuY9uxVIw}, we see now that the optimal martingale $(M_t)_{0\le t\le T}$ is
\begin{align*}
M_t &= \sigma_0\exp\!\del{-\int_0^t\frac{\nu(s,\sigma_s)^2}{2}\dif s + \int_0^t\nu(s,\sigma_s) \dif W_s}\sqrt{\frac{1}{\Sigma_0}\int_0^T\exp\!\del{2\int_0^t m(s)\dif s}\dif t} \\
 &= \sigma_t\exp\!\del{-\int_0^t m(s) \dif s}\sqrt{\frac{1}{\Sigma_0}\int_0^T\exp\!\del{2\int_0^t m(s)\dif s}\dif t}.
\end{align*}
By construction $(M_t)_{0\le t\le T}$ is a positive martingale; the assumed Novikov condition makes it a true martingale, and under the assumption that $\sigma$ is bounded above and below away from zero, \eqref{2BhBl8nIM3} shows that $M$ inherits these bounds and is therefore an interior minimizer. The terminal covariance identity holds by construction, so MDC holds, and Theorem~\ref{I3YrvmnwLc} asserts that $(M_t)_{0\le t\le T}$ represents the market depth in some equilibrium. The price impact of this same equilibrium is
\begin{equation}
\lambda_t = \frac{1}{M_t} = \frac{1}{\sigma_t} \exp\!\del{\int_0^t m(s) \dif s}\sqrt{\frac{\Sigma_0}{\int_0^T\exp\!\del{2\int_0^t m(s)\dif s}\dif t}},
\end{equation}
which is precisely (25) of \cite{cdf}.

\subsection{Scalar stochastic liquidity}
\label{sec:1D-verify}

For one-dimensional stochastic liquidity, the dual problem reduces to a scalar stochastic control problem, which we now solve via the stochastic maximum principle.

First, consider \eqref{RUHFk3mIp0} with $M_t$, $\sigma_t$, and $\Sigma_0$ all one dimensional. We split the optimization over $\del{M_t}_{0\leq t\leq T}$ into two steps: $\del{M_t}_{0<t\leq T}$ and $M_0$, in that order:
\begin{equation}
\label{oqehc4eMC6}
\inf_{M_0}\del{\frac{1}{2}\Sigma_0 M_0 + \frac{1}{2}\inf_{M_{(0,T]}}\E{\int_0^T \sigma^2_t M^{-1}_t\dif t}}
\end{equation}
The first step renders the objective value into a coercive function of $M_0$.

We will assign $M_t$ and $\sigma_t$ the dynamics
\begin{align}
\label{YNDdAtxo0M} \dif M_t &= \alpha_t M_t \dif W_t \\
\label{wZMHa6ElQE} \dif \sigma_t &= \sigma_t m_t \dif t + \sigma_t \nu_t \dif W_t
\end{align}
with $M_0$ and $\sigma_0$ fixed. Here, $m_t$ and $\nu_t$ are exogenously given processes which obey some regularity conditions. Introducing the state variable $S_t = \sigma^2_t M_t^{-1}$, the optimization over $(M_t)_{0<t\leq T}$ reformulates as the control problem
\begin{equation}
\label{ywfcmw769r}
\inf_{\alpha} \; \E{\int_0^T S_t \dif t}.
\end{equation}
$S_t$ is a controlled state variable with dynamics
\begin{equation}
\label{HuVBZsy9IR}
\dif S_t = \del{2 m_t + \del{\alpha_t - \nu_t}^2}S_t \dif t + (2\nu_t - \alpha_t)S_t \dif W_t
\end{equation}
and $S_0 = \sigma^2_0 M_0^{-1}> 0$ is fixed.

The first adjoint equation \cite[equation 3.8]{yongzhou} reads
\begin{align}
\label{g4DMsIzZOL}
\dif G_t &= -\del{\del{2m_t + \del{\alpha^*_t - \nu_t}^2}G_t + \del{2\nu_t - \alpha^*_t}U_t - 1}\dif t + U_t \dif W_t \\
G_T &= 0
\end{align}
and the second \cite[equation 3.9]{yongzhou}
\begin{align}
\label{mPcXX0DtKj}
\dif g_t &= -\del{\del{4m_t + 6\nu_t^2 + 3(\alpha^*_t)^2 - 8\alpha^*_t\nu_t}g_t + \del{4\nu_t - 2\alpha^*_t}u_t}\dif t + u_t \dif W_t \\
\label{6Ev69iHvlK} g_T &= 0.
\end{align}
The solution to \eqref{mPcXX0DtKj}--\eqref{6Ev69iHvlK} is $\del{g_t, u_t} = (0,0)$. The $\cH$ function \cite[equation 3.16]{yongzhou} therefore coincides with the Hamiltonian  \cite[equation 3.10]{yongzhou}
\begin{equation}
\label{RBOLLfNs9K}
H(t,s,\alpha,G,U) = \del{2m_t + \nu_t^2 + \alpha^2 - 2\alpha\nu_t}sG + \del{2\nu_t - \alpha}sU - s.
\end{equation}
The stochastic maximum principle \cite[theorem 3.2]{yongzhou} asserts that the optimal $\bar{\alpha}$ is a pointwise minimizer of $H\del{t,\bar{X}_t, \alpha, G_t, U_t}$. We find
\begin{equation}
\label{XXZtsnLrMz}
\alpha^*_t = \nu_t + \frac{1}{2}\frac{U_t}{G_t}.
\end{equation}
Substituting this into \eqref{g4DMsIzZOL}, we obtain
\begin{align}
\label{3JfVRAFuX7} \dif G_t &= \del{\frac{U_t^2}{4G_t} - \nu_t U_t - 2m_t G_t +1}\dif t + U_t \dif W_t \\
\label{NyunRhmZ9w} G_T &= 0
\end{align}
If we define the probability measure $\dP_{\nu}$ through the Radon-Nikodym derivative
$$
\frac{\dif \dP_{\nu}}{\dif \dP} = \exp\!\del{-\frac{1}{2}\int_0^T \nu_t^2 \dif t + \int_0^T \nu_t \dif W_t}
$$
then we may write \eqref{3JfVRAFuX7} as
\begin{equation}
\label{jsYH8ONBHw}
\dif G_t = \del{\frac{U_t^2}{4G_t}-2m_t G_t + 1}\dif t + U_t \dif W^{\nu}_t
\end{equation}
where $W^{\nu}_t = W_t - \int_0^t\nu_s \dif s$ is a $\dP_{\nu}$ Brownian motion by Girsanov's theorem. Furthermore, we may multiply both sides of the above by the integrating factor $\exp\!\del{\int_0^t m_s \dif s}$ and change variables to
\begin{align}
\tilde G_t&= -e^{2\int_0^t m_s ds}G_t \\
\tilde U_t&= -e^{2\int_0^t m_s \dif s}U_t.
\end{align}
Then, the BSDE \eqref{NyunRhmZ9w}, \eqref{jsYH8ONBHw} becomes
\begin{align}
\dif \tilde G_t &= \del{\frac{\tilde U_t^2}{4\tilde G_t}  -e^{2\int_0^t m_s \dif s}}\dif t +\tilde U_t \dif W^{\nu}_t\\
\tilde G_T &= 0,
\end{align}
which is the same as that appearing in \cite[proposition 2.1]{emz}. Invoking this result, we obtain the pair $\del{G_t, U_t}$. The control \eqref{XXZtsnLrMz} is well defined, and we can substitute it into \eqref{HuVBZsy9IR} to obtain the optimal state trajectory 
\begin{equation}
\label{QHucANH16T}
\bar{S}_t = \sigma_0^2 M_{0}^{-1} \exp\!\del{\int_0^t \del{2m_s - \frac{\nu^2_s}{2} + \nu_s\frac{U_s}{2G_s} + \frac{U_s^2}{8G^2_s}} \dif s + \int_0^t \del{\nu_s - \frac{U_s}{2G_s}}\dif W_s}.
\end{equation}
If we substitute \eqref{QHucANH16T} into \eqref{ywfcmw769r}, we will find that the optimal value is of the form $C \sigma_0^2 M_{0}^{-1}$ for some $C>0$. Thus the original objective \eqref{oqehc4eMC6} is coercive in $M_0$, and admits a minimizer. Altogether, the optimal $M_t$ is
\begin{equation}
\bar{M}_t = {\sigma_t^2}/{\bar{S}_t} =  M_0\exp\!\del{- \frac{1}{2}\int_0^t \del{\nu_s + \frac{U_s}{2G_s}}^2 \dif s + \int_0^t \del{\nu_s + \frac{U_s}{2G_s}} \dif W_s}.
\end{equation}

Under the structural assumptions of \cite{emz}, the integrand $\nu_s+U_s/(2G_s)$ lies in $\mathrm{bmo}$, so $\bar M$ is a true $\sF^{\bW}$-martingale. If $\bar M$ is in addition uniformly bounded away from zero---for instance when $\sigma$ is bounded above and below away from zero, as in \cite{cdf}---then $\bar M$ is an interior minimizer, whence Proposition~\ref{prop:MDC-terminal-covariance} gives the terminal covariance identity and MDC is satisfied. In particular, the scalar theory subsumes the models of \cite{kyle, back1992, bp, cdf, emz}.

\subsection{Common eigenbasis}
\label{sec:eigenbasis-verify}
The above result extends to a special case of the multidimensional problem \eqref{RUHFk3mIp0}.

\begin{proposition}
Suppose  $\bC$ and $(\bfsigma_t)$ are diagonalizable along the same basis for all $0 \leq t \leq T$:
\begin{equation}
\label{PTYXvuAbrQ}
\bC = 
\bV
\begin{bmatrix}
\Sigma_0^1 & & \\
& \ddots & \\
& & \Sigma_0^n
\end{bmatrix}
\bV^{\top},
\qquad
\bfsigma_t = \bV
\begin{bmatrix}
\sigma_t^1 & & \\
& \ddots & \\
& & \sigma_t^n
\end{bmatrix}
\bV^{\top}.
\end{equation}
Here, $\bV$ is a fixed orthogonal matrix and, in accordance with \eqref{wZMHa6ElQE}, the diagonal entries $\sigma_t^i$ solve
$$
\dif \sigma_t^i = \sigma_t^i m_t^i \dif t + \sigma_t^i \nu_t^i \dif W_t^i, \qquad i = 1,\dots,n.
$$
With this assumed form on $\bfsigma_t$, the minimizer of \eqref{RUHFk3mIp0} is
\begin{equation}
\label{9mRvQhaTPa}
\bar{\bM}_t = \bV
\begin{bmatrix}
\bar{M}^1_t & & \\
& \ddots & \\
& & \bar{M}^n_t
\end{bmatrix}
\bV^{\top},
\end{equation}
where each $\bar{M}^i_t$ minimizes the one-dimensional problem
$$
\inf_{M^i}\del{\frac{1}{2}\Sigma_0^i M^i_0 + \frac{1}{2}\E{\int_0^T \del{\sigma^i_t}^2 \del{M^i_t}^{-1}\dif t}}.
$$
\end{proposition}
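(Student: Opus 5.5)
The plan is to reduce the matrix problem \eqref{RUHFk3mIp0} to $n$ decoupled scalar problems by rotating into the common eigenbasis, and then to show that the diagonal candidate \eqref{9mRvQhaTPa} cannot be improved. Conjugation by the fixed orthogonal matrix $\bV$ preserves $\cM$, since the map $\bM\mapsto\bV^\top\bM\bV$ commutes with conditional expectation and preserves positive definiteness. It also preserves the objective, so I may assume $\bV=\bI$. Then $\bC=\mathrm{diag}(\Sigma_0^i)$ and $\bfsigma_t^2=\mathrm{diag}((\sigma_t^i)^2)$, and the objective reads
\[
\frac12\sum_i \Sigma_0^i (\bM_0)_{ii}+\frac12\E{\int_0^T\sum_i(\sigma_t^i)^2(\bM_t^{-1})_{ii}\,\dif t}.
\]

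The key inequality is that $(\bM^{-1})_{ii}\ge 1/\bM_{ii}$ for every $\bM\in\cS^n_{++}$. This follows from Cauchy--Schwarz, $1=(\be_i^\top\be_i)^2\le(\be_i^\top\bM\be_i)(\be_i^\top\bM^{-1}\be_i)$, with equality iff $\be_i$ is an eigenvector of $\bM$. Given any $\bM\in\cM$, each diagonal entry $M^i_t:=(\bM_t)_{ii}$ is a strictly positive $\sF^{\bW}$-martingale with integrable terminal value, so it is admissible in the $i$th scalar problem. The objective at $\bM$ is therefore bounded below by $\sum_i J^i(M^i)$, where $J^i$ denotes the scalar objective. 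This sum is in turn bounded below by $\sum_i J^i(\bar M^i)$, which equals the objective at $\bar{\bM}=\mathrm{diag}(\bar M^i)$. Conversely, $\bar{\bM}$ lies in $\cM$: it is diagonal with positive true-martingale entries, and its terminal value is integrable because each $\bar M^i_T$ is. Hence $\bar{\bM}$ attains the infimum.

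I would then invoke the scalar analysis of section~\ref{sec:1D-verify} for each coordinate $i$, using the coordinate dynamics $\dif\sigma^i_t=\sigma^i_t m^i_t\dif t+\sigma^i_t\nu^i_t\dif W^i_t$. That analysis supplies the minimizer $\bar M^i$ as a true martingale. Note that it is driven by $W^i$ alone, but a minimizer over $\sF^{W^i}$-martingales remains a minimizer over $\sF^{\bW}$-martingales. To see this, project any competitor onto $\sF^{W^i}$ by conditional expectation. Jensen's inequality for the convex map $x\mapsto 1/x$, together with the $\sF^{W^i}$-measurability of $\sigma^i$ and the independence of the components of $\bW$, shows that the projection does not increase the cost. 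Uniqueness, if desired, follows from strict convexity of $x\mapsto 1/x$ and of $\bM\mapsto\tr{\bfsigma^2\bM^{-1}}$.

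The main obstacle is this last admissibility step, namely making sure that the scalar optimizers, each defined in its own filtration, are genuinely optimal in the joint filtration $\sF^{\bW}$. If the Jensen projection argument above is too delicate, the fallback is to rerun the stochastic maximum principle of section~\ref{sec:1D-verify} directly with the full $n$-dimensional Brownian motion. In that version the extra $Z$-components of the BSDE vanish by uniqueness.
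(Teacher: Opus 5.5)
Your proof is correct, but it handles the decisive step differently from the paper. The paper also first restricts to $\bV$-diagonal $\bM$, where the objective splits into the $n$ scalar problems. To show that the optimum lies in this class, however, it passes to the primal problem. There it bounds $\tr{\sqrt{-\bfsigma_t\dot{\bfSigma}_t\bfsigma_t}}\le\sum_i\sigma^i_t\sqrt{-\dot\Sigma^i_t}$, with both spectra sorted decreasingly and equality when $\dot{\bfSigma}_t$ shares the eigenbasis of $\bfsigma_t$. It then transfers that eigenbasis to the dual optimizer through the saddle-point relation \eqref{BUSu1MNQ9m}.

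You stay in the dual and compare an arbitrary competitor with its diagonal part in the $\bV$-basis. Three facts do the work: $\tr{\bC\bM_0}$ sees only diagonal entries, those entries are admissible scalar martingales, and $(\bM^{-1})_{ii}\ge 1/\bM_{ii}$ shows that dropping the off-diagonal entries can only lower $\tr{\bfsigma_t^2\bM_t^{-1}}$.

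What your route buys:
\begin{itemize}
\item It is self-contained. It needs neither a primal maximizer nor \eqref{BUSu1MNQ9m}, which the paper obtains only at a saddle point (i.e.\ under MDC).
\item It sidesteps a question the paper leaves implicit: whether the pointwise eigenbasis rearrangement of $\dot{\bfSigma}_t$ preserves the terminal constraint $\bC+\int_0^T\dot{\bfSigma}_t\,\dif t\ge\mathbf 0$.
\item The equality case of Cauchy--Schwarz, with $\sigma^i_t>0$, also shows that every minimizer is $\bV$-diagonal.
\end{itemize}
The paper's route, in return, shows directly that the optimal release rate $-\dot{\bfSigma}_t$ aligns with the noise eigenbasis. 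This also follows from your result via $\dot{\bfSigma}^*_t=-\bar{\bM}_t^{-1}\bfsigma_t^2\bar{\bM}_t^{-1}$.

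Finally, your filtration discussion matters only if the scalar problems are posed over $\sF^{W^i}$. In that case the Jensen projection step needs only that $\sigma^i$ is $\sF^{W^i}$-adapted, which is an extra assumption on $m^i,\nu^i$ not in the statement. Independence of the components of $\bW$ is not needed for that step. Its role is to keep the $\sF^{W^i}$-martingales $\bar M^i$ martingales in $\sF^{\bW}$, so that $\bar{\bM}\in\cM$.
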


The proof is deferred to appendix~\ref{app:verification-proofs}.

\subsection{The general matrix-valued case}
\label{sec:open-problem}

When $\bC$ and $(\bfsigma_t)$ do not share a time-independent eigenbasis, neither of the verification arguments above applies, and we leave open the question of whether MDC holds. We record here the natural fully-coupled FBSDE that arises from applying the stochastic maximum principle (in the spirit of section~\ref{sec:1D-verify}) to the dual problem in this general setting.

Parametrize $\bM_t$ by $\dif \bM_t = \bM_t \tilde{\bfalpha}_t \bM_t\,\dif\bW_t$, so that $\bfLambda_t = \bM_t^{-1}$ satisfies
\[
\dif \bfLambda_t = \tilde{\bfalpha}_t \bfLambda_t^{-1} \tilde{\bfalpha}_t\,\dif t - \tilde{\bfalpha}_t\,\dif\bW_t.
\]
The stochastic maximum principle then yields an adjoint $\tilde{\bG}_t$ with companion process $\tilde{\bU}_t$ satisfying, formally,
\begin{align}
\label{eq:open-FBSDE-G}
\dif \tilde{\bG}_t &= \bM_t \tilde{\bfalpha}^*_t \tilde{\bG}_t \tilde{\bfalpha}^*_t \bM_t\,\dif t + \bfsigma_t^2\,\dif t + \tilde{\bU}_t\,\dif\bW_t, \\
\label{eq:open-FBSDE-M}
\dif \bM_t &= \bM_t \tilde{\bfalpha}^*_t \bM_t\,\dif\bW_t,
\end{align}
where $\tilde{\bfalpha}^*$ is determined pointwise by the Lyapunov equation
\[
\tilde{\bG}\tilde{\bfalpha}^*_t \bM_t + \bM_t \tilde{\bfalpha}^*_t \tilde{\bG} = \tilde{\bU}.
\]
Vectorizing, \eqref{eq:open-FBSDE-G}--\eqref{eq:open-FBSDE-M} is an $n^2$-dimensional fully coupled FBSDE in which (i) the diffusion coefficient of the forward equation depends on the BSDE martingale integrand and (ii) the driver is quadratic in the integrand and singular in the adjoint state. Wellposedness for fully coupled FBSDEs of this type is, to the best of our knowledge, an open problem; the quadratic BSDE techniques of \cite{emz} address (ii) but not (i). We leave both the wellposedness of \eqref{eq:open-FBSDE-G}--\eqref{eq:open-FBSDE-M} and the verification of MDC in the general matrix-valued setting to future~work.

\section{Economic interpretation}
\label{sec:economic-interpretation}

The variational problem treats private information as an inventory. The state variable $\bfSigma$ is the market maker's remaining posterior covariance, and the control $-\dot\bfSigma$ is the instantaneous rate at which the insider releases information into prices. Thus the insider does not merely liquidate shares; she liquidates informational advantage.

The dual martingale $\bM^*$ is the shadow price of this information inventory. Its inverse,
\[
\bfLambda^*=(\bM^*)^{-1},
\]
is the matrix-valued price impact. In one dimension, $M^*$ is exactly market depth. In the stochastic-liquidity model of Collin-Dufresne and Fos, this identifies their auxiliary process $G$ as the costate process associated with the dual problem.

Stochastic liquidity matters because it changes the optimal timing of information release. When noise trading is high, informed trades are better camouflaged, so the optimal release rate increases. In equilibrium, market makers anticipate this timing option, and the martingale condition on market depth compensates the insider for giving up the option to wait for better liquidity states.

\section{Conclusion}
In this paper, we developed a variational reformulation of the Kyle--Back model with stochastic and multidimensional liquidity. By recasting equilibrium as a primal--dual problem over decreasing covariance paths, we identified informed trading with an \emph{optimal liquidation of information}: the insider's control variable is not her share inventory but the rate at which private information is released into the price. This reformulation unifies, within a single convex-analytic and optimal-transport framework, the equilibria of Kyle
\cite{kyle}, Back \cite{back1992}, Back--Pedersen \cite{bp}, Collin-Dufresne--Fos \cite{cdf}, and the multi-asset model of Back--Cocquemas--Ekren--Lioui \cite{cel2020}, each of which we recover by specialization.

The dual problem, formulated in terms of a matrix-valued martingale $\bM$, has two payoffs. First, it gives a clean economic interpretation of the auxiliary process $G$ of \cite{cdf}: it is the costate variable for the dual control problem and so plays the role of a \emph{shadow price of information}. Second, it isolates a single structural hypothesis on the data---the martingale dual condition MDC of Definition~\ref{def:MDC}---under which equilibrium exists with pricing rule $\bp_0 + \bfxi$ and price impact $\bfLambda = (\bM^{*})^{-1}$. We verified MDC directly in the scalar setting and in the multidimensional setting with a common time-independent eigenbasis between $\bC$ and~$\bfsigma$.
Whether MDC holds for general $\bC$ and $\bfsigma$ remains open, and reduces, via the stochastic maximum principle, to the wellposedness of a fully coupled FBSDE with quadratic and singular driver (section~\ref{sec:open-problem}).

Along the way, we also established an independently interesting Doob--Meyer decomposition theorem for general matrix-valued local submartingales (Theorem~\ref{YJoB18UgVC}).

\appendix

\section{Heuristic derivation from causal transport}\label{tBdCdebe09}

Fix \(t\in[0,T]\) and a current covariance matrix \(\bfSigma\in\cS^n_+\). Let
\(\vt\sim\cN({\bf 0},\bfSigma)\). We start with the optimal transport type problem
\begin{equation}
\label{cot}
V(t,\bfSigma)
=
\sup_{\pi}
\Epi{\vt^{\top} \int_t^T \bfsigma_s \dif \bB_s \middle| \sF_t}
\end{equation}
where the supremum is taken over all couplings \(\pi\) of \((\vt,\bB,\bW)\) such that
\(\vt\) has law \(\cN({\bf 0},\bfSigma)\), \(\vt\) and \(\bW\) are independent, and \(\bB\)
is Brownian in its own filtration and is coupled with \((\vt,\bW)\) causally. In the model
of the main text, the initial covariance is \(\bC\), so the corresponding initial value is
\(V(0,\bC)\).

Introducing the partition $t = t_0 < \cdots < t_n = T$, we approximate \eqref{cot} as
$$
V^n(t_k,\bfSigma) = \sup_{\pi_k}\sum_{k=0}^{n-1}\Epi{\vt^{\top} \bfsigma_{t_k}\del{\bB_{t_{k+1}}-\bB_{t_k}}}
$$
or, written dynamically, as
\begin{equation}
\label{dynamic problem}
V^n(t_k,\bfSigma) = \sup_{\pi_k}\Epi{\vt^{\top}\bfsigma_{t_k}\del{\bB_{t_{k+1}}-\bB_{t_k}} + V^n\!\del{t_{k+1},\bfSigma_{t_{k+1}}}\,\middle|\,\sF_t}
\end{equation}
where
\begin{align}
\label{filtered covariance}
\bfSigma_{t_{k+1}} = \var{\vt \middle| \sF^{\bB}_{t_{k+1}}}.
\end{align}
To couple $\del{\vt, \bB_{t_{k+1}}-\bB_{t_k}}$, let the pair be jointly Gaussian with covariance matrix
$$
\var{
\begin{bmatrix}
\vt \\
\bB_{t_{k+1}}-\bB_{t_k}
\end{bmatrix}
\middle| \sF^{\bB}_{t_{k}}}
=
\begin{bmatrix}
\bfSigma & \sqrt{\bfSigmat_k}\bQ_k\Delta t_k \\
\bQ_k^{\top}\sqrt{\bfSigmat_k}\Delta t_k & \bI \Delta t_k
\end{bmatrix}.
$$
Note that we have polar factorized
\begin{equation}
\label{cov v dB}
\cov{\vt}{\bB_{t_{k+1}}-\bB_{t_k}} = \sqrt{\bfSigmat_k}\bQ_k \Delta t_k
\end{equation}
so that $\bfSigmat_k$ is symmetric positive semi-definite and $\bQ_k$ is orthogonal. Invoking the Gaussian projection theorem, \eqref{cov v dB} and \eqref{filtered covariance} yield
$$
\bfSigma_{t_{k+1}} = \bfSigma - \bfSigmat_k \Delta t_k.
$$
Similarly, \eqref{cov v dB} permits the decomposition
$$
\vt = \bz + \sqrt{\bfSigmat_k}\bQ_k \del{\bB_{t_{k+1}}-\bB_{t_k}}
$$
where $\bz$ is independent of $\bB_{t_{k+1}}-\bB_{t_k}$. With these considerations, \eqref{dynamic problem} updates to
\begin{align*}
V^n&(t_k,\bfSigma)  \\
& = \sup_{\bfSigmat_k}\sup_{\bQ_k}\Epi{\del{\bB_{t_{k+1}}-\bB_{t_k}}^{\top}\bQ_k^{\top}\sqrt{\bfSigmat_k}\bfsigma_{t_k}\del{\bB_{t_{k+1}}-\bB_{t_k}} + V^n\!\del{t_{k+1},\bfSigma - \bfSigmat_k \Delta t_k}} \\
&=\sup_{\bfSigmat_k}\sup_{\bQ_k}\Epi{\tr{\bQ_k^{\top}\sqrt{\bfSigmat_k}\bfsigma_{t_k}}\Delta t_k + V^n\!\del{t_{k+1},\bfSigma - \bfSigmat_k \Delta t_k}}.
\end{align*}
To attain the supremum over $\bQ_k$, write $\sqrt{\bfSigmat_k}\bfsigma_{t_k} = \bO_k \bP_k$ for $\bO_k$ orthogonal and $\bP_k$ positive, and choose $\bQ_k = \bO_k$. The trace term reduces to
$$
\tr{\bP_k} = \tr{\sqrt{\bfsigma_{t_k}^{\top}\bfSigmat_k\bfsigma_{t_k}}},
$$
and $V^n(t_k,\bfSigma)$ to
$$
V^n(t_k,\bfSigma) = \sup_{\bfSigmat_k}\Epi{\tr{\sqrt{\bfsigma_{t_k}^{\top}\bfSigmat_k\bfsigma_{t_k}}}\Delta t_k + V^n(t_{k+1},\bfSigma - \bfSigmat_k \Delta t_k)}
$$
where $\bfSigmat_k$ is chosen so that $\bfSigma - \bfSigmat_k \Delta t_k$ remains positive semi-definite. This motivates our continuous-time problem
\begin{align}
\label{continuous time problem}
V(t,\bfSigma) &= \sup_{\bfSigmat}\cbr{\E{\int_t^T \tr{\sqrt{\bfsigma_s \bfSigmat_s \bfsigma_s}}\dif s \middle| \sF^{\bW}_t} \colon \int_t^T \bfSigmat_s \dif s \leq \bfSigma}\\
&= \sup_{(\bfSigmat_s,\bQ_s)\in \dR^{n\times n}\times O_n}\cbr{\E{\int_t^T \tr{\bfSigmat_s\bfsigma_s \bQ_s}\dif s \middle| \sF^{\bW}_t} \colon \int_t^T \bfSigmat_s^\top\bfSigmat_s \dif s \leq \bfSigma}
\end{align}
or, written dynamically,
$$
V(t,\bfSigma)\hspace{-0.2mm} =\hspace{-0.2mm} \sup_{\bfSigmat}\cbr{\hspace{-0.2mm}\E{\int_t^{\tau}\hspace{-0.2mm} \tr{\sqrt{\bfsigma_s \bfSigmat_s \bfsigma_s}}\dif s\hspace{-0.2mm} +\hspace{-0.2mm} V\!\del{\tau, \bfSigma - \int_{t}^{\tau}\bfSigmat_s \dif s}\middle| \sF^{\bW}_t} \colon \int_t^{\tau}\hspace{-0.2mm}\bfSigmat_s \dif s \leq \bfSigma}.
$$

\section{Matrix-valued Doob--Meyer decomposition theorem}
\label{app:matrix-doob-meyer}

The purpose of this section is to establish a matrix-valued generalization of the classical (scalar-valued) Doob--Meyer decomposition theorem, the statement of which is recalled below.
Throughout the section, $(\Omega,\sF,(\sF_t)_{t \geq 0},\dP)$ is a filtered probability space satisfying the usual conditions, and all processes are assumed to be c\`adl\`ag and adapted to $(\sF_t)_{t \geq 0}$ unless otherwise stated.

\begin{theorem}[Scalar-valued Doob--Meyer decomposition theorem]\label{thm.scalarDMdecomp}
If $X = (X_t)_{t \geq 0}$ is a local submartingale, then there exist unique-up-to-indistinguishability processes $M = (M_t)_{t \geq 0}$ and $A = (A_t)_{t \geq 0}$ such that $M$ is a local martingale, $A$ is a non-decreasing predictable process, $A_0 = 0$ almost surely, and $X=M+A$.
Furthermore:
\begin{enumerate}[label=(\roman*),font=\normalfont]
    \item If $X$ is continuous, then so are $A$ and $M$;\label{item.cont}
    \item $X$ is of class (DL) if and only if $M$ is a true martingale and $A$ is integrable;\label{item.DL}
    \item $X$ is of class (D) if and only if $M$ is a uniformly integrable martingale and $A_{\infty} \coloneqq \lim_{t \to \infty} A_t$ is integrable.\label{item.D}
\end{enumerate}
\end{theorem}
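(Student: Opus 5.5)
This is the classical result, so I would follow the short proof of Beiglb\"ock, Schachermayer and Veliyev, which uses discrete Doob decompositions and a Koml\'os-type compactness lemma. That route also serves as the template for the matrix-valued extension later in the appendix. The proof is organized around the class (D) case on a compact interval $[0,T]$, from which everything else follows by localization.

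\emph{Uniqueness.} If $X=M+A=M'+A'$, then $M-M'=A'-A$ is a predictable local martingale of finite variation starting at $0$. Such a process is indistinguishable from $0$: a predictable local martingale is continuous up to localization, and a continuous local martingale of finite variation is constant.

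\emph{Existence for $X$ of class (D) on $[0,T]$.}
\begin{itemize}
\item For each dyadic partition $D_n=\{kT2^{-n}\}$, form the discrete Doob decomposition $X=M^n+A^n$ on $D_n$, with
\[
A^n_{t}=\sum_{s<t,\ s\in D_n}\E{X_{s+T2^{-n}}-X_s\mid\sF_s}.
\]
\item The key estimate is that $(A^n_T)_n$ is uniformly integrable. One reduces to $X_T=0$ (hence $X\le 0$) and compares $A^n$ on the event $\{A^n_T>c\}$ with $X$ stopped at the first time $A^n$ exceeds $c/2$. This stopping time is predictable in $D_n$, and class (D) is used exactly here.
\item Uniform integrability gives, by Koml\'os' lemma, convex combinations $\hat M^n_T\in\operatorname{conv}(M^n_T,M^{n+1}_T,\dots)$ converging in $L^1$ to some $M_T$. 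Set $M_t=\E{M_T\mid\sF_t}$ and $A=X-M$.
\item Using the same convex weights on the $A^n$, evaluated at dyadic times, one shows $A$ is a.s.\ nondecreasing along dyadic times. Right-continuity then makes $A$ nondecreasing everywhere.
\item Predictability of $A$ follows because each $A^n$ is left-continuous and adapted (hence predictable) and $A$ is an a.s.\ limit of such combinations along a subsequence at every dyadic time. A limsup argument, together with right-continuity of $X$ and the submartingale inequality, handles the remaining times.
\end{itemize}
I expect this last step, passing predictability and monotonicity through the Koml\'os limit at non-dyadic times, to be the main technical obstacle. There I would first try to show $A_t=\limsup_n \hat A^n_t$ for all $t$ outside a countable set of jump times of $A$, and treat those jump times separately.

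\emph{General local submartingales.} Choose a localizing sequence $\tau_k$ such that each $X^{\tau_k}$ is a submartingale of class (D), for instance by also stopping at the first time $|X|$ exceeds $k$ and using the standard bound on the jump at that time. Apply the bounded-horizon result on $[0,k]$, and glue the pieces using uniqueness; this produces $M$ local martingale and $A$ predictable nondecreasing with $A_0=0$.

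\emph{Item \ref{item.cont}.} If $X$ is continuous, then $\Delta A=\Delta X-\Delta M=-\Delta M$. The process $\Delta A$ is predictable, and a local martingale has no predictable jumps (its predictable projection vanishes), so $\Delta M=0$ and both $M$ and $A$ are continuous.

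\emph{Items \ref{item.DL} and \ref{item.D}.} A true martingale is of class (DL), and a nondecreasing integrable $A$ is dominated by $A_t$ on $[0,t]$, which gives one direction. Conversely, if $X$ is of class (DL), take a localizing sequence $\sigma_k$ for $M$ and use monotone convergence in $\E{A_{t\wedge\sigma_k}}=\E{X_{t\wedge\sigma_k}}-\E{X_0}$ to get $\E{A_t}<\infty$. Then $M=X-A$ is of class (DL), so the local martingale $M$ is a true martingale. The class (D) case is identical with $t=\infty$: $\E{A_\infty}<\infty$ by monotone convergence, and class (D) of $M$ yields uniform integrability.
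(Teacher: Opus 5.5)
The paper gives no proof of this statement. It is recalled as classical background and used only as a black box in Theorem~\ref{YJoB18UgVC}: there it is applied to the scalar local submartingales $\bu^{\top}\bX\bu$ and combined with polarization and the uniqueness clause. So your remark that the Beiglb\"ock--Schachermayer--Veliyev argument is the template for the matrix extension does not match the paper. The matrix proof never reruns the discrete Doob/Koml\'os construction; it needs only the scalar conclusion and uniqueness. As a self-contained proof, your outline is the right one. The following steps are correct as you state them: uniqueness, the reduction to $X_T=0$ and the uniform-integrability estimate, the class (D) localization (the stopped process is dominated by $k+|X_{\tau_k}|$, which is integrable by optional sampling), the gluing, and items (i)--(iii).

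The step you leave open, predictability, needs more than you sketch. Showing $A_t=\limsup_n\hat A^n_t$ a.s.\ for each fixed $t$ only makes $A$ a modification of a predictable process, and such a modification need not be predictable. Compare $\mathbf{1}_{[\tau,\infty)}$ with $\mathbf{1}_{(\tau,\infty)}$ for a Poisson jump time $\tau$. Moreover, the jump set of $A$ is random, so it cannot be discarded as a fixed countable set. What works is to evaluate at stopping times.

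\begin{itemize}
\item Pathwise, on the subsequence converging at all dyadics: monotonicity of $\hat A^n$ and right-continuity of $A$ give $\limsup_n\hat A^n_t\le A_t$ for all $t$, with equality at continuity points of $A$.
\item Exhaust the jumps of $A$ by stopping times, namely the successive jumps of size larger than $1/m$.
\item For such a $\tau$, let $\sigma_n$ be the first point of $D_n$ at or after $\tau$. Since $A^n$ is the left-continuous step process, $A^n_\tau=A^n_{\sigma_n}$.
\item Optional sampling for the discrete martingale $M^n$ gives $\mathbb{E}[A^n_\tau]=\mathbb{E}[X_{\sigma_n}]-\mathbb{E}[X_0]$. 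By right-continuity of $X$ and class (D), this tends to $\mathbb{E}[X_\tau]-\mathbb{E}[X_0]=\mathbb{E}[A_\tau]$, so $\mathbb{E}[\hat A^n_\tau]\to\mathbb{E}[A_\tau]$.
\item Since $0\le\hat A^n_\tau\le\hat A^n_T$ and $\hat A^n_T\to A_T$ in $L^1$, reverse Fatou gives $\mathbb{E}[\limsup_n\hat A^n_\tau]\ge\mathbb{E}[A_\tau]$. Hence $\limsup_n\hat A^n_\tau=A_\tau$ a.s.
\end{itemize}

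Therefore $A=\limsup_n\hat A^n$ up to evanescence, and $A$ is predictable. The ingredient here is optional sampling of $M^n$ at grid-valued stopping times together with class (D), not the submartingale inequality.
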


To obtain the definitions of matrix-valued (local) submartingales, (local) supermartingales, non-decreasing processes, and non-increasing processes, simply write down the scalar-valued definitions, interpret the inequalities as inequalities in the Loewner order, and replace $|\cdot|$ with $\|\cdot\|_2$.
For example, here is the definition of a matrix-valued supermartingale.

\begin{definition}
\label{sySq817Omx}
Let $\bX = (\bX_t)_{t \geq 0}$ be an adapted $n \times n$ matrix--valued process.
We say that $\bX$ is a {\it supermartingale} if $\E{\lVert\bX_t\rVert_2} < \infty$ for each $t \geq 0$ and
$$
\E{\bX_t \middle| \sF_s} \leq \bX_s, \qquad 0 \leq s \leq t,
$$
where the above inequality means $\bX_s - \E{\bX_t \middle| \sF_s}$ is a positive semi-definite matrix.
\end{definition}

Similarly, to define uniform integrability for collections of matrix-valued random variables, simply replace $|\cdot|$ with $\|\cdot\|_2$ in the scalar-valued definition.
This yields, in particular, obvious notions of matrix-valued class (D) and class (DL) processes.

We are now prepared for the main result of this section.

\begin{theorem}[Matrix-valued Doob--Meyer decomposition theorem]\label{YJoB18UgVC}
If $\bX$ is an $n \times n$ matrix--valued local submartingale, then there exist unique-up-to-indistinguishability $n\times n$ matrix--valued processes $\bM = (\bM_t)_{t \geq 0}$ and $\bA = (\bA_t)_{t \geq 0}$ such that $\bM$ is a local martingale, $\bA$ is a non-decreasing predictable process, $\bA_0 = \bf0$ almost surely, and $\bX=\bM+\bA$.
Furthermore:
\begin{enumerate}[label=(\roman*),font=\normalfont]
    \item If $\bX$ is continuous, then so are $\bA$ and $\bM$;\label{item.matcont}
    \item $\bX$ is of class (DL) if and only if $\bM$ is a true martingale and $\bA$ is integrable;\label{item.matDL}
    \item $\bX$ is of class (D) if and only if $\bM$ is a uniformly integrable martingale and $\bA_{\infty} \coloneqq \lim_{t \to \infty} \bA_t$ is integrable.\label{item.matD}
\end{enumerate}
\end{theorem}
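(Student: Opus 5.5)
The plan is to reduce the matrix statement to the scalar Theorem~\ref{thm.scalarDMdecomp} by testing against vectors, and then to use polarization to recover every matrix entry. The key observation is the following. If $\bX$ is a matrix-valued local submartingale in the Loewner sense, then for every fixed $\bu\in\dR^n$ the scalar process $\bu^\top\bX\bu$ is a local submartingale. For non-symmetric $\bX$ this quadratic form only sees the symmetric part $\tfrac12(\bX+\bX^\top)$. The antisymmetric part $\tfrac12(\bX-\bX^\top)$ is therefore forced to satisfy $\E{\bX_t-\bX_s\mid\sF_s}\ge 0$ in a sense that only involves the symmetric part. I will interpret the Loewner condition for non-symmetric matrices as ``$\E{\bX_t\mid\sF_s}-\bX_s$ is positive semi-definite, i.e.\ symmetric with nonnegative quadratic form.'' Under this reading the antisymmetric part of the increment has zero conditional expectation, so it is a (local) martingale. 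This splits the problem as follows: the antisymmetric part goes entirely into $\bM$, and the symmetric part $\bX^{s}$ is a symmetric local submartingale.

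For the symmetric part, I first take a common localizing sequence $\tau_k$. Finitely many entries suffice, because $\bX^s$ is determined by the $n(n+1)/2$ forms $\bu^\top\bX^s\bu$ for $\bu\in\{\be_i\}\cup\{\be_i+\be_j\}$. Next, for each $\bu$ in a countable dense set $D\subset\dR^n$, Theorem~\ref{thm.scalarDMdecomp} gives $\bu^\top\bX^s\bu=M^{\bu}+A^{\bu}$ with $A^{\bu}$ predictable and nondecreasing. I then define $\bA$ entrywise by polarization from the basis vectors:
\[
A^{ij}:=\tfrac12\bigl(A^{\be_i+\be_j}-A^{\be_i}-A^{\be_j}\bigr),\qquad \bM:=\bX-\bA.
\]
By construction $\bA$ is symmetric, predictable, and satisfies $\bA_0=\bf0$, and each entry of $\bM$ is a local martingale. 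The main obstacle is showing that $\bA$ is nondecreasing in the Loewner order, i.e.\ $\bu^\top(\bA_t-\bA_s)\bu\ge0$ for all $\bu$ simultaneously, almost surely. To handle this, I note that $\bu^\top\bA\bu$ and $A^{\bu}$ are both predictable finite-variation processes null at zero. Their difference equals $\bu^\top\bX^s\bu-M^{\bu}-\bu^\top\bM\bu+\ldots$, which is a local martingale. Uniqueness in the scalar theorem therefore gives $\bu^\top\bA\bu=A^{\bu}$ up to indistinguishability, so this process is nondecreasing for each $\bu\in D$. Taking the countable intersection of the full-measure events, and using continuity of $\bu\mapsto\bu^\top(\bA_t-\bA_s)\bu$ together with right-continuity in $t$, extends the monotonicity to all $\bu$ and all $s\le t$ on a single null-complement.

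Uniqueness of the matrix decomposition follows entrywise. If $\bM+\bA=\bM'+\bA'$, then $\bA-\bA'$ is a predictable finite-variation local martingale null at zero, hence zero by the scalar uniqueness applied to each entry. For item (i), continuity of $\bX$ makes each $A^{\bu}$ continuous by the scalar item (i), hence $\bA$ is continuous, and so is $\bM=\bX-\bA$. For items (ii) and (iii), I use equivalence of norms: $\|\bX\|_2$ is comparable to $\max_{ij}|X^{ij}|$. So $\bX$ is of class (DL) (resp.\ (D)) iff each entry and each form $\bu^\top\bX\bu$ for $\bu\in\{\be_i,\be_i+\be_j\}$ is. I then apply the scalar items (ii) and (iii) to those finitely many forms, and apply the martingale/UI statements directly to the antisymmetric part, which is a local martingale. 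Polarization transfers integrability and the (uniformly integrable) martingale property back to the entries of $\bA$ and $\bM$. Integrability of $\bA_\infty$ uses $\|\bA_\infty\|_2\le\tr{\bA_\infty}=\sum_i A^{\be_i}_\infty$, since $\bA_\infty\ge 0$.
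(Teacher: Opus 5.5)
Your proposal is correct and follows essentially the same route as the paper. You split off $\tfrac12(\bX-\bX^{\top})$ as a local martingale because Loewner-nonnegative conditional increments are symmetric, apply the scalar Doob--Meyer theorem to the forms $\bu^{\top}\bX\bu$, polarize, and identify $\bu^{\top}\bA\bu$ with $A^{(\bu)}$ via the fact that a predictable finite-variation local martingale null at zero vanishes. This is the ``slightly strengthened'' uniqueness the paper also invokes, needed because $\bu^{\top}\bA\bu$ is not yet known to be monotone; you then intersect over countably many $\bu$.

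The differences are cosmetic. Your polarization through $\be_i,\be_j,\be_i+\be_j$ makes $\bA$ symmetric by construction, whereas the paper uses $\be_i\pm\be_j$ and symmetrizes afterwards via $A^{(\bu)}=A^{(-\bu)}$. Your handling of items (ii)--(iii), for example through the bound $\|\bA\|_2\le\tr{\bA}$, is more explicit than the paper's appeal to the entries.
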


\begin{proof}
Let us begin by establishing the uniqueness statement.
To this end, suppose $\bX = \bA^1+\bM^1=\bA^2+\bM^2$, where $\bM^1$ and $\bM^2$ are local martingales and $\bA^1$ and $\bA^2$ are non-decreasing predictable processes starting at zero.
Since $\bA^1$ and $\bA^2$ are non-decreasing and thus finite-variation processes, the process $\bM^1-\bM^2 = (\bX - \bA^1) - (\bX - \bA^2) = \bA^2-\bA^1$ is a predictable, finite-variation local martingale starting at zero.
Therefore, $\bM^1-\bM^2 = \bA^2-\bA^1$ is indistinguishable from the constant $\bf 0$.

To establish the decomposition's existence, we first argue that $\bY \coloneqq \big(\bX-\bX^{\top}\big)/2$ is a local martingale.
By localization, it suffices to argue that if $\bX$ is a submartingale, then $\bY$ is a martingale.
In this case, suppose $0 \leq s \leq t$.
Since $\bX$ is a submartingale, $\E{\bX_t \middle| \sF_s} - \bX_s \geq \bf0$.
In particular, $\E{\bX_t \middle| \sF_s} - \bX_s$ is a symmetric matrix, i.e.,
\[
\E{\bX_t \middle| \sF_s} - \bX_s = \left(\E{\bX_t \middle| \sF_s} - \bX_s\right)^{\top} = \E{\bX_t^{\top} \middle| \sF_s} - \bX_s^{\top}.
\]
Rearranging yields
\[
\E{\bY_t \middle| \sF_s} = \frac12 \left(\E{\bX_t \middle| \sF_s} - \E{\bX_t^{\top} \middle| \sF_s}\right) = \frac12\left(\bX_s - \bX_s^{\top}\right) = \bY_s,
\]
as desired.
\pagebreak

Now, suppose Doob--Meyer decompositions exist for symmetric matrix--valued local submartingales.
Since $\bZ \coloneqq \big(\bX+\bX^{\top}\big)/2$ is a symmetric matrix--valued local submartingale, there exist a local martingale $\bN$ and a non-decreasing predictable process $\bA$ starting at zero such that $\bZ = \bN + \bA$.
By the previous paragraph, $\bM \coloneqq \bY+\bN = \big(\bX-\bX^{\top}\big)/2+\bN$ is a local martingale.
Since $\bX = \bY+\bZ = \bY+\bN+\bA = \bM+\bA$, we conclude that $\bX$ has a Doob--Meyer decomposition.
It therefore suffices to treat the case in which $\bX$ takes values in the symmetric matrices.

Suppose $\bX$ takes values in the symmetric matrices.
Also, for an $n \times n$ matrix $\bB$, write
\[
\bB^{\bu,\bv} \coloneqq \bu^{\top}\bB\bv \quad \text{and} \quad \bB^{\bu} \coloneqq \bB^{\bu,\bu}, \qquad \bu,\bv \in \dR^n.
\]
Let $\bu \in \dR^n$.
By definition of matrix-valued local submartingale and the Loewner order, $\bX^{\bu}$ is a real-valued local submartingale.
By the scalar-valued Doob--Meyer decomposition theorem, there exist a local martingale $M^{(\bu)}$ and a non-decreasing predictable process $A^{(\bu)}$ starting at zero such that $\bX^{\bu} = M^{(\bu)} + A^{(\bu)}$.
Observe that since $\bX^{\bu} = \bX^{-\bu}$, the uniqueness part of the scalar-valued Doob--Meyer decomposition theorem yields that $M^{(\bu)}$ (respectively, $A^{(\bu)}$) is indistinguishable from $M^{(-\bu)}$ (respectively, $A^{(-\bu)}$).

Define
\[
\bM_{ij} \coloneqq \frac14\left( M^{(\be_i+\be_j)} - M^{(\be_i-\be_j)}  \right) \; \text{ and } \; \bA_{ij} \coloneqq \frac14\left( A^{(\be_i+\be_j)} - A^{(\be_i-\be_j)}  \right), \quad i,j=1,\ldots,n,
\]
where $\be_1,\ldots,\be_n$ is the standard basis of $\dR^n$.
Since the entries of $\bM$ are local martingales, $\bM$ is a matrix-valued local martingale.
Since the entries of $\bA$ are predictable finite-variation processes starting at zero, $\bA$ is a matrix-valued finite-variation predictable process starting at zero.
Furthermore, since $\bX$ takes values in the symmetric matrices,
\begin{align*}
    \bX_{ij} & = \bX^{\be_i,\be_j} = \frac14\left(\bX^{\be_i+\be_j} - \bX^{\be_i-\be_j}\right) \\
    & = \frac14\left( M^{(\be_i+\be_j)} - M^{(\be_i-\be_j)}\right) + \frac14\left(A^{(\be_i+\be_j)} - A^{(\be_i-\be_j)}\right) \\
    & = \bM_{ij} + \bA_{ij}
\end{align*}
for all $i,j=1,\ldots,n$.
Thus, $\bX = \bM+\bA$.
It remains to show that $\bA$ is a non-decreasing process.
To this end, observe that since $A^{(\bu)}$ is indistinguishable from $A^{(-\bu)}$ for all $\bu \in \dR^n$, $\bA_{ij}$ is indistinguishable from $\bA_{ji}$ for all $i,j=1,\ldots,n$.
We therefore may and do assume $\bA$ takes values in the symmetric matrices.
Now, fix $\bu \in \dR^n$.
Since $\bM^{\bu}$ is a local martingale, $\bA^{\bu}$ is a predictable finite-variation process starting at zero, and $\bX^{\bu} = \bM^{\bu} + \bA^{\bu}$, a slightly strengthened version of the uniqueness part of the scalar-valued Doob--Meyer decomposition theorem---see the argument in the first paragraph of this proof---yields that $\bA^{\bu}$ is a non-decreasing process.
In other words, there exists a full-probability set $\Om_{\bu} \in \sF_0$ such that on $\Om_{\bu}$, $\bA_s^{\bu} \leq \bA_t^{\bu}$ for all $t \geq s \geq 0$.
Now, let $\Om_0 \coloneqq \bigcap_{\bu \in \dQ^n} \Om_{\bu}$.
On the full-probability set $\Om_0$,
\[
\bu^{\top}\bA_s \bu \leq \bu^{\top}\bA_t\bu, \qquad 0 \leq s \leq t, \; \bu \in \dQ^n.
\]
It then follows from the continuity of $\bu \mapsto \bB^{\bu}$ for an $n \times n$ matrix $\bB$ that the above inequality holds for \emph{all} $\bu \in \dR^n$ and $t \geq s \geq 0$ on $\Om_0$.
Thus, $\bA$ is a non-decreasing process.
This completes the proof of the existence of matrix-valued Doob--Meyer decompositions.

By studying the entries of $\bX$, $\bM$, and $\bA$, items~\ref{item.matcont}--\ref{item.matD} follow readily from the corresponding items in Theorem~\ref{thm.scalarDMdecomp}.
\end{proof}

\begin{remark}\label{rem.supermart}
By considering $-\bX$ in place of $\bX$ in Theorem~\ref{YJoB18UgVC}, we obtain that if $\bX$ is an $n \times n$ matrix--valued local supermartingale, then there exist unique-up-to-indistinguishability $n \times n$ matrix--valued processes $\bL$ and $\bA$ such that $\bL$ is a local martingale, $\bA$ is a non-increasing predictable process starting at zero, and $\bX = \bL + \bA$.
Furthermore, Theorem~\ref{YJoB18UgVC} has an analog for  matrix-valued local submartingales $\bY = (\bY_t)_{0 \leq t \leq T}$ defined only on bounded time intervals;
simply apply Theorem~\ref{YJoB18UgVC} to the process $\bX = (\bX_t)_{t \geq 0}$, where $\bX_t \coloneqq \bY_t$ for all $t \in [0,T]$ and $\bX_t \coloneqq \bY_T$ for all $t > T$.
\end{remark}

\section{Proofs for the variational duality}
\label{app:dual-proofs}

\begin{proof}[Proof of Proposition~\ref{prop:MDC-terminal-covariance}]
Let $\bN$ be a bounded symmetric $\sF^{\bW}$-martingale. By the interior
property, $\bM^*+\varepsilon\bN\in\cM$ for all sufficiently small
$|\varepsilon|$. Since $\bM^*$ minimizes \eqref{RUHFk3mIp0}, we have
\[
    \dod{}{\varepsilon}
    g(\bM^*+\varepsilon\bN)\bigg|_{\varepsilon=0}=0,
\]
where
\[
g(\bM)
=
\frac12\tr{\bC\bM_0}
+
\frac12\E{\int_0^T \tr{\bfsigma_t^2\bM_t^{-1}}\,\dif t}.
\]
Using
\[
    \dod{}{\varepsilon}
    (\bM_t^*+\varepsilon\bN_t)^{-1}\bigg|_{\varepsilon=0}
    =
    -(\bM_t^*)^{-1}\bN_t(\bM_t^*)^{-1},
\]
we obtain
\[
0
=
\frac12\tr{\bC\bN_0}
-
\frac12
\E{\int_0^T
\tr{
(\bM_t^*)^{-1}\bfsigma_t^2(\bM_t^*)^{-1}\bN_t}\,\dif t}.
\]
Set
\[
    \bA_t^*
    :=
    (\bM_t^*)^{-1}\bfsigma_t^2(\bM_t^*)^{-1}.
\]
Since $\bN$ is a true martingale,
\[
    \bN_t=\E{\bN_T\mid\sF_t^{\bW}},
\]
and hence
\[
    \E{\int_0^T \tr{\bA_t^*\bN_t}\,\dif t}
    =
    \E{\tr{
        \left(\int_0^T\bA_t^*\,\dif t\right)\bN_T
    }}.
\]
Also,
\[
    \tr{\bC\bN_0}=\E{\tr{\bC\bN_T}}.
\]
Therefore
\[
    \E{\tr{
        \left(
            \bC-\int_0^T\bA_t^*\,\dif t
        \right)\bN_T
    }}=0\pagebreak
\]
for every bounded symmetric terminal perturbation $\bN_T$. It follows that
\[
    \bC=\int_0^T\bA_t^*\,\dif t
    =
    \int_0^T
    (\bM_t^*)^{-1}\bfsigma_t^2(\bM_t^*)^{-1}\,\dif t.
\]
The representation of $\bfSigma^*$ and the identities
$\bfSigma_0^*=\bC$ and $\bfSigma_T^*={\bf 0}$ follow immediately.
\end{proof}

\begin{proof}[Proof of Proposition~\ref{wn0NtKxptx}]
Let $\bM_{[0,T]} \in \cM$ be given. Choose
\begin{equation}
\label{DzvEJ7XTQT}
\dot{\bfSigma}^*_{t} = -\bM_t^{-1} \bfsigma^2_t \bM_t^{-1}, \qquad t \in [0,T]
\end{equation}
and note that $\dot{\bfSigma}^*_{[0,T]}\in\cA$ because
$\dot{\bfSigma}^*_t\in\cS^n_{-}$ for each $t$ and
$\dot{\bfSigma}^*_t$ is adapted to $\sF_t^{\bW}$.
In fact, $\dot{\bfSigma}^*_t\in\cS^n_{--}$ pointwise, since
\[
-\dot{\bfSigma}^*_t
=
\bM_t^{-1}\bfsigma_t^2\bM_t^{-1}
\in\cS^n_{++}.
\]
Thus the pointwise maximizer of the Lagrangian integrand lies in the interior
of the cone $\cS^n_{-}$, so the first-order condition below carries no boundary
normal term.
At this stage we do not claim that $\dot{\bfSigma}^*\in\cA(\bC)$; the terminal
constraint is recovered only at a dual minimizer satisfying
\[
\bC+\int_0^T\dot{\bfSigma}^*_t\,\dif t=\bf 0.
\]

Consider an admissible perturbation $\dot{\bfSigma}^* + \varepsilon \dot{\bfeta}$. We will show that \eqref{DzvEJ7XTQT} satisfies the first order condition for the optimality of 
$$\cA \ni \dot{\bfSigma} \mapsto L(\dot{\bfSigma}_{[0,T]},\bM_{[0,T]}),$$
which reads
\begin{equation}
\label{foXZfSGcUD}
\lim_{\varepsilon \to 0}\frac{1}{\varepsilon}\E{\int_0^T\del{\tr{\sqrt{-\bfsigma_t \del{\dot{\bfSigma}^*_t+\varepsilon \dot{\bfeta}_t} \bfsigma_t} - \sqrt{-\bfsigma_t \dot{\bfSigma}^*_t \bfsigma_t} + \frac{1}{2} \varepsilon \bM_t \dot{\bfeta}_t}}\dif t} = 0.
\end{equation}
From \cite{moralniclas2018}, we have the first order Taylor approximation
$$
\sqrt{-\bfsigma_t \del{\dot{\bfSigma}^*_t+\varepsilon \dot{\bfeta}_t} \bfsigma_t} = \sqrt{-\bfsigma_t \dot{\bfSigma}^*_t \bfsigma_t} - \varepsilon\int_0^{\infty}e^{-s\sqrt{-\bfsigma_t \dot{\bfSigma}^*_t \bfsigma_t}}\bfsigma_t\dot{\bfeta}_t\bfsigma_t e^{-s\sqrt{-\bfsigma_t \dot{\bfSigma}^*_t \bfsigma_t}} \dif s + O(\varepsilon^2).
$$
With this, the expected value
in \eqref{foXZfSGcUD} reduces to
$$
\E{\int_0^T\tr{\frac{1}{2} \bM_t \dot{\bfeta}_t - \int_0^{\infty}e^{-s\sqrt{-\bfsigma_t \dot{\bfSigma}^*_t \bfsigma_t}}\bfsigma_t\dot{\bfeta}_t\bfsigma_t e^{-s\sqrt{-\bfsigma_t \dot{\bfSigma}^*_t \bfsigma_t}} \dif s}\dif t}.
$$
We cyclically permute the integrand within the trace to obtain
\begin{align*}
&\E{\int_0^T\tr{\del{\frac{1}{2} \bM_t - \bfsigma_t\int_0^{\infty}e^{-2y\sqrt{-\bfsigma_t \dot{\bfSigma}^*_t \bfsigma_t}}  \dif y\bfsigma_t}\dot{\bfeta}_t}\dif t} \\
&= \E{\int_0^T\tr{\del{\frac{1}{2} \bM_t - \frac{1}{2}\bfsigma_t\del{\sqrt{-\bfsigma_t \dot{\bfSigma}_t^*\bfsigma_t}}^{-1}\bfsigma_t}\dot{\bfeta}_t}\dif t}.
\end{align*}
For the integrand to vanish, it suffices that
\begin{equation}
\label{1zwfZY3Cld}
\bM_t = \bfsigma_t\del{\sqrt{-\bfsigma_t \dot{\bfSigma}_t^*\bfsigma_t}}^{-1}\bfsigma_t, \qquad (t,\omega)\in [0,T]\times\Omega \quad \text{a.e.},
\end{equation}
but this is the same as \eqref{DzvEJ7XTQT}. Thus \eqref{foXZfSGcUD} holds. 

It remains to show that \eqref{DzvEJ7XTQT} is a global maximum, and not a saddle point or a local minimum. But this follows from the observation that $L(\dot{\bfSigma}_{[0,T]},\bM_{[0,T]})$ is concave in $\dot{\bfSigma}_{[0,T]}$ for each $\bM_{[0,T]} \in \cM$. To show this, take $\dot{\bfSigma}_{[0,T]}^1, \dot{\bfSigma}_{[0,T]}^2 \in \cA$ and $\lambda \in [0,1]$. Then, by the trace concavity of square root \cite[fact 3]{wangramdas}, we have for each $t \in [0,T]$
\begin{align}
\tr{\sqrt{-\bfsigma_t\del{\lambda\dot{\bfSigma}_t^1 + (1-\lambda)\dot{\bfSigma}_t^2}\bfsigma_t}} \geq \lambda\tr{\sqrt{-\bfsigma_t\dot{\bfSigma}_t^1\bfsigma_t}} + (1 - \lambda)\tr{\sqrt{-\bfsigma_t\dot{\bfSigma}_t^2\bfsigma_t}}.
\end{align}
Substituting this into \eqref{PC4t8KCzcX}, we see that
$$
L\!\del{\lambda\dot{\bfSigma}^1_{[0,T]} + (1-\lambda)\dot{\bfSigma}^2_{[0,T]},\bM_{[0,T]}} \geq \lambda L\!\del{\dot{\bfSigma}^1_{[0,T]},\bM_{[0,T]}} + (1-\lambda) L\!\del{\dot{\bfSigma}^2_{[0,T]},\bM_{[0,T]}}.
$$
Finally, substitution of \eqref{1zwfZY3Cld} into the Lagrangian \eqref{PC4t8KCzcX} produces the dual function \eqref{YTF6qPCOFy}.
\end{proof}

\begin{proof}[Proof of Proposition~\ref{FOHnj1xlta}]
Let $\bM^{1}_{[0,T]}, \bM^{2}_{[0,T]} \in \cM$ and $\lambda \in [0,1]$. Evaluating $g$ at $\lambda \bM^{1}_{[0,T]} + (1-\lambda)\bM^{2}_{[0,T]}$ gives
\begin{equation}
\label{AXS7rTZShb}
\frac{1}{2}\tr{\bC \del{\lambda\bM^1_0 + (1-\lambda)\bM^2_0}} + \frac{1}{2}\E{\int_0^T \tr{\bfsigma^2_t \sbr{\lambda\bM^1_t+(1-\lambda)\bM^2_t}^{-1}}\dif t}.
\end{equation}
Since $\bA \mapsto \tr{\bB\bA^{-1}\bB}$ is convex, we have
\begin{equation*}
\tr{\bfsigma_t \sbr{\lambda\bM^1_t+(1-\lambda)\bM^2_t}^{-1}\bfsigma_t} \leq \lambda \tr{\bfsigma_t\sbr{\bM_t^1}^{-1}\bfsigma_t} + (1-\lambda)\tr{\bfsigma_t\sbr{\bM_t^2}^{-1}\bfsigma_t}.
\end{equation*}
This means the expected value in \eqref{AXS7rTZShb} is bounded above by
\begin{equation}
\lambda\E{\int_0^T \tr{\bfsigma^2_t \sbr{\bM^1_t}^{-1}}\dif t} + (1-\lambda)\E{\int_0^T \tr{\bfsigma^2_t \sbr{\bM^2_t}^{-1}}\dif t},
\end{equation}
and so
$$
g\!\del{\lambda \bM^{1}_{[0,T]} + (1-\lambda)\bM^{2}_{[0,T]}} \leq \lambda g\!\del{\bM^{1}_{[0,T]}} + (1-\lambda)g\!\del{\bM^{2}_{[0,T]}}.
$$
\end{proof}

\begin{proof}[Proof of Theorem~\ref{hodau0V7QJ}]
Denote the value in \eqref{RUHFk3mIp0} as $L$, and note that $L \geq 0 > -\infty$. 

We first prove existence of optimizer for \eqref{RUHFk3mIp0}. Let $\alpha>0$ be such that $\frac{1}{\alpha}\bI\leq \bfsigma_t\leq \alpha\bI $ for all $t\in [0,T]$, almost surely. For the choice of the constant matrix $\bM_t=\bI$ we see that 
\begin{align*}
L\leq g(\bI)\leq  \frac{\alpha^2Tn+\tr{\bC}}{2}.\displaybreak
\end{align*}

For each $k$, let $\bM^k$ be an $\cS^n_{++}$ valued martingale satisfying 
$$g(\bM^k)\leq L+\frac{1}{k}\leq   \frac{\alpha^2Tn+\tr{\bC}}{2}+1.$$
The non-degeneracy of $\bC$ and the bound on $g(\bM^k)$ allows us to claim that $(\bM^k_0)_{k\geq 0}$ is bounded. Thus, without loss of generality we can assume that  $\bM^k_0\to \bM^{\infty}_0\in \cS^{n}_{+}$ as $k \to \infty$.

We now use a Komlós-type argument to construct a martingale sequence $\tilde{\bM}^{k}$ from $\bM^{k}$ with the property that $\tilde{\bM}^{k}$ converges almost surely to a component-wise limit $\bM^{\infty}$. Specifically, we iterate Delbaen--Schachermayer's compactness theorem across $n(n+1)/2$ scalar coordinates (the diagonal entries $\be_i^\top \bM\be_i$ and the off-diagonal directions $(\be_i+\be_j)^\top \bM(\be_i+\be_j)$ for $i<j$), applying each round's convex combinations to the underlying matrices so that coordinates which have already converged in earlier rounds remain convergent. Denote $\be_i = (0,\dots,0,1,0,\dots,0)^\top$, with $1$ being at ith position for $i = 1,\dots,n$. By the martingale representation theorem, there exists some process $H^{11,k} = (H^{11,k}_t)_{0 \leq t \leq T}$ for which
$$
\be_1^\top\bM_0^k\be_1 +  \int_0^t H^{11,k}_{s}\dif W_s = \be_1^\top \bM_t^k \be_1 \geq  0.
$$
Thus $\int_0^t H^{11,k}_{s}\dif W_s$ is uniformly bounded below by $-\be_1^\top\bM_0^k\be_1$, which by the boundedness of $(\bM^k_0)_{k\geq 0}$ established above is bounded by a deterministic constant independent of $k$. \cite[1.10 Theorem D]{delbaen1999compactness} asserts that there are finite convex combinations
$$
\tilde{H}^{11,k} \in {\rm{conv}}\{H^{11,k}, H^{11,k+1},\dots\}
$$
and a super-martingale $V_t^{11}$ for which
\begin{align}
\lim_{\substack{s\downarrow t \\ s\in \dQ_+}} \lim_{k\to\infty} \int_0^s \tilde{H}^{11,k}_r \dif W_r &= V^{11}_t \\
V^{11}_0 &\leq 0.
\end{align}

Applying these same convex combinations to the matrices $\bM^{k}, \bM^{k+1},\dots$, we build the martingale sequence $\tilde{\bM}^{k}, \tilde{\bM}^{k+1} \dots$. Take this new sequence and repeat the same procedure on the second diagonal term $\be_2^{\top}\bM^k\be_2$ to obtain $(V_t^{22})$.  Notice that the limit $V^{11}_t$ of the first diagonal does not change through this procedure.

We continue in this way along all diagonal entries, denoting again the new sequences obtained by $(\tilde{\bM})$. For the off-diagonal entries, we consider instead $\del{\be_i + \be_j}^{\top}\tilde{\bM}^k\del{\be_i + \be_j}$ with $i\neq j$ and we denote these limits $V_t^{ij}$. Altogether, this process is repeated $n(n+1)/2$ times. Define the matrix process $\bM^{\infty}$ by
$$
\sbr{\bM^{\infty}_t}_{i,j} = 
\begin{cases}
V_t^{ii} & i = j\\
\frac{1}{2}\del{V_t^{ij} - V_t^{ii} - V_t^{jj}} & i \neq j
\end{cases}
$$
and notice that it coincides with the component-wise limit of $(\tilde\bM^k)_{k}$.

Let $\bx \in \dR^{n}$ and $s \leq t$. For each $k$,  $(\tilde{\bM}^k_t)_{0\le t\le T}$ is a martingale, and so
$$
\E{\bx^{\top}\tilde{\bM}_t^k \bx \middle| \sF_s} = \bx^{\top}\tilde{\bM}_s^k \bx.
$$
Taking $k \to \infty$, Fatou's lemma gives
$$
\E{\liminf_{k}\bx^{\top}\tilde{\bM}_t^k \bx \middle| \sF_s} \leq  \liminf_{k}\bx^{\top}\tilde{\bM}_s^k \bx.
$$
But a component-wise limit exists and is $(\bM^{\infty}_t)_{0\le t\le T}$ and so
\begin{equation}
\label{1U7mdK1n5U}
\E{\bx^{\top}\bM^{\infty}_t \bx \middle| \sF_s} \leq  \bx^{\top}\bM^{\infty}_s \bx.
\end{equation}
Since $\bx$ is arbitrary, this means that $\E{\bM^{\infty}_t  \middle| \sF_s} \leq  \bM^{\infty}_s.$

Thus $\bM^{\infty}$ is a supermartingale as defined by Definition~\ref{sySq817Omx}.

But \eqref{YTF6qPCOFy} is convex in $\bM$, so any supermartingale minimizer may be improved by taking only its local martingale component. We first note that $\bM^{\infty}_t > 0$ for a.e.\ $(t,\omega)$: otherwise the integrand $\tr{\bfsigma^2_t (\bM^\infty_t)^{-1}}$ in $g(\bM^\infty)$ would equal $+\infty$ on a set of positive measure, contradicting $g(\bM^\infty) \leq L < \infty$ (verified below).
The matrix-valued Doob--Meyer decomposition theorem (Theorem~\ref{YJoB18UgVC} and Remark~\ref{rem.supermart}) permits the representation $\bM^{\infty}_t = \bL_t + \bA_t$, for some local martingale $\bL_t$ and some nonincreasing process $\bA_t$ which starts at $\bf 0$. Since $\bM^{\infty}_t  \leq \bL_t$ for each $t$, both are positive definite a.e., and operator-monotonicity of inversion on $\cS^n_{++}$ yields $\del{\bL_t}^{-1} \leq \del{\bM^{\infty}_t}^{-1}$, and so
\begin{align*}
g\!\del{\bM^{\infty}} &= \frac{1}{2}\tr{\bC \bM^{\infty}_0} + \frac{1}{2}\E{\int_0^T \tr{\bfsigma^2_t \del{\bM_t^{\infty}}^{-1}}\dif t} \\
&\geq \frac{1}{2}\tr{\bC \bM^{\infty}_0} + \frac{1}{2}\E{\int_0^T \tr{\bfsigma^2_t \del{\bL_t}^{-1}}\dif t} = g\!\del{\bL}.
\end{align*}

We now show that $\bL$ is an optimizer of \eqref{RUHFk3mIp0}. By convexity of $g$ and the construction of $\tilde \bM^k$ as a convex combination of $\bM^j$ for $j\geq k$, we have 
$$L+\frac{1}{k}\geq g\!\del{\bM^{k}}\geq g\!\del{\tilde \bM^{k}}.$$
Since the integrand $\tr(\bfsigma^2_t \bM^{-1}_t)$ is non-negative and lower-semicontinuous in $\bM\in\cS^n_{++}$, Fatou's lemma (applied to the a.s.\ limit $\tilde\bM^k\to\bM^\infty$ and using continuity of $\bM\mapsto\bM_0$ in the boundary term) gives
$$g( \bM^{\infty})\leq\liminf_k g\!\del{\tilde\bM^{k}}\leq \liminf_k\del{L+\frac{1}{k}}=L.$$
Combined with $g(\bL)\leq g(\bM^\infty)$ from above,
$$g(\bL)\leq g( \bM^{\infty})\leq L,$$ 
which shows the optimality of $\bL$.
\pagebreak

We now assume that there are two optimizers denote   $\bM^{*,1},\bM^{*,2}$ and by abuse of notation define the convex function 
$$[0,1] \ni \lambda \mapsto g(\lambda):=g(\lambda \bM^{*,2}+(1-\lambda)\bM^{*,1})$$ which by convexity and optimality is constant along $\lambda \in [0,1]$. Thus, 
$$0=g''(0)=\E{\int_0^T \tr{\bfsigma_t \sbr{\bM_t^{*,1}}^{-1}(\bM_t^{*,1}-\bM_t^{*,2})\sbr{\bM_t^{*,1}}^{-1}(\bM_t^{*,1}-\bM_t^{*,2})\sbr{\bM_t^{*,1}}^{-1}\bfsigma_t}\dif t}$$
which implies $\bM_t^{*,1}=\bM_t^{*,2}$, $dt\times d\dP$-a.s. due to the positivity of the matrices involved. 
\end{proof}

\begin{proof}[Proof of Corollary~\ref{cor:strong-duality}]
Choose $\dot{\bfSigma}^*$ as in \eqref{Rwq3zFmuoG}. We will show that $\del{\dot{\bfSigma}^*,\bM^*}$ is a saddle point of $L$, which will prove \eqref{9CcesNg5w6}.

It suffices to show that
\begin{align}
\nabla_{\dot{\bfSigma}}L\!\del{\dot{\bfSigma}^*,\bM^*} &= {\bf 0} \label{PvbNNXYL8q} \\
\nabla_{\bM}L\!\del{\dot{\bfSigma}^*,\bM^*} &= {\bf 0}. \label{HHcMw53Xr9}
\end{align}
The identity \eqref{PvbNNXYL8q} reduces to
\begin{equation}
\label{BUSu1MNQ9m}
\bM^*_t = \bfsigma_t\del{\sqrt{-\bfsigma_t \dot{\bfSigma}_t^*\bfsigma_t}}^{-1}\bfsigma_t, \qquad (t,\omega)\in [0,T]\times\Omega \quad \text{a.e.},
\end{equation}
as was shown in the proof of Proposition~\ref{wn0NtKxptx}, and this is satisfied by \eqref{Rwq3zFmuoG}. On the other hand, \eqref{HHcMw53Xr9} reduces to
$$
\bC + \int_0^T \dot{\bfSigma}^*_t \dif t = {\bf 0},
$$
which, due to our choice of $\dot{\bfSigma}^*_t$, becomes
\begin{equation}
\label{ukQnmQso9N}
\bC - \int_0^T \sbr{\bM_t^*}^{-1}\bfsigma_t^2\sbr{\bM_t^*}^{-1} \dif t = {\bf 0}.
\end{equation}
This is just the statement $\nabla_{\bM}g(\bM^*) = {\bf 0}$, where $g$ is the dual function \eqref{YTF6qPCOFy}. By the convexity of $g$ (Proposition~\ref{FOHnj1xlta}) and the optimality of $\bM^*$ in Theorem~\ref{hodau0V7QJ}, this holds too.
\end{proof}

\begin{proof}[Proof of Proposition~\ref{TUpmwPhTpr}]
Fix $\Sigma\in\cS^{n}_{++}$, $A\in\cS^{n}$, and $\e\geq 0$ such that $\Sigma+\e A\in\cS^{n}_{++}$. By Theorem~\ref{hodau0V7QJ}, the optimizer $\bM^{*,0,\Sigma+\e A}$ exists and gives
\[
G^*(\Sigma+\e A) = \frac{1}{2}\tr{(\Sigma+\e A)\,\bM^{*,0,\Sigma+\e A}_0} + \frac{1}{2}\E{\int_0^T \tr{\bfsigma^2_s\,(\bM^{*,0,\Sigma+\e A}_s)^{-1}}\dif s}.
\]
For $\e>0$, the suboptimality of $\bM^{*,0,\Sigma+\e A}$ at initial covariance $\Sigma$ and of $\bM^{*,0,\Sigma}$ at initial covariance $\Sigma+\e A$ gives the chain
\begin{align*}
\frac{1}{2}\tr{A\,\bM^{*,0,\Sigma+\e A}_0}
&= \frac{1}{\e}\sbr{G_{\Sigma+\e A}\!\del{\bM^{*,0,\Sigma+\e A}} - G_\Sigma\!\del{\bM^{*,0,\Sigma+\e A}}}\\
&\leq \frac{G^*(\Sigma+\e A) - G^*(\Sigma)}{\e}\\
&\leq \frac{1}{\e}\sbr{G_{\Sigma+\e A}\!\del{\bM^{*,0,\Sigma}} - G_\Sigma\!\del{\bM^{*,0,\Sigma}}}
= \frac{1}{2}\tr{A\,\bM^{*,0,\Sigma}_0},
\end{align*}
where the slippage integrals cancel within each bracket. Thus
\begin{align}\label{eq:boundder}
\frac{1}{2}\tr{A\,\bM^{*,0,\Sigma+\e A}_0}
\;\leq\;
\frac{G^*(\Sigma+\e A) - G^*(\Sigma)}{\e}
\;\leq\;
\frac{1}{2}\tr{A\,\bM^{*,0,\Sigma}_0}.
\end{align}

We next show $\bM^{*,0,\Sigma+\e_n A}_0\to\bM^{*,0,\Sigma}_0$ along every sequence $\e_n\downarrow 0$. Suppose for contradiction this fails along some such sequence. If $\bM^{*,0,\Sigma+\e_n A}_0$ is unbounded, then since $\Sigma+\e_n A\to\Sigma> 0$, the boundary term $\frac{1}{2}\tr{(\Sigma+\e_n A)\,\bM^{*,0,\Sigma+\e_n A}_0}$ blows up along a subsequence, contradicting the uniform bound $G^*(\Sigma+\e_n A)\leq G_{\Sigma+\e_n A}(\bI)\to G_\Sigma(\bI)<\infty$. So along a subsequence (not relabelled), $\bM^{*,0,\Sigma+\e_n A}_0\to M$ for some $M\neq\bM^{*,0,\Sigma}_0$.

Proceeding as in the proof of Theorem~\ref{hodau0V7QJ}, convex combinations of the processes $\del{\bM^{*,0,\Sigma+\e_n A}_s}_{s\in[0,T]}$ converge to an $\sF^{\bW}$-local martingale $\bM^{\infty}$ with $\bM^{\infty}_0=M$. Since $G_\Sigma$ is lower-semicontinuous along such convex combinations and $G_{\Sigma+\e_n A}\to G_\Sigma$ pointwise as $\e_n\downarrow 0$, the limit $\bM^{\infty}$ attains the infimum defining $G^*(\Sigma)$. Uniqueness in Theorem~\ref{hodau0V7QJ} forces $\bM^{\infty}=\bM^{*,0,\Sigma}$, contradicting $M\neq\bM^{*,0,\Sigma}_0$.

Continuity of $\Sigma\mapsto\bM^{*,0,\Sigma}_0$ now lets the bounds in \eqref{eq:boundder} pinch as $\e\downarrow 0$, giving the directional derivative
\[
D_A G^*(\Sigma) = \frac{1}{2}\tr{A\,\bM^{*,0,\Sigma}_0}.
\]
Linearity in $A$ yields \eqref{eq:G-star-grad}, and continuity of $D_\Sigma G^*$ follows from that of $\Sigma\mapsto\bM^{*,0,\Sigma}_0$.
\end{proof}

\begin{proof}[Proof of Corollary~\ref{cor:V-diff}]
For each $\Sigma\in\cO$, applying Corollary~\ref{cor:strong-duality} with $\Sigma$ in place of $\bC$ yields $V(0,\Sigma)=G^*(\Sigma)$. Hence $V(0,\cdot)\equiv G^*$ on $\cO$, and the conclusion follows from Proposition~\ref{TUpmwPhTpr}.
\end{proof}

\section{Proofs for the equilibrium theorem}
\label{app:equilibrium-proofs}

\subsection{Properties of the equilibrium}
In this subsection, we assume that the informed trader uses the strategy \eqref{ud3k6AlV7q}.

\begin{proposition}
\label{v0tMLvYb0E}
The equilibrium price process in \eqref{2vGs8ogSts} is
\begin{equation}
\label{SWjcJYb4sK}
\bP^*_t = \bfH^{*}(t,\bfxi^*_t) = \E{\vt \middle| \sF^{M}_t}= \bp_0 + \underbrace{\int_0^t \sbr{\bM^*_s}^{-1} \dif \bY^*_s}_{\bfxi^*_t}
\end{equation}
and the market maker's optimal filtering error is
\begin{align}
{\rm{Var}}\!\del{\vt\middle|\sF^{M}_t} &= \bfSigma^*_t = \bC - \int_0^t \sbr{\bM^*_s}^{-1} \bfsigma^2_s \sbr{\bM^*_s}^{-1} \dif s \label{ENZWGUDa1A} \\
&= \int_t^T \sbr{\bM^*_s}^{-1} \bfsigma^2_s \sbr{\bM^*_s}^{-1} \dif s. \label{53QrAITZQR}
\end{align}
\end{proposition}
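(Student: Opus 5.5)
The plan is to treat this as a conditionally Gaussian linear filtering problem and apply the Kalman--Bucy filter with coefficients driven by the public Brownian motion $\bW$. Since $\bM^*$, $\bfsigma$ and hence $\bfSigma^*$ are $\sF^{\bW}$-adapted, and $\bW$ is independent of $(\vt,\bB)$, we condition on the whole path of $\bW$. Given that path, all coefficients are deterministic functions of time. The unobserved signal is the constant $\vt$, and the observation is
\[
\dif\bY^*_t=\bfsigma_t^2(\bM^*_t)^{-1}(\bfSigma^*_t)^{-1}\del{\vt-\bp_0-\bfxi^*_t}\dif t+\bfsigma_t\dif\bB_t .
\]
Because $\bfxi^*_t$ is a functional of the observed path, the drift has the form $\bA_t(\vt-\bp_0)+(\text{observable})$ with $\bA_t:=\bfsigma_t^2(\bM^*_t)^{-1}(\bfSigma^*_t)^{-1}$. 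This is exactly the setting of the conditionally Gaussian filter of Liptser--Shiryaev. Conditional on $\sF^M_t$, the law of $\vt$ is therefore $\cN(\hat\bv_t,\bfR_t)$, where $\bfR$ solves the deterministic-given-$\bW$ Riccati equation
\[
\dot{\bfR}_t=-\bfR_t\bA_t^{\top}\bfsigma_t^{-2}\bA_t\bfR_t,\qquad \bfR_0=\bC,
\]
and the mean satisfies $\dif\hat\bv_t=\bfR_t\bA_t^\top\bfsigma_t^{-2}\dif\bI_t$. Here $\bI$ is the innovation, i.e.\ the observation minus its conditional drift.

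First, substitute the candidate $\bfR=\bfSigma^*$ into the Riccati equation. Using the symmetry of $\bM^*$, $\bfsigma$ and $\bfSigma^*$, we get $\bfSigma^*\bA^\top\bfsigma^{-2}=(\bM^*)^{-1}$. Hence the right-hand side equals $-(\bM^*)^{-1}\bfsigma^2(\bM^*)^{-1}$, which is $\dot{\bfSigma}^*$ by \eqref{eq:Sigma-star-def}. Uniqueness of the Riccati solution on $[0,t]$ for $t<T$ (locally Lipschitz, with $\bfSigma^*$ bounded and positive definite there by Proposition~\ref{prop:MDC-terminal-covariance}) then gives \eqref{ENZWGUDa1A}. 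The backward form \eqref{53QrAITZQR} is just Proposition~\ref{prop:MDC-terminal-covariance}.

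Next comes the mean. With the same gain $(\bM^*)^{-1}$, the innovation is $\dif\bI_t=\dif\bY^*_t-\bA_t(\hat\bv_t-\bp_0-\bfxi^*_t)\dif t$. The candidate is $\hat\bv_t=\bp_0+\bfxi^*_t$, for which the drift correction vanishes, so $\dif\bI=\dif\bY^*$. The filter equation then reads $\dif\hat\bv_t=(\bM^*_t)^{-1}\dif\bY^*_t=\dif\bfxi^*_t$ with $\hat\bv_0=\bp_0$. Thus $\bp_0+\bfxi^*$ solves the filter SDE, and by uniqueness of its (linear) solution it equals $\E{\vt\mid\sF^M_t}$. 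Note that the filter's mean equation is linear in $\hat\bv$ with coefficient $-(\bM^*)^{-1}\bA$, so uniqueness is standard. Finally, $\bP^*_t=\bfH^*(t,\bfxi^*_t)$ holds by definition \eqref{2vGs8ogSts}.

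The main obstacle is justifying the filtering theorem under the singularity of $(\bfSigma^*_t)^{-1}$ as $t\uparrow T$ and with only integrability on $\bM^*$. The first task is to check that the system is well posed: the closed-loop equation $\dif\bfxi^*=(\bM^*)^{-1}\bA(\vt-\bp_0-\bfxi^*)\dif t+(\bM^*)^{-1}\bfsigma\dif\bB$ is linear, so it has a unique strong solution on $[0,t]$ for every $t<T$. The second task is to verify the integrability conditions of Liptser--Shiryaev's Theorem 12.1 conditionally on $\bW$, localizing in $t<T$ and using that $\bM^*_t\in\cS^n_{++}$ is continuous, hence locally bounded with inverse locally bounded. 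A further point is that the market maker's filtration $\sF^{\bY^*,\bW}$ must coincide with the filtration generated by the observation and $\bW$ in the filtering theorem, which holds by construction. The identities at $t=T$ then follow by continuity and the martingale convergence theorem.
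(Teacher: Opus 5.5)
Your proposal is correct and follows the paper's route: apply the Kalman--Bucy filter, check that $\bfSigma^*$ solves the Riccati equation (the symmetry identity makes the gain equal to $(\bM^*)^{-1}$), and check that $\bp_0+\bfxi^*$ solves the mean equation because the innovation correction vanishes. The differences are cosmetic: you filter on $\bY^*$ where the paper uses $\bfxi^*=\int_0^\cdot(\bM^*_s)^{-1}\dif\bY^*_s$ (together with $\bW$ these generate the same filtration), and you supply the localization to $[0,t]$ with $t<T$ and the uniqueness arguments that the paper leaves implicit.
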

Notice that the integral \eqref{53QrAITZQR} is $\sF^{M}_t$ measurable, even though the integral in time ``goes into the future.''

\begin{proof}
This is an application of the Kalman--Bucy filter \cite[theorem 12.7]{liptser2013statistics}. The hidden process is the static random variable $\vt$, and the observable process is the pair $\del{\bfxi_t,\bfsigma_t}$, which by the strategy \eqref{ud3k6AlV7q} evolves according to
\begin{align}
\dif\bfxi^*_t &= \sbr{\bM^*_t}^{-1}\bfsigma_t^{2}\sbr{\bM^*_t}^{-1}\sbr{\bfSigma^*_t}^{-1}\del{\vt - \bp_0 - \bfxi^*_t}\dif t + \sbr{\bM^*_t}^{-1}\bfsigma_t\dif\bB_t, \quad  \bfxi^*_0 = {\bf 0}\\
 \bfsigma_t &\in \sF^{W}_t\subset \sF^{M}_t.
\end{align}

Note that
\begin{align*}
\int_0^T \sum_{i,j}^{n}\del{\sbr{\bM^*_t}^{-1}\bfsigma_t}^2_{ij}\dif t &= \int_0^T\tr{\sbr{\bM^*_t}^{-1}\bfsigma_t^2 \sbr{\bM^*_t}^{-1}}\dif t \\ &= -\int_0^T\tr{\dot{\bfSigma}^*_t}\dif t = \tr{\bC} < \infty
\end{align*}

The filtering mean $\bm_t = \E{\vt \middle| \sF^{M}_t}$ satisfies the SDE
\begin{align}
\dif\bm_t &= \bfgamma_t\sbr{\bfSigma^*_t}^{-1}\del{\dif\bfxi_t - \sbr{\bM^*_t}^{-1}\bfsigma_t^{2}\sbr{\bM^*_t}^{-1}\sbr{\bfSigma^*_t}^{-1}\del{\bm_t - \bp_0 - \bfxi^*_t}} \label{E0ZBEJjXcw} \\
\bm_0 &= \bp_0, 
\end{align}
and the filtering variance $\bfgamma_t = {\rm{Var}}\!\del{\vt\middle|\sF^{M}_t}$ satisfies the Riccati equation
\begin{align}
\dot{\bfgamma}_t &= - \bfgamma_t \sbr{\bfSigma^*_t}^{-1}\sbr{\bM^*_t}^{-1}\bfsigma_t^{2}\sbr{\bM^*_t}^{-1}\sbr{\bfSigma^*_t}^{-1}\bfgamma_t \label{5bbt09Ww8Q} \\
\bfgamma_0 &= \bC.
\end{align}
Now, in light of \eqref{DzvEJ7XTQT}, we see that $\bfgamma_t = \bfSigma^*_t$ is a solution to \eqref{5bbt09Ww8Q}. This gives us \eqref{ENZWGUDa1A}. Updating \eqref{E0ZBEJjXcw} with this selection, it becomes clear that $\bP_t$ as it appears in \eqref{SWjcJYb4sK} is a solution to \eqref{E0ZBEJjXcw}.
\end{proof}

\begin{proposition}
\label{rdMUtjXwKx}
Let $\bfxi^*$ and $\bM^*$ be defined as in \eqref{CrLPP8jKQG} and \eqref{SWjcJYb4sK}. Then,
\begin{equation}
\hat{\bB}_t = \int_0^t \bfsigma_s^{-1} \bM^{*}_s \dif \bfxi^*_s\label{ULKc9KgyLV}
\end{equation}
is an $n$-dimensional Brownian motion with respect to the market maker's filtration $(\sF^M_t)_{0\le t\le T}$.
\end{proposition}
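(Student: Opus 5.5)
We prove that $\hat{\bB}$ is an $(\sF^M_t)$-Brownian motion via Lévy's characterization: show that $\hat{\bB}$ is a continuous $\sF^M$-local martingale with quadratic covariation $\langle \hat{\bB}^i,\hat{\bB}^j\rangle_t=\delta_{ij}t$. Adaptedness is immediate, since $\bfsigma$ and $\bM^*$ are $\sF^{\bW}$-adapted and $\bfxi^*$ is a functional of $\bY^*$ and $\bW$.

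\emph{Semimartingale decomposition.} Using the dynamics of $\bfxi^*$ from the proof of Proposition~\ref{v0tMLvYb0E}, write
\[
\dif\hat{\bB}_t=\bfsigma_t^{-1}\bM^*_t\,\dif\bfxi^*_t
=\bfsigma_t(\bM^*_t)^{-1}(\bfSigma^*_t)^{-1}\del{\vt-\bp_0-\bfxi^*_t}\dif t+\dif\bB_t .
\]
Here we use $\bfsigma_t^{-1}\bM^*_t(\bM^*_t)^{-1}\bfsigma_t^2(\bM^*_t)^{-1}=\bfsigma_t(\bM^*_t)^{-1}$ and $\bfsigma_t^{-1}\bM^*_t(\bM^*_t)^{-1}\bfsigma_t=\bI$.

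Since the drift is of finite variation, $\langle\hat{\bB}\rangle_t=\langle\bB\rangle_t=t\,\bI$, computed in the insider's filtration. Quadratic covariation is a pathwise limit of sums of squared increments, so it is unchanged when we pass to the smaller filtration $\sF^M$.

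\emph{Martingale property: innovations argument.} By Proposition~\ref{v0tMLvYb0E}, $\bP^*_t=\bp_0+\bfxi^*_t=\E{\vt\mid\sF^M_t}$. Hence the $\sF^M$-optional projection of the drift is
\[
\bfsigma_t(\bM^*_t)^{-1}(\bfSigma^*_t)^{-1}\del{\E{\vt\mid\sF^M_t}-\bp_0-\bfxi^*_t}=\mathbf 0,
\]
because the prefactor is $\sF^{\bW}_t\subset\sF^M_t$-measurable. This is exactly inconspicuousness (item (iv) of Theorem~\ref{I3YrvmnwLc}).

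The standard innovation theorem (Liptser--Shiryaev, Theorem~7.12 / Lemma~11.3 type) then gives that
\[
\hat{\bB}_t-\int_0^t \E{\text{drift}_s\mid\sF^M_s}\dif s=\hat{\bB}_t
\]
is an $\sF^M$-Brownian motion. Concretely, for $s<t$ and bounded $\sF^M_s$-measurable $F$, one shows
\[
\E{F(\hat{\bB}_t-\hat{\bB}_s)}=\E{F(\bB_t-\bB_s)}+\int_s^t\E{F\,\E{\text{drift}_r\mid\sF^M_r}}\dif r=0 .
\]
The first term vanishes because $\bB$ is a Brownian motion in the larger filtration, which contains $\sF^M$. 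The second vanishes by Fubini and the tower property. Combined with the quadratic variation computed above, Lévy's theorem concludes.

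\emph{Main obstacle: integrability of the drift.} The innovation theorem requires $\E{\int_0^T\|\text{drift}_t\|\dif t}<\infty$, and the factor $(\bfSigma^*_t)^{-1}$ blows up as $t\to T$. We would handle this by first working on $[0,T-\delta]$, where $\bfSigma^*$ is positive definite (Proposition~\ref{prop:MDC-terminal-covariance}). There we bound
\[
\E{\|\text{drift}_t\|^2}=\E{\tr{\bfsigma_t(\bM^*_t)^{-1}(\bfSigma^*_t)^{-1}(\bM^*_t)^{-1}\bfsigma_t}},
\]
using that, conditionally on $\sF^M_t$, the vector $\vt-\bp_0-\bfxi^*_t$ has covariance $\bfSigma^*_t$. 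This equals $\E{-\tr{\dot{\bfSigma}^*_t(\bfSigma^*_t)^{-1}}}=-\E{\tfrac{\dif}{\dif t}\log\det\bfSigma^*_t}$, which is integrable on $[0,T-\delta]$ via a localization or stopping argument if needed. The Brownian property then holds on each $[0,T-\delta]$. By continuity of paths and the fact that the finite-dimensional distributions of increments are Gaussian for every $\delta$, it extends to $[0,T]$, with $\hat{\bB}_T$ defined as the a.s. limit, which exists because the quadratic variation is bounded.
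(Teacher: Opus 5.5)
Your proof is correct, and it supplies a step that the paper's proof leaves out. Both arguments use L\'evy's characterization with the same quadratic-variation computation. The paper, however, then only notes that $\bfsigma$, $\bM^*$ and $\bfxi^*$ are $\sF^M$-adapted, and never checks that $\hat{\bB}$ is an $\sF^M$-local martingale. Those checks cannot suffice on their own. Continuity, $\langle\hat{\bB}\rangle_t=t\bI$ and adaptedness hold just as well in the insider filtration $\sF^I\supset\sF^M$. There $\hat{\bB}=\bB+\int_0^{\cdot}\boldsymbol a_s\,\dif s$ with $\boldsymbol a_s=\bfsigma_s(\bM^*_s)^{-1}(\bfSigma^*_s)^{-1}(\vt-\bp_0-\bfxi^*_s)\neq\mathbf 0$, so $\hat{\bB}$ is not an $\sF^I$-Brownian motion. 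Your innovations step, $\E{\boldsymbol a_t\mid\sF^M_t}=\mathbf 0$ because $\bP^*_t=\E{\vt\mid\sF^M_t}$ (Proposition~\ref{v0tMLvYb0E}), is exactly what separates $\sF^M$ from $\sF^I$. The paper's version is shorter only because this step is left implicit, as the innovation theorem underlying the Kalman--Bucy filter it cites.

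The one place to tighten is the localization. Since $\bfSigma^*$ is random, $\E{-\log\det\bfSigma^*_{T-\delta}}$ need not be finite for a deterministic $\delta$, so use $\tau_k:=\inf\{t:\lambda_{\min}(\bfSigma^*_t)\le 1/k\}$ instead. These are $\sF^{\bW}$-, hence $\sF^M$-stopping times, so $\1_{\{r\le\tau_k\}}$ can be pulled through the $\sF^M_r$-conditional expectation. On $\{r\le\tau_k\}$ you have $\E{\|\boldsymbol a_r\|^2\mid\sF^M_r}=\tr{(\bfSigma^*_r)^{-1}(-\dot{\bfSigma}^*_r)}\le k\,\tr{-\dot{\bfSigma}^*_r}$, whose integral over $[0,\tau_k]$ is at most $\tr{\bC}$. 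Moreover $\tau_k\uparrow T$, because $\bfSigma^*$ is continuous and positive definite on $[0,T)$. This gives an $\sF^M$-local martingale on $[0,T)$, and your extension to $T$ via convergence of the $L^2$-bounded martingale then goes through.
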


\begin{proof}
We invoke L\'evy's characterization. First, observe that \eqref{ULKc9KgyLV} is continuous. Next, compute
\begin{align*}
\dif \,\langle\hat{\bB}\rangle_t &= \bfsigma_t^{-1}\bM^*_t \dif\,\langle\bfxi^*\rangle_t\bM^*_t\bfsigma_t^{-1} \\
&= \bfsigma_t^{-1}\bM^*_t \sbr{\bM^*_t}^{-1} \bfsigma_t \dif\,\langle\bB\rangle_t \bfsigma_t\sbr{\bM^*_t}^{-1}\bM^*_t\bfsigma_t^{-1} = \bI \dif t.
\end{align*}
Finally, note that the processes $(\bfsigma_t)_{0\le t\le T}$, $(\bM^*_t)_{0\le t\le T}$, and $(\bfxi^*_t)_{0\le t\le T}$ are all adapted to the filtration $(\sF^M_t)_{0\le t\le T}$.
\end{proof}

\begin{proposition}
\label{cnTt7K54xR}
Conditioned on $\sF^{M}_t$, $\bfxi^*_T$ is Gaussian with mean $\bfxi^*_t$ and variance $$\bfSigma_t^* = \bC - \int_0^t\sbr{\bM^*_s}^{-1}\bfsigma^2_s\sbr{\bM^*_s}^{-1}\dif s = \int_t^T\sbr{\bM^*_s}^{-1}\bfsigma^2_s\sbr{\bM^*_s}^{-1}\dif s.$$
\end{proposition}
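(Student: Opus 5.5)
Under the market maker's filtration, Proposition~\ref{v0tMLvYb0E} together with the innovation structure gives a representation of $\bfxi^*$ as a pure stochastic integral against the $\sF^M$-Brownian motion $\hat\bB$ of Proposition~\ref{rdMUtjXwKx}. Inverting \eqref{ULKc9KgyLV},
\[
\bfxi^*_T-\bfxi^*_t=\int_t^T(\bM^*_s)^{-1}\bfsigma_s\,\dif\hat\bB_s .
\]
The aim is to show that, conditionally on $\sF^M_t$, this increment is centered Gaussian with covariance $\int_t^T(\bM^*_s)^{-1}\bfsigma_s^2(\bM^*_s)^{-1}\dif s$. The two expressions for $\bfSigma^*_t$ in the statement then agree by Proposition~\ref{prop:MDC-terminal-covariance}.

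The difficulty is that the integrand $(\bM^*_s)^{-1}\bfsigma_s$ is random: it is $\sF^{\bW}$-adapted, not deterministic. A stochastic integral with random integrand need not be Gaussian. I plan to handle this by conditioning on the whole path of $\bW$ rather than only on $\sF^M_t$.

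The key step is to show that $\hat\bB$ is independent of $\sF^{\bW}_T$, or more precisely that it remains a Brownian motion in the enlarged filtration $\mathcal G_t:=\sF^M_t\vee\sF^{\bW}_T$. I would establish this as follows.
\begin{enumerate}[label=(\arabic*),font=\normalfont]
\item Work in the filtration $\sF^{I}_t\vee\sF^{\bW}_T$. Since $\bW$ is independent of $(\bB,\vt)$, the process $\bB$ is still Brownian there, and $\bfsigma$, $\bM^*$ and $\bfSigma^*$ become known at time $0$.
\item Given $\bW$, the model is a linear Gaussian system with deterministic coefficients. The Kalman--Bucy computation of Proposition~\ref{v0tMLvYb0E}, repeated with $\bW$ frozen, yields the same filter: the Riccati solution $\bfSigma^*$ depends only on $\bW$. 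Hence the innovation built from $\sF^{M}_t\vee\sF^{\bW}_T$ coincides with $\hat\bB$.
\item By Lévy's characterization in $\mathcal G$, as in the proof of Proposition~\ref{rdMUtjXwKx}, $\hat\bB$ is a $\mathcal G$-Brownian motion. It is therefore independent of $\mathcal G_0\supseteq\sF^{\bW}_T$.
\end{enumerate}

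With this in hand, conditionally on $\mathcal G_t$ the integrand is deterministic on $[t,T]$. Hence $\bfxi^*_T-\bfxi^*_t$ is $\cN\bigl(\mathbf 0,\int_t^T(\bM^*_s)^{-1}\bfsigma^2_s(\bM^*_s)^{-1}\dif s\bigr)$ given $\mathcal G_t$. Since this conditional covariance equals $\bfSigma^*_t$, which is $\sF^M_t$-measurable by \eqref{53QrAITZQR}, the conditional characteristic function given $\mathcal G_t$ is already $\sF^M_t$-measurable. Taking the further conditional expectation onto $\sF^M_t$ (tower property) therefore gives the claimed law with mean $\bfxi^*_t$.

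The main obstacle is step (2). One must justify that the filter computed with the full future of $\bW$ revealed coincides with the $\sF^M$ filter, equivalently that knowledge of $\bW$ beyond $t$ carries no information about $\vt$ or about the $\hat\bB$-increments. I would first try to argue this directly: $\vt$ and $\bB$ are independent of $\bW$, and all coefficients of the state equation for $\bfxi^*$ are functionals of $\bW$ alone. The conditional Gaussian computation given $\bW=w$ is then classical. The remaining work is measurability and integrability bookkeeping, which uses $\int_0^T\operatorname{tr}((\bM^*)^{-1}\bfsigma^2(\bM^*)^{-1})\,\dif t=\operatorname{tr}(\bC)$.
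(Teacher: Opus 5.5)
Your argument is correct, and it takes a genuinely different route from the paper. It also repairs a weak point in the paper's proof. Write $\bA^*_s:=(\bM^*_s)^{-1}\bfsigma_s^2(\bM^*_s)^{-1}$.

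The paper stays in $\sF^M$. It applies It\^o's formula to $e^{i\bu^{\top}\bfxi^*_s}$, takes $\sF^M_t$-conditional expectations, and solves the resulting linear integral equation. To do so it pulls $\bu^{\top}\bA^*_s\bu$ out of $\E{\,\cdot\mid\sF^M_t}$ for $s>t$, although this factor is only $\sF^{\bW}_s$-measurable. The almost-sure identity $\int_0^T\bA^*_s\,\dif s=\bC$ of Proposition~\ref{prop:MDC-terminal-covariance} is used there only for integrability, not to make $\int_t^T\bA^*_s\,\dif s$ known at time $t$. Without that identity, the $\sF^M_t$-conditional law would in general only be a mixture of Gaussians.

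Your enlargement to $\mathcal G_t=\sF^M_t\vee\sF^{\bW}_T$ is exactly what makes the coefficient known at time $t$, and your tower-property step is precisely where the terminal identity enters. As a byproduct you obtain that $\hat{\bB}$ is independent of $\sF^{\bW}_T$.

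The price is the extra filtering step (2). That step is also what supplies the $\mathcal G$-martingale property needed for L\'evy's theorem in step (3). The paper's proof of Proposition~\ref{rdMUtjXwKx}, which you cite there, checks only continuity, adaptedness and quadratic variation.

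Step (2) can be done by hand:
\begin{itemize}
\item Conditionally on $\bW$, $\mathbf e_t:=\vt-\bp_0-\bfxi^*_t$ solves $\dif\mathbf e_t=-\bA^*_t(\bfSigma^*_t)^{-1}\mathbf e_t\,\dif t-(\bM^*_t)^{-1}\bfsigma_t\,\dif\bB_t$ with $\mathbf e_0=\vt-\bp_0$.
\item Let $\Phi(t,s)$ be the fundamental matrix of $\dot\Phi=-\bA^*(\bfSigma^*)^{-1}\Phi$. Then both ${\rm Var}(\mathbf e_t\mid\bW)$ and $\Phi(t,0)\bC$ equal $\bfSigma^*_t$.
\item Hence ${\rm Cov}(\mathbf e_t,\bfxi^*_s\mid\bW)=\Phi(t,0)\bC-\Phi(t,s)\bfSigma^*_s=\mathbf 0$ for $s\le t<T$.
\item Therefore $\mathbf e_t$ is conditionally independent of $(\bfxi^*_s)_{s\le t}$ given $\bW$, i.e.\ $\E{\vt\mid\mathcal G_t}=\bp_0+\bfxi^*_t$.
\end{itemize}

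If you only need the statement itself, a lighter fix avoids the enlargement. Granted the $\sF^M$-filter of Proposition~\ref{v0tMLvYb0E}, the process $\exp\bigl(i\bu^{\top}\bfxi^*_s+\tfrac12\bu^{\top}\bigl(\int_0^s\bA^*_r\,\dif r\bigr)\bu\bigr)$ is an $\sF^M$-local martingale bounded by $e^{\frac12\bu^{\top}\bC\bu}$, hence a true martingale. Because $\int_0^T\bA^*_r\,\dif r=\bC$ is deterministic, evaluating it at $T$ yields the claimed conditional characteristic function directly.
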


\begin{proof}
Fix $\bu\in\dR^n$. By It\^o's lemma, we have
$$
\dif e^{i\bu^{\top}\bfxi^*_t} = ie^{i\bu^{\top}\bfxi^*_t}\bu^{\top}\dif\bfxi^*_t - \frac{1}{2}e^{i\bu^{\top}\bfxi^*_t}\bu^{\top}\dif\,\langle\bfxi^*\rangle_t\bu,
$$
or, in light of \eqref{rdMUtjXwKx},
$$
\dif e^{i\bu^{\top}\bfxi^*_t} = ie^{i\bu^{\top}\bfxi^*_t}\bu^{\top}\sbr{\bM^*_t}^{-1}\bfsigma_t\dif\hat{\bB}_t - \frac{1}{2}e^{i\bu^{\top}\bfxi^*_t}\bu^{\top}\sbr{\bM^*_t}^{-1}\bfsigma_t^2\sbr{\bM^*_t}^{-1}\bu \dif t.
$$
We integrate from $t$ to $T$ to obtain
$$
e^{i\bu^{\top}\bfxi^*_T} - e^{i\bu^{\top}\bfxi^*_t} = \int_t^T ie^{i\bu^{\top}\bfxi^*_s}\bu^{\top}\sbr{\bM^*_s}^{-1}\bfsigma_s\dif\hat{\bB}_s - \frac{1}{2} \int_t^T e^{i\bu^{\top}\bfxi^*_s}\bu^{\top}\sbr{\bM^*_s}^{-1}\bfsigma_s^2\sbr{\bM^*_s}^{-1}\bu \dif s.
$$
Applying $\sF^M_t$ conditional expectation, the $\dif\hat{\bB}$ integral vanishes because
$$
\E{\int_0^T\bu^{\top}\sbr{\bM^*_s}^{-1}\bfsigma_s^2\sbr{\bM^*_s}^{-1}\bu \dif s} = \bu^{\top}\bC\bu < \infty,
$$
and we are left with
$$
\E{e^{i\bu^{\top}\bfxi^*_T}\middle|\sF^M_t} - e^{i\bu^{\top}\bfxi^*_t} = - \frac{1}{2} \int_t^T \E{e^{i\bu^{\top}\bfxi^*_s}\middle|\sF^M_t}\bu^{\top}\sbr{\bM^*_s}^{-1}\bfsigma_s^2\sbr{\bM^*_s}^{-1}\bu \dif s.
$$
This is an integral equation whose solution gives
\[
\E{e^{i\bu^{\top}\bfxi^*_T}\middle|\sF^M_t} = \exp\!\del{i\bu^{\top}\bfxi^*_t - \frac{1}{2}\bu^{\top}\del{\int_t^T\sbr{\bM^*_s}^{-1}\bfsigma_s^2\sbr{\bM^*_s}^{-1} \dif s}\bu}.\qedhere
\]
\end{proof}

\subsection{The wealth decomposition}
\begin{definition}
A trading strategy $\bX$ is said to be {\it inconspicuous} provided that it is absolutely continuous and that, for all $t \in [0,T]$,
$$
\E{\dod{\bX_t}{t}\middle| \sF^M_t} = 0.
$$
\end{definition}

\begin{proposition}
The strategy \eqref{ud3k6AlV7q} is inconspicuous.
\end{proposition}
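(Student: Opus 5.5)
The plan is to compute the $\sF^M_t$-conditional expectation of the instantaneous trading rate directly, using the filtering results already established. By \eqref{ud3k6AlV7q}, the strategy is absolutely continuous with rate
\[
\dod{\bX^*_t}{t}=\bfsigma_t^2(\bM_t^*)^{-1}(\bfSigma_t^*)^{-1}\del{\vt-\bp_0-\bfxi_t^*}.
\]
The matrix factor $\bfsigma_t^2(\bM_t^*)^{-1}(\bfSigma_t^*)^{-1}$ is $\sF^{\bW}_t$-measurable. Since $\sF^{\bW}_t\subset\sF^M_t$, it can be pulled out of the conditional expectation for each $t<T$; for these $t$, Proposition~\ref{prop:MDC-terminal-covariance} guarantees that $\bfSigma^*_t$ is invertible. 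Likewise $\bfxi^*_t$ is $\sF^M_t$-measurable, because it is a stochastic integral of the $\sF^{\bW}$-adapted integrand $(\bM^*)^{-1}$ against the observed order flow $\bY^*$.

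It therefore remains to show that $\E{\vt\mid\sF^M_t}=\bp_0+\bfxi^*_t$. This is exactly Proposition~\ref{v0tMLvYb0E}, equation \eqref{SWjcJYb4sK}, which is obtained from the Kalman--Bucy filter. Combining the two observations gives
\[
\E{\dod{\bX^*_t}{t}\middle|\sF^M_t}=\bfsigma_t^2(\bM_t^*)^{-1}(\bfSigma_t^*)^{-1}\del{\E{\vt\mid\sF^M_t}-\bp_0-\bfxi^*_t}=\mathbf 0
\]
for every $t<T$, and hence for a.e.\ $t$. The value at the single time $t=T$ is irrelevant because the strategy is only defined through a $\dif t$-integral.

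The one point needing care, and the step I expect to be the main obstacle, is justifying the pull-out property. This requires integrability of the rate, or at least the use of generalized conditional expectations. Here $\vt-\bp_0-\bfxi^*_t$ is conditionally Gaussian with covariance $\bfSigma^*_t$ by Proposition~\ref{v0tMLvYb0E}, so it is integrable. The prefactor is $\sF^M_t$-measurable but possibly unbounded. I would handle this with the generalized conditional expectation: the conditional expectation of the product of an $\sF^M_t$-measurable factor and an integrable variable is the factor times the conditional expectation of that variable, and this identity needs no integrability of the product. If a genuine $L^1$ statement is required, I would instead localize on events $\{\|\bfsigma_t^2(\bM^*_t)^{-1}(\bfSigma^*_t)^{-1}\|\le k\}\in\sF^M_t$ and let $k\to\infty$.
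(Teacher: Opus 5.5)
Your proposal is correct and matches the paper's proof: pull the $\sF^{\bW}_t$-measurable factor $\bfsigma_t^2(\bM_t^*)^{-1}(\bfSigma_t^*)^{-1}$ out of the conditional expectation, then use $\E{\vt\mid\sF^M_t}=\bp_0+\bfxi^*_t$ from Proposition~\ref{v0tMLvYb0E}. Restricting to $t<T$, where $\bfSigma^*_t$ is invertible, and justifying the pull-out step via generalized conditional expectation are details the paper's one-line computation leaves implicit, and you handle both correctly.
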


\begin{proof}
Clearly \eqref{ud3k6AlV7q} is absolutely continuous. Next, compute
\begin{align*}
\E{\dod{\bX^*_t}{t}\middle| \sF^M_t} &= \E{\bfsigma_t^{2}\sbr{\bM^*_t}^{-1}\bfSigma_t^{-1}\del{\vt - \bP_t^*}\middle| \sF^M_t} \\
&= \bfsigma_t^{2}\sbr{\bM^*_t}^{-1}\bfSigma_t^{-1}\del{\E{\vt\middle| \sF^M_t} - \bP_t^*} = 0.
\end{align*}
The quantity in the parenthesis vanishes due to \eqref{SWjcJYb4sK}.
\end{proof}

\begin{definition}
Define $V \colon \dR^+ \times \mathcal{S}^n_{++} \mapsto \dR$ the dynamic programming value of \eqref{OxpDCjjac4}  by
\begin{equation}
\label{Du9L08iE1l}
V(t,\bC) = \sup_{\dot{\bfSigma} \leq \bf{0}}\cbr{\E{\int_t^T \tr{\sqrt{-\bfsigma_s \dot{\bfSigma}_s \bfsigma_s}}\dif s\middle| \sF^{\bW}_t} \colon \bC + \int_t^T \dot{\bfSigma}_s \dif s = 0}.
\end{equation}
\end{definition}

\begin{lemma}
\label{kqqCZ0K8sT}
Suppose there exists a maximizer $(\dot{\bfSigma}^*_t)_{0\le t\le T}$ to the primal problem \eqref{OxpDCjjac4}, and denote by $(\bfSigma^*_t)_{0\le t\le T}$ its optimal trajectory. Then, $V(t,\bfSigma^*_t)$ admits the dynamics
\begin{equation}
\label{36ra03OLTA}
\dif V(t,\bfSigma^*_t) = -\tr{\bfsigma_t[\bM^*_t]^{-1}\bfsigma_t} \dif t + \bR(t,\bfSigma^*_t)^{\top}\dif \bW_t
\end{equation}
for some adapted square integrable random field $\del{\bR(t,\bA)}_{0 \leq t \leq T}$.
\end{lemma}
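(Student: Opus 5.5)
The plan is to use dynamic programming along the optimal path and then a martingale representation argument. Set
\[
\Phi_t:=V(t,\bfSigma^*_t)+\int_0^t\tr{\sqrt{-\bfsigma_s\dot{\bfSigma}^*_s\bfsigma_s}}\,\dif s .
\]
The dynamic programming principle for \eqref{Du9L08iE1l} gives $\Phi_t=\E{\Phi_T\mid\sF^{\bW}_t}$ along the optimal trajectory. Indeed, the restriction of $\dot{\bfSigma}^*$ to $[t,T]$ is optimal for the problem started at $(t,\bfSigma^*_t)$. If it were not, a better continuation could be pasted onto $\dot{\bfSigma}^*|_{[0,t]}$ by measurable selection and would strictly improve the time-$0$ value. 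Also $V(T,\cdot)$ vanishes on the terminal constraint set, so $\Phi_T=\int_0^T\tr{\sqrt{-\bfsigma_s\dot{\bfSigma}^*_s\bfsigma_s}}\,\dif s$.

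Consequently $\Phi$ is an $\sF^{\bW}$-martingale. It is square integrable because $\Phi_T\le\int_0^T\tr{\bfsigma_s(\bM^*_s)^{-1}\bfsigma_s}\,\dif s$, and under the standing bounds on $\bfsigma$ together with MDC this has a finite second moment. If not, I would work with an $L^1$ martingale and localize. By the Brownian martingale representation theorem,
\[
\Phi_t=\Phi_0+\int_0^t\bR_s^\top\,\dif\bW_s
\]
for a progressive, square-integrable $\bR$. I would write $\bR(s,\bfSigma^*_s)$ to indicate the dependence on the state through the random field $\bA\mapsto V(s,\bA)$.

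Next I compute the running reward. By \eqref{BUSu1MNQ9m}, $\sqrt{-\bfsigma_t\dot{\bfSigma}^*_t\bfsigma_t}=\bfsigma_t(\bM^*_t)^{-1}\bfsigma_t$. This can also be checked directly: with $\dot{\bfSigma}^*_t=-(\bM^*_t)^{-1}\bfsigma_t^2(\bM^*_t)^{-1}$, the matrix $\bfsigma_t(\bM^*_t)^{-1}\bfsigma_t$ is positive definite and squares to $-\bfsigma_t\dot{\bfSigma}^*_t\bfsigma_t$. Hence
\[
\dif V(t,\bfSigma^*_t)=\dif\Phi_t-\tr{\bfsigma_t(\bM^*_t)^{-1}\bfsigma_t}\,\dif t,
\]
which is \eqref{36ra03OLTA}. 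To connect with $\bM^*$ I would invoke Corollary~\ref{cor:strong-duality}, which identifies the primal maximizer with \eqref{Rwq3zFmuoG}. If the hypothesis only provides some maximizer, uniqueness follows from the strict concavity of $\tr{\sqrt{\cdot}}$ on the relevant cone together with the saddle-point property.

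The main obstacle is the first step: justifying the dynamic programming and martingale property rigorously. This requires time consistency of the conditional problem, measurability of $V(t,\cdot)$ as a random field, and the pasting argument under the conditional constraint $\bfSigma^*_t+\int_t^T\dot{\bfSigma}_s\,\dif s=0$. I would handle it by conditional duality. Apply Corollary~\ref{cor:strong-duality} on $[t,T]$ with initial covariance $\bfSigma^*_t$, using the restricted martingale $(\bM^*_s)_{s\ge t}$, whose optimality follows from a conditional version of Proposition~\ref{prop:MDC-terminal-covariance}. This yields the explicit formula
\[
V(t,\bfSigma^*_t)=\tfrac12\tr{\bfSigma^*_t\bM^*_t}+\tfrac12\E{\int_t^T\tr{\bfsigma_s^2(\bM^*_s)^{-1}}\,\dif s\,\Big|\,\sF^{\bW}_t}.
\]
From this formula the martingale property of $\Phi$ can also be checked directly. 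Applying It\^o to $\tr{\bfSigma^*_t\bM^*_t}$ and using $\dif\bfSigma^*_t=\dot{\bfSigma}^*_t\,\dif t$ gives the drift $\tr{\dot{\bfSigma}^*_t\bM^*_t}=-\tr{\bfsigma_t^2(\bM^*_t)^{-1}}$. The conditional-expectation term contributes $-\tfrac12\tr{\bfsigma_t^2(\bM^*_t)^{-1}}\,\dif t$ plus a martingale. Halving the first drift and summing gives total drift $-\tr{\bfsigma_t(\bM^*_t)^{-1}\bfsigma_t}\,\dif t$, as claimed.
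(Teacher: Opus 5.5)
Your proposal is correct and follows essentially the same route as the paper. Both add the running reward to $V(t,\bfSigma^*_t)$, identify the sum with the $\sF^{\bW}$-martingale $\E{\int_0^T \tr{\sqrt{-\bfsigma_s\dot{\bfSigma}^*_s\bfsigma_s}}\dif s \mid \sF^{\bW}_t}$, apply martingale representation, and rewrite the running reward as $\tr{\bfsigma_t(\bM^*_t)^{-1}\bfsigma_t}$ via \eqref{DzvEJ7XTQT}. The paper simply asserts the time-consistency identity $V(t,\bfSigma^*_t)=\E{\int_t^T\cdots\mid\sF^{\bW}_t}$. Your pasting and conditional-duality justification supplies details the paper leaves implicit; it works because $\bfSigma^*_t=\int_t^T(\bM^*_s)^{-1}\bfsigma_s^2(\bM^*_s)^{-1}\dif s$ makes $\bM^*|_{[t,T]}$ a multiplier for which the Lagrangian bound is tight. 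Your square-integrability remark likewise fills in a point the paper does not spell out.
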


\begin{proof}
If $(\bfSigma^*_t)_{0\le t\le T}$ denotes the optimal trajectory, then we have from \eqref{Du9L08iE1l} the identity
$$
V(t,\bfSigma^*_t) = \E{\int_t^T \tr{\sqrt{-\bfsigma_s \dot{\bfSigma}^*_s \bfsigma_s}}\dif s \middle| \sF^{\bW}_t}.
$$
We add to both sides the $\sF^{\bW}_t$-measurable quantity $\int_0^t \tr{\sqrt{-\bfsigma_s \dot{\bfSigma}^*_s \bfsigma_s}}\dif s$ and thus obtain
\begin{equation}
\label{VVo3OClH8F}
V(t,\bfSigma^*_t) + \int_0^t \tr{\sqrt{-\bfsigma_s \dot{\bfSigma}^*_s \bfsigma_s}}\dif s = \E{\int_0^T \tr{\sqrt{-\bfsigma_s \dot{\bfSigma}^*_s \bfsigma_s}}\dif s \middle| \sF^{\bW}_t}.
\end{equation}
The RHS is a martingale and thus admits the representation
$$
\E{\int_0^T \tr{\sqrt{-\bfsigma_s \dot{\bfSigma}^*_s \bfsigma_s}}\dif s \middle| \sF^{\bW}_t} = \int_0^t \bR(s,\bfSigma^*_s)^{\top}\dif \bW_s
$$ for some adapted square integrable $(\bR(t,\bA))_{0\le t\le T}$. The integral which appears on the RHS may be written instead as
$$
\int_0^t \tr{\sqrt{-\bfsigma_s \dot{\bfSigma}^*_s \bfsigma_s}}\dif s = \int_0^t\tr{\bfsigma_s[\bM^*_s]^{-1}\bfsigma_s} \dif s
$$
due to the relation \eqref{DzvEJ7XTQT}.
Thus, \eqref{36ra03OLTA} is just the identity \eqref{VVo3OClH8F}.
\end{proof}

\begin{theorem}
\label{thm:insider-value}
Suppose the data $(\bC,(\bfsigma_t)_{0\le t\le T})$ satisfy the martingale dual condition (Definition~\ref{def:MDC}), and let $(\bM^*_t)_{0\le t\le T}$ be the resulting $\sF^{\bW}$-martingale optimizer of \eqref{RUHFk3mIp0}. Let $(\dot{\bfSigma}^*_t)_{0\le t\le T}$ be defined as in \eqref{PvbNNXYL8q}. Define the random potentials $\varphi \colon \Omega \times \dR^+ \times \dR^{n} \mapsto \dR$ and $\psi \colon \Omega \times\dR^+ \times \dR^{n} \times \dR^{n} \mapsto \dR$ by
\begin{align}
\varphi(\omega, t, \bfxi) &= \frac{\del{\bfxi+\bp_0}^{\top}\bM^*_t(\omega)\del{\bfxi+\bp_0}}{2} + \frac{1}{2} V(t,\bfSigma^*_t) \\
\psi^{\vt}(\omega, t, \bp) &= \frac{\del{\vt-\bp}^{\top}\bM^*_t(\omega) \del{\vt-\bp}}{2} + \frac{1}{2}V(t, \bfSigma^*_{t}).
\end{align}
If the market maker sets prices by \eqref{CrLPP8jKQG}-\eqref{2vGs8ogSts}, then the informed trader's $\vt$ conditional expected profit is at most
\begin{equation}
\label{BPMSIqokAh}
\dE\!\sbr{\int_0^T\del{\vt - \bP_t}^{\top} \dif \bX_t -\langle\bX,\bP\rangle_T\middle| \sF^I_0} \leq \dE\!\sbr{ \psi^{\vt}(0, \bp_0) \middle| \sF^I_0}
\end{equation}
from time $0$, and in general at most
\begin{equation}
\label{FTemha9Y4r}
\dE\!\sbr{\int_t^T\del{\vt - \bP_s}^{\top} \dif \bX_s -\langle\bX,\bP\rangle_T + \langle\bX,\bP\rangle_t\middle| \sF^I_t} \leq \dE\!\sbr{ \psi^{\vt}(t, \bP_t) \middle| \sF^I_t}
\end{equation}
from time $t \in [0,T]$. 

Furthermore, these upper bounds are attained when $\bX_t$ is inconspicuous and ensures $\bfxi_T = \vt - \bp_0$ a.s.

\end{theorem}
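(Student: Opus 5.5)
The plan is a verification argument based on It\^o's formula. Fix an admissible $\bX$ and write $\bP_t=\bp_0+\bfxi_t$ with $\dif\bfxi_t=\bfLambda^*_t\,\dif\bY_t$ and $\bfLambda^*=(\bM^*)^{-1}$. Consider the process
\[
\Gamma_t:=\int_0^t(\vt-\bP_s)^{\top}\dif\bX_s-\langle\bX,\bP\rangle_t+\psi^{\vt}(t,\bP_t).
\]
The goal is to show that $\Gamma$ is an $\sF^I$-local supermartingale whose drift is exactly $-\tfrac12\,\dif\tr{\bfLambda^*_t\,\langle\bX\rangle_t}\le0$. Two facts will be used: the terminal value satisfies $\psi^{\vt}(T,\bP_T)\ge0$, and $\Gamma_0=\psi^{\vt}(0,\bp_0)$. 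Taking conditional expectations then yields \eqref{BPMSIqokAh}, and the same computation started at time $t$ yields \eqref{FTemha9Y4r}.

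The It\^o computation proceeds in the following order.

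\emph{First}, the second summand of $\psi$ is handled by Lemma~\ref{kqqCZ0K8sT}: $\tfrac12V(t,\bfSigma^*_t)$ has drift $-\tfrac12\tr{\bfsigma_t\bfLambda^*_t\bfsigma_t}\,\dif t$ plus a $\dif\bW$-martingale term.

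\emph{Second}, for the quadratic summand $\tfrac12(\vt-\bP_t)^{\top}\bM^*_t(\vt-\bP_t)$, use $\bM^*_t\,\dif\bP_t=\bM^*_t\bfLambda^*_t\,\dif\bY_t=\dif\bX_t+\bfsigma_t\,\dif\bB_t$. The first-order term is therefore $-(\vt-\bP_t)^{\top}\dif\bX_t$ plus a martingale, which cancels the profit increment. The second-order term is
\[
\tfrac12\tr{\bM^*_t\,\dif\langle\bP\rangle_t}=\tfrac12\tr{\bfLambda^*_t\,\dif\langle\bY\rangle_t}.
\]

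\emph{Third}, expand $\dif\langle\bY\rangle=\dif\langle\bX\rangle+2\,\dif\langle\bX,\bZ\rangle+\bfsigma^2\,\dif t$ and $\dif\langle\bX,\bP\rangle=\tr{\bfLambda^*\,\dif\langle\bX,\bY\rangle}$ (interpreted through symmetrization). The $\bfsigma^2$ term cancels against the drift of $V$, the $\langle\bX,\bZ\rangle$ terms cancel, and what remains is $-\tfrac12\tr{\bfLambda^*_t\,\dif\langle\bX\rangle_t}$. This is nonpositive because $\bfLambda^*_t\in\cS^n_{++}$.

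The step I expect to be the main obstacle is the cross-variation between $\bM^*$, which is a $\bW$-martingale, and $\bP$. Since $\dif\bP=\bfLambda^*\,\dif\bX+\bfLambda^*\bfsigma\,\dif\bB$, a contribution $(\vt-\bP)^{\top}\dif\langle\bM^*,\bP\rangle$ arises only if $\bX$ has a martingale part correlated with $\bW$. My first attempt would be to combine this term with the $\dif\bM^*$ term and the quadratic variation $\langle\bX\rangle$ into a single nonpositive quadratic form, by completing the square in the $\bW$-component of $\bX$. Failing that, I would first reduce to strategies of finite variation, using that the term $-\tfrac12\tr{\bfLambda^*\,\dif\langle\bX\rangle}$ penalizes any martingale part, and then argue by approximation. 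For the drift $-\dif\langle\bX,\bP\rangle$ I would likewise use $\bM^*\bfLambda^*=\bI$ to rewrite everything in terms of $\bY$, which keeps the algebra symmetric.

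Integrability comes next. The local-martingale parts are removed by localization. Admissibility (a lower bound of $\Pi$ by an integrable variable), the nonnegativity of $\psi$, and Fatou's lemma give the inequality in the limit. Square-integrability of $\bfsigma$, the bound $\E{\int_0^T\tr{\bfLambda^*\bfsigma^2\bfLambda^*}\,\dif t}=\tr{\bC}$ from Proposition~\ref{prop:MDC-terminal-covariance}, and the square-integrable field $\bR$ of Lemma~\ref{kqqCZ0K8sT} control the stochastic integrals. Finally, $\psi^{\vt}(T,\bP_T)=\tfrac12(\vt-\bP_T)^{\top}\bM^*_T(\vt-\bP_T)\ge0$, because $V(T,\bfSigma^*_T)=V(T,{\bf0})=0$ by $\bfSigma^*_T={\bf0}$.

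For attainment, equality requires three things. First, $\langle\bX\rangle\equiv0$, which holds for absolutely continuous (in particular inconspicuous) strategies. Second, the local martingales must be true martingales; here inconspicuousness together with Propositions~\ref{rdMUtjXwKx}--\ref{cnTt7K54xR} gives the Gaussian control of $\bfxi$ needed for uniform integrability. Third, the terminal term must vanish, i.e.\ $\bP_T=\vt$, equivalently $\bfxi_T=\vt-\bp_0$. Under these three conditions every inequality above becomes an equality.
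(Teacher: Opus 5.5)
\textbf{This is a genuine gap, and neither of your remedies can close it.} Your verification follows the paper's route. Note that $\psi^{\vt}(t,\bP_t)=\tfrac12\vt^{\top}\bM^*_t\vt-\bigl(\vt^{\top}\bM^*_t\bP_t-\varphi(t,\bfxi_t)\bigr)$, so your $\Gamma$ differs from the paper's process only by the martingale $\tfrac12\vt^{\top}\bM^*_t\vt$. Your cancellations (profit increment, the $\langle\bX,\bZ\rangle$ terms, $\bfsigma^2$ against the drift of $V$) are exactly the paper's.

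The gap is the cross-variation step you flagged. Write $\dif\bM^*_t=\sum_k\bfalpha^k_t\,\dif W^k_t$, and let $\bfeta^k$ be the $\dif W^k$-loading of the martingale part of $\bX$. The $\bfeta$-dependent part of the drift of $\Gamma$ is
\begin{align*}
&-\sum_k\Bigl[\tfrac12\bfeta^{k\top}\bfLambda^*\bfeta^k+(\vt-\bP)^{\top}\bfalpha^k\bfLambda^*\bfeta^k\Bigr]\\
&\quad=-\tfrac12\sum_k\bigl(\bfeta^k+\bfalpha^k(\vt-\bP)\bigr)^{\top}\bfLambda^*\bigl(\bfeta^k+\bfalpha^k(\vt-\bP)\bigr)+\tfrac12\sum_k(\vt-\bP)^{\top}\bfalpha^k\bfLambda^*\bfalpha^k(\vt-\bP).
\end{align*}

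\emph{Completing the square fails.} It leaves a positive semidefinite remainder, not a nonpositive quadratic form.

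\emph{Reducing to finite-variation strategies fails.} The penalty $-\tfrac12\tr{\bfLambda^*_t\,\dif\langle\bX\rangle_t}$ is quadratic in $\bfeta$, while the cross term is linear. Small loadings in the right direction therefore have positive net drift.

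\emph{The bound itself fails for such strategies.} Take $\bfeta^k=-\bfalpha^k(\vt-\bP)$, truncated, on $[0,T/2]$. Then switch to the bridge drift of \eqref{ud3k6AlV7q}, which forces $\bP_T=\vt$ from any state because $\bfSigma^*_T={\bf 0}$. Modulo routine integrability, this gives an admissible strategy whose expected profit strictly exceeds $\psi^{\vt}(0,\bp_0)$ whenever $\bM^*$ is non-constant. So for strategies with a $\dif\bW$ component the bound is false, not merely unproved.

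The paper's proof does not overcome this either. It sets $\dif\bM^*_t\,\dif\bfxi_t={\bf 0}$ on the grounds that $\bM^*$ is driven by $\bW$ and $\bfxi$ by $\bB$. That holds only if $\langle X^i,W^k\rangle\equiv0$ for all $i,k$, for example:
\begin{itemize}
\item finite-variation strategies, in particular inconspicuous ones;
\item strategies whose martingale part is driven by $\bB$ alone.
\end{itemize}
The correct repair is to restrict the competing strategies to this class, where the cross term vanishes identically; constant $\bM^*$, as in the Kyle, Back--Pedersen and constant-volatility examples, is the trivial case where it vanishes regardless. Within that class your argument is complete and coincides with the paper's.

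Your integrability step is also sounder than the paper's. You use $\Gamma\ge -K^{\bX}$ (since $\psi^{\vt}\ge0$), so the local supermartingale $\Gamma$ is a true supermartingale. The paper instead asserts that the stochastic integrals are true martingales for every admissible $\bX$, citing moment bounds it only proves for $\bX^*$.
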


\begin{proof}
Let $\bY_t = \bX_t + \bZ_t$ denote the order flow from any admissible strategy $\bX_t$. By It\^o's lemma, we have
\begin{equation}
\label{neiI6IKpfa}
\begin{split}
\dif \varphi(t,\bfxi_t) &= \del{\bfxi_t + \bp_0}^{\top}\bM^*_t \dif \bfxi_t + \frac{1}{2}\del{\bfxi_t + \bp_0}^{\top}\dif\bM^*_t\del{\bfxi_t + \bp_0} \\ & \qquad + \frac{1}{2}\dif \bfxi_t^{\top} \bM^*_t\dif \bfxi_t + \del{\bfxi_t + \bp_0}^{\top}\dif\bM^*_t\dif \bfxi_t + \frac{1}{2}\dif V(t,\bfSigma^*_t).
\end{split}
\end{equation}
Observe that $\dif\bM^*_t\dif \bfxi_t = {\bf 0}$ because $\dif\bM^*_t$ and $\dif \bfxi_t$ are comprised of independent Brownian differentials ($\dif W_t$ for $\bM^*$, $\dif B_t$ for $\bfxi$). In light of \eqref{CrLPP8jKQG}-\eqref{2vGs8ogSts}, we may replace $\bfxi_t + \bp_0$ by the price process $\bP_t$. We have $\bM^*_t\dif\bfxi_t = \dif\bY_t$ due to \eqref{SWjcJYb4sK}. These considerations taken into account, \eqref{neiI6IKpfa} updates to
\begin{equation}
\label{QZr6y2kFvU}
\dif \varphi(t,\bfxi_t) = \bP_t^\top \dif \bY_t + \frac{1}{2}\bP_t^\top\dif\bM^*_t\bP_t + \frac{1}{2}\dif\bY_t^\top \bfLambda^*_t\dif\bY_t + \frac{1}{2}\dif V(t,\bfSigma^*_t).
\end{equation}
The third RHS term is
\begin{align*}
\dif\bY_t^\top \bfLambda^*_t\dif\bY_t &= \del{\dif \bX_t + \dif \bZ_t}^\top \bfLambda^*_t\del{\dif \bX_t + \dif \bZ_t} \\
&= \dif \bX_t^\top\bfLambda^*_t\dif \bX_t + 2 \dif \bX_t^\top\bfLambda^*_t\dif \bZ_t + \dif \bZ_t^\top\bfLambda^*_t\dif \bZ_t \\
&= \dif \bX_t^\top\bfLambda^*_t\dif \bX_t + 2 \dif \bX_t^\top\bfLambda^*_t \del{\dif\bY_t - \dif\bX_t} + \dif \bZ_t^\top\bfLambda^*_t\dif \bZ_t \\
&= -\dif \bX_t^\top\bfLambda^*_t\dif \bX_t + 2 \dif \bX_t^\top\bfLambda^*_t \dif\bY_t + \dif \bZ_t^\top\bfLambda^*_t\dif \bZ_t \\
&= -\dif \bX_t^\top\bfLambda^*_t\dif \bX_t + 2 \dif\,\langle\bX,\bP\rangle_t  + \tr{\bfsigma_t \bfLambda^*_t \bfsigma_t}\dif t,
\end{align*}
so again \eqref{QZr6y2kFvU} becomes, with the help of Lemma~\ref{kqqCZ0K8sT},
\begin{equation}
\label{bjfFszgqRk}
\dif \varphi(t,\bfxi_t) = \bP_t^\top \dif \bY_t + \frac{1}{2}\bP_t^\top\dif\bM^*_t\bP_t - \frac{1}{2}\dif \bX_t^{\top}\bfLambda^*_t\dif \bX_t + \dif\,\langle\bX,\bP\rangle_t + \frac{1}{2}\bR_t^{\top}\dif \bW_t.
\end{equation}
Similar calculations with $\vt^{\top}\bM_t\del{\bfxi_t+\bp_0}$ produces
\begin{equation}
\label{2YgjW0Cluj}
\dif\del{\vt^{\top}\bM_t\del{\bfxi_t+\bp_0}} = \vt^{\top}\dif \bM^*_t\bP_t + \vt^{\top}\dif\bY_t.
\end{equation}
Subtracting \eqref{bjfFszgqRk} from \eqref{2YgjW0Cluj}, we find
\begin{equation}
\label{3VluS9QDeK}
\begin{split}
\del{\vt - \bP_t}^{\top} \dif \bY_t  - \dif \, \langle \bX,\bP \rangle_t &= \dif\del{\vt^{\top}\bM_t\del{\bfxi_t+\bp_0} - \varphi(t,\bfxi_t)} + \frac{1}{2}\bR_t^{\top}\dif \bW_t\\ & \qquad + \del{\frac{1}{2}\bP_t - \vt}^{\top}\dif \bM^*_t \bP_t^* - \frac{1}{2}\dif\bX_t^{\top}\bfLambda^*_t\dif\bX_t
\end{split}
\end{equation}
If we integrate \eqref{3VluS9QDeK} over $t \in [0,T]$ and apply $\sF^I_0$ conditional expectation, then $\dif \bY_t = \dif \bX_t + \dif \bZ_t$ reduces to $\dif \bX_t$ on the left-hand side. On the right-hand side, both stochastic integrals have vanishing $\sF^I_0$-conditional expectation. The term $\frac{1}{2}\int_0^T\bR_t^{\top}\dif \bW_t$ is a true martingale because $\bR$ is square integrable (Lemma~\ref{kqqCZ0K8sT}). The term $\int_0^T\del{\frac{1}{2}\bP_t - \vt}^{\top}\dif \bM^*_t\,\bP^*_t$ is a true martingale because, under MDC, $\bM^*$ is a true $\sF^{\bW}$-martingale and its integrand is square integrable, using $\vt\in L^2$, the square integrability of $\bP$ and $\bP^*$ established in the proof of Theorem~\ref{I3YrvmnwLc}, and the bounds on $\bfsigma$. Had $\bM^*$ been only a local martingale, this integral could fail to have zero mean and the identity below would weaken to an inequality. We are left with 
\begin{equation}
\begin{split}
&\E{\int_0^T\del{\vt - \bP_t}^{\top} \dif \bX_t - \langle \bX,\bP \rangle_T\middle|\sF^I_0} = \\
&\quad \E{\vt^{\top}\bM_T\del{\bfxi_T+\bp_0} - \varphi(T,\bfxi_T)   - \vt^{\top}\bM_0\bp_0 + \varphi(0,{\bf 0})   - \int_0^T \frac{1}{2}\dif\bX_t^{\top}\bfLambda^*_t\dif\bX_t\middle| \sF^I_0}.
\end{split}
\end{equation}
Since $\varphi(T,\bfxi_T)$ is quadratic in $\bfxi_T$, the RHS terms which involve $\bfxi_T$ achieve their maximum at $\bfxi_T = \vt - \bp_0$. Furthermore, $-\int_0^T\dif\bX_t^{\top}\bfLambda^*_t\dif\bX_t$ is at most $0$, and this occurs when $\bX_t$ is absolutely continuous. Altogether, we have the upper bound
\begin{equation}
\begin{split}
& \dE\!\sbr{\int_0^T\del{\vt - \bP_t}^{\top} \dif \bX_t - \langle \bX,\bP \rangle_T\middle|\sF^I_0} \\
& \qquad \leq \dE\Bigg[\frac{\del{\vt-\bp_0}^{\top}\bM^*_0\del{\vt-\bp_0}}{2} + \frac{1}{2}V(0,\bC)\Bigg| \sF^I_0\Bigg]
\end{split}
\end{equation}
which is \eqref{BPMSIqokAh}. To obtain \eqref{FTemha9Y4r}, integrate \eqref{3VluS9QDeK} over $[t,T]$ instead and apply $\sF^I_t$-conditional expectations.
\end{proof}

\subsection{Proof of Theorem~\ref{I3YrvmnwLc}}
\label{GmYBHbBEpZ}
\begin{proof}
First, fix the trading strategy \eqref{ud3k6AlV7q}. By Proposition~\ref{v0tMLvYb0E}, \eqref{CrLPP8jKQG} and \eqref{2vGs8ogSts} are rational in the sense of Definition~\ref{def:rational}. Furthermore, notice that $\bp_0 + \bfxi = \nabla_{\bfxi}\del{\bp_0\top\bfxi + \frac{1}{2}\bfxi^\top\bfxi}$, so \eqref{2vGs8ogSts} coincides with the gradient of a convex function. To check the integrability assumptions of Definition~\ref{def:pricingrule}, note that
\begin{align*}
\E{\norm{\bfH\!\del{T,\bfxi^*_T}}^2\middle| \sF^M_0} &= \E{\norm{\bp_0 + \bfxi^*_T}^2\middle| \sF^M_0}  \\
&= \norm{\bp_0}^2 + \tr{\var{\bfxi^*_T\middle|\sF^M_0}} = \norm{\bp_0}^2 + \tr{\bC} < \infty.
\end{align*}
The last equality follows from Proposition~\ref{cnTt7K54xR}.

Similarly,
\begin{align}
\E{\int_0^T \norm{\bfH\!\del{t,\bfxi^*_t}}^2 \dif t \,\middle| \sF^M_0} &= 
\E{\int_0^T \norm{\bp_0 + \bfxi_t^*}^2 \dif t \,\middle| \sF^M_0}. \label{YTsBhTHpDV}
\end{align}
Appealing again to Proposition~\ref{cnTt7K54xR} and Jensen's inequality,
$$
\norm{\bp_0 + \bfxi_t^*}^2 \leq \E{\,\norm{\bp_0 + \bfxi_T^*}^2 \middle| \sF^M_t}
$$
and so \eqref{YTsBhTHpDV} is bounded above by
\begin{align*}
\E{\int_0^T \E{\,\norm{\bp_0 + \bfxi_T^*}^2 \middle| \sF^M_t} \dif t \middle| \sF^M_0} &= T \E{\,\norm{\bp_0 + \bfxi_T^*}^2  \middle| \sF^M_0} \\
&=T\del{\norm{\bp_0}^2 + \tr{\bC}} < \infty
\end{align*}
Thus, we have established that for the fixed strategy \eqref{ud3k6AlV7q}, the pricing rule \eqref{CrLPP8jKQG}-\eqref{2vGs8ogSts} satisfies the requirements of Definition~\ref{def:pricingrule} and is rational in the sense of Definition~\ref{def:rational}.

It remains to verify optimality of the proposed strategy. The strategy \eqref{ud3k6AlV7q} is inconspicuous by the preceding proposition. Moreover, Proposition~\ref{v0tMLvYb0E} gives
\[
\bP_t^*=\E{\vt\middle|\sF_t^M},
\qquad
{\rm Var}\!\del{\vt\middle|\sF_t^M}=\bfSigma_t^*.
\]
Since \(\bfSigma_T^*={\bf 0}\), it follows that \(\bP_T^*=\vt\) a.s. By \eqref{SWjcJYb4sK}, this is equivalently \(\bfxi_T^*=\vt-\bp_0\) a.s. Therefore Theorem~\ref{thm:insider-value} applies and shows that \(\bX^*\) attains the upper bound \eqref{BPMSIqokAh} among all \(\bfP^*\)-admissible strategies. Hence \(\bX^*\) is optimal in the sense of Definition~\ref{def:optimal}. Together with rationality, this proves that \((\bfP^*,\bX^*)\) is an equilibrium in the sense of Definition~\ref{def:equilibrium}. The remaining assertions of the theorem follow from Propositions~\ref{v0tMLvYb0E}, \ref{cnTt7K54xR}, and the preceding proposition.
\end{proof}

\section{Verification details}
\label{app:verification-proofs}

\subsection{Back--Cocquemas--Ekren--Lioui constant-volatility computation}
\begin{proof}
Let $\bfSigma^*$ be as in \eqref{ui5XZD4G8F}. Fix a perturbation direction $\bfeta \in C^1\del{[0,T], \cS^n_{++}}$ so that the perturbed optimum $\bfSigma^* + \varepsilon \bfeta$ is admissible for all $\varepsilon$ in some neighborhood of $0$. In particular, this means that
\begin{equation}
\label{U5lWDbQAc8}
\int_0^T \dot{\bfeta}(t) \dif t = {\bf 0}.
\end{equation}
Define
$$
J(\bfSigma) = \int_0^T \tr{\sqrt{-\bfsigma \dot\bfSigma(t) \bfsigma}} \dif t.
$$
We first show that $\bfSigma^*$ satisfies the first order condition
\begin{equation}
\label{JxQ3NZnX26}
0 = \lim_{\varepsilon\downarrow 0}\frac{J(\bfSigma^* + \varepsilon \bfeta) - J(\bfSigma^*)}{\varepsilon}.
\end{equation}
Due to \cite[theorem 1.1]{moralniclas2018}, we have the Taylor expansion
\begin{equation}
\begin{split}
\sqrt{-\bfsigma \del{\dot\bfSigma^*(t) + \varepsilon \dot\bfeta(t)}\bfsigma}
=& \sqrt{-\bfsigma \dot\bfSigma^*(t)\bfsigma} \\
&- \varepsilon \int_0^{\infty}
e^{-k\sqrt{-\bfsigma \dot\bfSigma^*(t)\bfsigma}}
\bfsigma\dot\bfeta(t)\bfsigma
e^{-k\sqrt{-\bfsigma \dot\bfSigma^*(t)\bfsigma}}
\dif k
+ O(\varepsilon^2).
\end{split}
\end{equation}
We can write
\[
\lim_{\varepsilon\downarrow 0}\frac{J(\bfSigma^* + \varepsilon \bfeta) - J(\bfSigma^*)}{\varepsilon}
=
-\int_0^T
\tr{\int_0^{\infty}
e^{-k\sqrt{-\bfsigma \dot\bfSigma^*(t)\bfsigma}}
\bfsigma\dot\bfeta(t)\bfsigma
e^{-k\sqrt{-\bfsigma \dot\bfSigma^*(t)\bfsigma}}
\dif k}
\dif t.
\]
Permuting cyclically within the trace, the RHS integral becomes
\begin{align*}
\int_0^T\tr{\del{\int_0^{\infty}\bfsigma e^{-2k\sqrt{-\bfsigma \dot\bfSigma^*(t)\bfsigma}}\bfsigma \dif k} \dot\bfeta}\dif t
 &= \int_0^T\tr{\frac{1}{2}\del{\bfsigma\del{-\bfsigma\dot\bfSigma^*_t\bfsigma}^{-1/2}\bfsigma} \dot\bfeta}\dif t
\end{align*}
Substituting in \eqref{ui5XZD4G8F}, we get
\begin{equation*}
 \int_0^T\tr{\frac{1}{2T}\del{\bfsigma\del{\bfsigma\bC\bfsigma}^{-1/2}\bfsigma} \dot\bfeta}\dif t = \tr{\frac{1}{2T}\del{\bfsigma\del{\bfsigma\bC\bfsigma}^{-1/2}\bfsigma}\int_0^T\dot\bfeta\dif t} = 0
\end{equation*}
where the last equality is due to \eqref{U5lWDbQAc8}. This proves \eqref{JxQ3NZnX26}. To conclude the proof, observe that $\bfSigma \mapsto J(\bfSigma)$ is concave.
\end{proof}

\subsection{Proof of the common-eigenbasis proposition}
\begin{proof}
We first restrict our search to those $(\bM_t)_{0\le t\le T}$ which are diagonalizable along $\bV$. If $(\bM_t)_{0\le t\le T}$ is one such choice, then
\begin{align*}
\frac{1}{2}\tr{\bC \bM_0} + \frac{1}{2}\E{\int_0^T \tr{\bfsigma^2_t \bM_t^{-1}}\dif t} &= \sum_{i=1}^{n}\del{\frac{1}{2}\Sigma_0^i M^i_0 + \frac{1}{2}\E{\int_0^T \del{\sigma^i_t}^2 \del{M^i_t}^{-1}\dif t}} \\
&\geq \sum_{i=1}^{n}\del{\frac{1}{2}\Sigma_0^i \bar{M}^i_0 + \frac{1}{2}\E{\int_0^T \del{\sigma^i_t}^2 \del{\bar{M}^i_t}^{-1}\dif t}}.
\end{align*}
This is a lower bound for the dual objective, and this lower bound is attained when $\bM_t$ coincides with \eqref{9mRvQhaTPa}. Thus \eqref{9mRvQhaTPa} is the minimizer among all $(\bM_t)_{0\le t\le T}$ within this class.

Next, we show that this class contains the optimum among all admissible $(\bM_t)_{0\le t\le T}$. We turn to the primal formulation \eqref{OxpDCjjac4}. Fix $t \in [0,T]$ and $-\dot{\bfSigma}_t \in \cS^n_{++}$ and assume that $-\dot{\bfSigma}_t$ has sorted eigenvalues $-\dot{\Sigma}^1_t \geq \cdots \geq -\dot{\Sigma}^n_t \geq 0$. If the eigenvalues of $\bfsigma_t$ are also sorted in the order $\sigma^1_t \geq \cdots \geq \sigma^n_t \geq 0$, then we have the upper bound
$$
\tr{\sqrt{-\bfsigma_t \dot{\bfSigma}_t \bfsigma_t}} \leq \sum_{i=1}^{n}\sqrt{-\del{\sigma^i_t}^2\dot{\Sigma}^i_t},
$$
and this upper bound is attained when $\dot{\bfSigma}_t$ has the same eigenbasis as $\bfsigma_t$. This reasoning is true for all $t \in [0,T]$, and we see that we can maximize the integrand \eqref{OxpDCjjac4} $t$-pointwise if we choose $\dot{\bfSigma}_t$ sharing a common eigenbasis with $\bfsigma_t$ at each $t$. But by \eqref{PTYXvuAbrQ}, this eigenbasis is constant in time. Thus the optimal $\dot{\bfSigma}_t$ always has eigenvectors $\bV$. By \eqref{BUSu1MNQ9m}, this means that the optimal $\bM_t$ does too.
\end{proof}

\phantomsection
\small
\addcontentsline{toc}{section}{References}
\bibliographystyle{alpha}
\bibliography{ref}
\end{document}